\newtheorem{theorem}{Theorem}
\newtheorem{corollary}{Corollary}
\newtheorem{definition}{Definition}
\newtheorem{remark}{Remark}
\newtheorem{example}{Example}
\newtheorem{proposition}{Proposition}
\newtheorem{lemma}{Lemma}[]
\def\blfootnote{\xdef\@thefnmark{}\@footnotetext}
\newcommand{\delf}[1]{\ensuremath{\mathds{1}}}
\newcommand{\indic}[1]{\ensuremath{\mathds{1}}}
\newcommand{\abs}[1]{\ensuremath{\left|#1\right|}}   
\newcommand{\eqdef}{\ensuremath{\triangleq}}   
\newcommand{\sbra}[2]{\ensuremath{[{#1}{\,:\,}{#2}]}}%
\newcounter{mytempeqcounter}
\newcommand{\calA}{\mathcal{A}}
\newcommand{\calB}{\mathcal{B}}
\newcommand{\calD}{\mathcal{D}}
\newcommand{\calE}{\mathcal{E}}
\newcommand{\bbE}{\mathbb{E}}
\newcommand{\calF}{\mathcal{F}}
\newcommand{\bbF}{\mathbb{F}}
\newcommand{\calH}{\mathcal{H}}
\newcommand{\calI}{\mathcal{I}}
\newcommand{\calJ}{\mathcal{J}}
\newcommand{\calM}{\mathcal{M}}
\newcommand{\bbN}{\mathbb{N}}
\newcommand{\bbP}{\mathbb{P}}
\newcommand{\calQ}{\mathcal{Q}}
\newcommand{\bbR}{\mathbb{R}}
\newcommand{\calT}{\mathcal{T}}
\newcommand{\calX}{\mathcal{X}}
\newcommand{\calY}{\mathcal{Y}}
\DeclareMathOperator{\C}{C}
\newcommand{\Squad}{\hspace{0.5em}}
\newcommand{\ts}{\textsuperscript}
\acrodef{ACDIS}[ACDIS]{Adaptive Communication Decision and Information Systems}
\acrodef{AEP}{Asymptotic Equipartition Property}
\acrodef{AoA}{Angle of Arrival}
\acrodef{AWGN}{Additive White Gaussian Noise}
\acrodef{AVC}[AVC]{Arbitrarily Varying Channel}
\acrodef{PIR-PNSI}{Private Information Retrieval with Private Noisy Side Information}
\acrodef{BER}{Bit-Error-Rate}
\acrodef{BEC}{Binary Erasure Channel}
\acrodef{BSC}{Binary Symmetric Channel}
\acrodef{BPSK}{Binary Phase-Shift Keying}
\acrodef{BICM}[BICM]{Bit-Interleaved Coded-Modulation}
\acrodef{CDF}[CDF]{Cumulative Distribution Function}
\acrodef{CGF}[CGF]{Cumulant Generating Function}
\acrodef{CLT}[CLT]{Central Limit Theorem}
\acrodef{CSI}[CSI]{Channel State Information}
\acrodef{DMC}[DMC]{Discrete Memoryless Channel}
\acrodef{DMS}[DMS]{Discrete Memoryless Source}
\acrodef{ERM}[ERM]{Empirical Risk Minimization}
\acrodef{FER}[FER]{Frame Error Rate}
\acrodef{ICA}[ICA]{Independent Component Analysis}
\acrodef{iid}[i.i.d.]{independent and identically distributed}
\acrodef{IoT}[IoT]{Internet of Things}
\acrodef{KKT}[KKT]{Karush-Kuhn Tucker}
\acrodef{LASSO}[LASSO]{Least Absolute Shrinkage and Selection Operator}
\acrodef{LPD}[LPD]{Low Probability of Detection}
\acrodef{LDPC}[LDPC]{Low-Density Parity-Check}
\acrodef{LLMS}[LLMS]{Linear Least Mean Square}
\acrodef{LMS}[LMS]{Least Mean Square}
\acrodef{MAC}[MAC]{multiple-access channel}
\acrodef{MGF}[MGF]{Moment Generating Function}
\acrodef{MLC}[MLC]{Multi-Level Coding}
\acrodef{MLE}[MLE]{Maximum Likelihood Estimate}
\acrodef{MIMO}[MIMO]{Multiple-Input Multiple-Output}
\acrodef{MISO}{Multiple-Input Single-Output}
\acrodef{MSD}[MSD]{Multi-Stage Decoding}
\acrodef{MMSE}[MMSE]{Minimum Mean-Square Error}
\acrodef{PAC}[PAC]{Probably Approximately Correct}
\acrodef{PCA}[PCA]{Principal Component Analysis}
\acrodef{PDF}[PDF]{Probability Density Function}
\acrodef{PMF}[PMF]{Probability Mass Function}
\acrodef{PPM}[PPM]{Pulse Position Modulation}
\acrodef{PSD}{Power Spectral Density}
\acrodef{PSK}{Phase Shift Keying}
\acrodef{QKD}{Quantum Key Distribution}
\acrodef{ROC}{Receiver Operating Characteristic}
\acrodef{CVQKD}{Continuous-Variable \ac{QKD}}
\acrodef{QPSK}{Quadrature Phase-Shift Keying}
\acrodef{RV}{random variable}
\acrodef{SIMO}{Single-Input Multiple-Output}
\acrodef{SNR}{Signal-to-Noise Ratio}
\acrodef{SVM}[SVM]{Support Vector Machine}
\acrodef{TPCP}{Trace-Preserving Completely-Positive}
\acrodef{wrt}[w.r.t.]{with respect to}
\acrodef{WSS}{Wide Sense Stationary}
\acrodef{RHS}{Right Hand Side}
\acrodef{LHS}{Left Hand Side}
\acrodef{PIR}{Private Information Retrieval}
\acrodef{MDS}{Maximum Distance Separable}
\begin{document}

\title{Private Information Retrieval with Private Noisy Side Information}

\author{
\IEEEauthorblockN{Hassan ZivariFard and R\'{e}mi A. Chou}\\
\thanks{H.~ZivariFard is with the Department of Electrical Engineering, Columbia University, New York, NY 10027. R.~Chou is with the Department of Computer Science and Engineering, The University of Texas at Arlington, Arlington, TX 76019. This work is supported in part by NSF grant CCF-2047913. E-mails: hz2863@columbia.edu and remi.chou@uta.edu. Part of this work was presented at the 2023 IEEE International Symposium on Information Theory \cite{ISIT23_PIR}.}
}
\maketitle
\date{}

\begin{abstract}
\label{sec:Abstract}
Consider \ac{PIR}, where a client wants to retrieve one file out of $K$ files that are replicated in $N$ different servers and the client selection must remain private when up to $T$ servers may collude. Additionally, suppose that the client has noisy side information about each of the $K$ files, and the side information about a specific file is obtained by passing this file through one of $D$ possible discrete memoryless test channels, where $D\le K$. While the statistics of the test channels are known by the client and by all the servers, the specific mapping $\boldsymbol{\calM}$ between the files and the test channels is unknown to the servers. We study this problem under two different privacy metrics. Under the first privacy metric, the client wants to preserve the privacy of its desired file selection and the mapping $\boldsymbol{\calM}$. Under the second privacy metric, the client wants to preserve the privacy of its desired file and the mapping $\boldsymbol{\calM}$ but is willing to reveal the index of the test channel that is associated to its desired file. For both of these two privacy metrics, we derive the optimal normalized download cost. Our problem setup generalizes \ac{PIR} with colluding servers, \ac{PIR} with private noiseless side information, and \ac{PIR} with private side information under storage constraints. 
\end{abstract}

\section{Introduction}
\label{sec:Intro}
\ac{PIR} refers to a problem where a client wishes to download, as efficiently as possible, one of the $K$ files that are replicated among a set of distributed servers such that the servers cannot learn anything about the client's file selection~\cite{PIR95,PIR98}. Aside from its direct applications in data security and privacy, it is closely related to many fundamental problems such as secret sharing~\cite{Secret_Sharing_2,beimel2011secret} and oblivious transfer~\cite {Oblivious98,rabin1981exchange}, which is also called symmetric \ac{PIR} and is a \ac{PIR} problem where the server wants to keep any non-selected file private from the client. Therefore, \ac{PIR} is a subject that relates to different areas in computer science. 

The \ac{PIR} problem was studied in~\cite{SunJafar171} from an information-theoretic point of view to characterize the maximum number of bits of desired information that can be retrieved privately per bit of downloaded information. In~\cite{SunJafar171}, the authors showed that this quantity is $(1+ 1/N+ 1/N^2+\dots+1/N^{K-1})^{-1}$ when a client wishes to retrieve one of the $K$ files that are distributed in $N$ replicated and non-colluding servers. This problem was subsequently extended to various scenarios. \cite{SunJafar172}~considered a \ac{PIR} problem where $T$ of the $N$ servers may collude and some of the servers may not respond. \cite{CodedPIR,CodedPIRDSS,CodedLi20}~studied \ac{PIR} with $N$ non-colluding servers, where each server stores an MDS-coded version of the $K$ files. \cite{SunJafar19,WangSkogland19}, extended the results to symmetric \ac{PIR}, in which the privacy of both the client and the servers is considered. 

\subsection{Overview of the setting studied in this paper}
In this paper, we study a \ac{PIR} problem where the client wants to retrieve one of the $K$ files that are replicated in $N$ servers and $T$ of these servers may collude. As reviewed in the next section,  only \ac{PIR} with noiseless side information has been studied in the literature, i.e., the client has access to a subset of the files or portions of each file and their corresponding positions in the original files. By contrast, in our problem setting, the client has a noisy version of each file which is obtained by passing each file through a discrete memoryless test channel. We assume that there are $D\le K$ different test channels whose statistics are public knowledge and known by the client and the servers.\footnote{
We assume that the statistics of the test channels, i.e. $C^{(\ell)}$, $\ell\in[D]$, are public information. Note that for each file, $X_k^n$, for $k\in[K]$, the client has side information about $X_k^n$ which can potentially be $\emptyset$, consequently, no more than $K$ test channels are needed to model the side information available at the client.
} We denote the mapping between the files and the test channels by $\boldsymbol{\calM}$. We study this problem under two different privacy metrics. 
For the first privacy metric, the client wants to keep the index of the desired file and the entire mapping $\boldsymbol{\calM}$ secret from the servers, and this includes the index of the test channel that is associated with the desired file. For the second privacy metric, the client wants to keep the index of the desired file and the mapping $\boldsymbol{\calM}$ secret from the servers, but the client is willing to reveal the index of the test channel that is associated with the desired file, i.e., $\boldsymbol{\calM}(Z)$.
For both privacy metrics, we derive the optimal normalized download cost, and we show that the second privacy metric always leads to a lower normalized download cost.

\subsection{Motivations}
\label{sec:Motivs}
Consider the following motivational example of \ac{PIR}, e.g., \cite{Private_Updating}: a stock market investor may want to privately retrieve some of the stock records because showing interest in one specific record could undesirably affect its value. Now, consider the case where an investor has already retrieved some or all of the stock records in the past. The investor could now retrieve a record by leveraging their knowledge of outdated records, which represents side information. As another example, the client could have acquired the noisy side information in several ways. For example, the user could have acquired a noisy version of the files opportunistically from other users in its network, overheard them from a wireless noisy broadcast channel, or downloaded them previously through classical \ac{PIR} schemes from other servers. Note that the availability of noisy side information encompasses having obtained parts of the files in a noiseless manner.
Also, the noise could have been the result of storing the files for a long period of time.

Note that in the stock market example, revealing the mapping between the stock records and the test channels shows how much information the investor has about each stock record, which is not in the interest of the investor since they do not want to affect the value of the stock records. Additionally, if the client has a subset of the files in a noiseless manner as side information \cite{PIR_SI_20,ChenWangJafar20}, then not keeping private the mapping between the test channels and the files may reveal to the servers the indices of the files that are available as side information at the client. These are examples of our first privacy metric. 
We also consider a privacy metric where the client reveals the index of the test channel associated with the desired file. As discussed in Example~\ref{ex:PNLSI} and Remark~\ref{EX:Comaring_Main_Results}, this privacy metric can lead to a lower download cost,  when, for example, the desired file is available in the side information in a noiseless manner and the client does not need to download anything.
\subsection{Related works}
As identified in~\cite{ChenWangJafar20}, three main models for \ac{PIR} with side information have been studied in the literature, which are summarized as follows.
\begin{itemize}
    \item \ac{PIR} with side information globally known by all the terminals: the effect of side information on the information-theoretic capacity of the \ac{PIR} problem was first studied in~\cite{Tandon17}, where the author considers a \ac{PIR} problem in which a client wishes to privately retrieve one out of $K$ files from $N$ replicated non-colluding servers. Specifically, in~\cite{Tandon17}, the client has a local cache that can store any function of the $K$ files. 
    \item \ac{PIR} with side information, where the privacy of the side information is not required: the single-server \ac{PIR} problem, where the client has access to a subset of the files and wants to protect only the identity of the desired file, is introduced and solved in~\cite{PIR_SI_20}. An achievability result for the multiserver case is also derived in~\cite{PIR_SI_20}, and was later shown to be optimal in~\cite{LiGastpar20}. Single-server \ac{PIR} when the client knows $M$ files out of $K$ files, or a linear combination of $M$ files, has further been studied in~\cite{Heidarzadeh18,Heidarzadeh19,PIRCodedSI} under various scenarios. Also, a multiserver \ac{PIR} when the client has a noisy version of the desired file is studied in~\cite{Private_Updating}. A more general notion of partial privacy for noiseless side information is introduced in \cite{Heidarzadeh22}, where a subset of the files available as side information is kept private from the servers while the complement of this subset of files is not required to be kept private.
    \item \ac{PIR} with private side information, where the joint privacy of the file selection and the side information is required: \cite{PIR_SI_20}~derived an achievable normalized download cost for $N$ replicated and non-colluding servers. \ac{PIR} from $N$ replicated and non-colluding servers, where a cache-enabled client possesses side information, in the form of uncoded portions of the files, that is unknown to the servers, is studied in~\cite{PIRStorageConstraint20}. Specifically, in~\cite{PIRStorageConstraint20}, the client knows the first $r_i$ bits, for $i\in[M]$, of $M$ randomly selected files, and the identities of these side information files need to be kept private from the servers. Also, \ac{PIR} from $N$ replicated and non-colluding servers when the client knows $M$ files out of $K$ files as side information, and each server knows the identity of a subset of the side information files, is studied in~\cite{PIRPatialSI19}. In~\cite{ChenWangJafar20}, the authors studied the \ac{PIR} problem where the client wishes to retrieve one of the $K$ files from $N$ replicated servers, when $T$ of the servers may collude, and the client has access to $M$ files in a noiseless manner. This problem is extended to the case where the client wants to retrieve multiple files privately in~\cite{Siavoshani21}. 
\end{itemize}

\textit{Difference between our model and previous models:} In this paper, we focus on \ac{PIR} with private side information. Note that the side information in the \ac{PIR} problems in~\cite{Tandon17,CodedPIRPrefetching,CodedPIRPrefetchingJSAC,PIRStorageConstraint20,PIR_SI_20,PIRPatialSI19,Heidarzadeh18,Heidarzadeh19,Heidarzadeh22,LiGastpar20,PIRCodedSI,ChenWangJafar20,Siavoshani21} is always noiseless, in the sense that all the side information available at the client corresponds to sub-sequences of each file and the client knows the corresponding symbol positions in the original files. 
By contrast to~\cite{Tandon17,CodedPIRPrefetching,CodedPIRPrefetchingJSAC,PIRStorageConstraint20,PIR_SI_20,PIRPatialSI19,Heidarzadeh18,Heidarzadeh19,Heidarzadeh22,LiGastpar20,PIRCodedSI,ChenWangJafar20,Siavoshani21}, the side information in this paper is noisy, for instance, if the files are binary and the test channels are \acp{BSC}, then the client does not know which information bits have been flipped by the \acp{BSC} and which ones have not been flipped.

\textit{Previous works recovered as special cases of our model:} Since the side information considered in this paper is generated by passing the files through some \acp{DMC}, our problem setup can recover the previous works if we assume that the test channels are \acp{BEC}. This is because passing a file through a \ac{BEC} with parameter 0 means that the side information is equal to the input file, and passing a file through a \ac{BEC} with parameter 1 means that there is no side information. For example, when there is only one \ac{BEC} with parameter 1, then the problem studied in this paper subsumes the \ac{PIR} problem~\cite{SunJafar171} and the \ac{PIR} problem with colluding servers~\cite{SunJafar172}. If we assume there are two \acp{BEC} with parameters 0 and 1, then the problem studied in this paper subsumes the \ac{PIR} problem with private noiseless side information~\cite[Theorem~2]{PIR_SI_20}, and the \ac{PIR} problem with colluding servers and private noiseless side information~\cite{ChenWangJafar20} as special cases. If we assume there are $M+1$, where $M\in\bbN_*$, \acp{BEC} with parameters $1-r_i$, then the problem studied here subsumes the \ac{PIR} problem with private side information under storage constraints~\cite{PIRStorageConstraint20}. We provide more details about each of these scenarios after we formally define the problem.
\begin{figure}[t]
\centering
\includegraphics[width=3.0in]{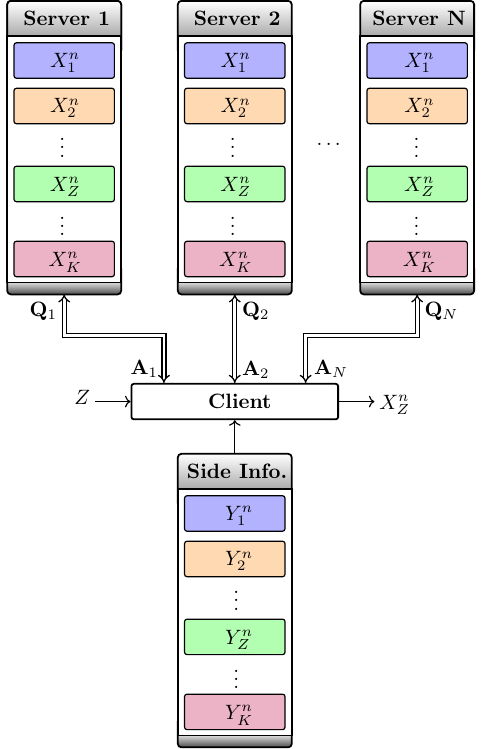}
\caption{\ac{PIR} with private noisy side information and $T$-colluding servers, where the side information about a specific file is obtained by passing this file through one of $D$ possible \acp{DMC} $\left(C^{(i)}\right)_{i\in[D]}$, where $D\le K$, i.e., for $j\in[K]$, there exists some $i\in[D]$ such that $Y_j^n$ is the output of channel $C^{(i)}$ when $X_j^n$ is the input. Here, $\left(X_i^n\right)_{i\in[K]}$ are the $K$ files that are replicated in $N$ servers, $\left(\mathbf{Q}_i\right)_{i\in[N]}$ are the queries for the servers, and $\left(\mathbf{A}_i\right)_{i\in[N]}$ are the corresponding answers of the servers. $Z$ is the index of the client's file selection and $X^n_Z$ is the desired file by the client.}
\label{fig:System_Model}
\vspace{-0.5cm}
\end{figure}
\subsection{Notation}
\label{sec:Notations}
Let $\bbN_*$ be the set of positive natural numbers, and $\bbR$ be the set of real numbers. For any $a, b\in\bbN_*$ such that $a\le b$, $\sbra{a}{b}$ denotes the set $\{a,a+1,\dots,b\}$, $[a]$ denotes the set $\{1,2,\dots,a\}$. Random variables are denoted by capital letters and their realizations by lowercase letters unless specified otherwise. Superscripts denote the dimension of a vector, e.g., $X^n$. For a set of indices $\calI\subset\bbN_*$, $\mathbf{X}_\calI$ denotes $(X_i)_{i\in\calI}$. $\bbE_{X}[\cdot]$ is the expectation with respect to the random variable $X$. The cardinality of a set $\mathcal{S}$ is denoted by $|\mathcal{S}|$. For a mapping $\boldsymbol{\calM}:\calA\to\calB$, the preimage of $b\in\calB$ by $\boldsymbol{\calM}$ is denoted as $\boldsymbol{\calM}^{-1}(b)\triangleq\left\{a\in\calA:\boldsymbol{\calM}(a)=b\right\}$. For $K\in\bbN_*$ and a mapping $\boldsymbol{\calM}:[K]\to\bbR$, we represent the domain and co-domain of $\boldsymbol{\calM}$ as a matrix of dimension $2\times K$ as
\begin{align}
       \boldsymbol{\calM}&=\begin{pmatrix}
1  & 2 & \dots & K \\
\boldsymbol{\calM}(1) & \boldsymbol{\calM}(2) & \dots & \boldsymbol{\calM}(K)
\end{pmatrix}.\nonumber
\end{align}
\subsection{Paper organization}
The remainder of this paper is organized as follows. We formally define the problem in Section~\ref{sec:Prob_Defi}. We present our main results in Section~\ref{sec:Main_Results} and provide the proofs in Section~\ref{sec:Proof_Thm_Single_Server} and Section~\ref{sec:Proof_Capacity_CPIRNSI}. Finally, we provide concluding remarks in Section~\ref{sec:Conclusion}.
\section{Problem Statement}
\label{sec:Prob_Defi}
Consider a client and $N$ servers, where up to $T$ of these $N$ servers may collude, and each server has a copy of $K$ files of length $n$. Additionally, consider a set of $D$ test channels, whose transition probabilities are known to the client and the servers, and whose outputs take value in finite alphabets. We assume that the client has noisy side information about all the  $K$ files in the sense that each file is passed through one of the $D$ test channels, and the output of this test channel is available at the client but not the servers, as depicted in Fig.~\ref{fig:System_Model}. The mapping $\boldsymbol{\calM}$ between the files and the test channels is not known at the servers. The objective of the client is to retrieve one of the files such that the index of this file and the mapping $\boldsymbol{\calM}$ are kept secret from the servers.
\subsection{Problem definitions}
\begin{definition}
\label{defi:Problem_Statement}
Consider $K,N,T,D,n\in\bbN_*$, $\left(d_i\right)_{i\in[D]}\in\bbN_*^D$ such that $\sum_{i=1}^Dd_i=K$, and $D$ distinct test channels $\left(C^{(i)}\right)_{i\in[D]}$, with $C^{(i)}\triangleq (\calX,P_{X|Y}^{(i)},\calY_i)$, where $\calX$ and $\calY_i$, $i\in[K]$, are finite alphabets. Without loss of generality, assume that $H(U|V_i)\le H(U|V_j)$, for $i,j\in[D]$ such that $i\leq j$, where $U$ is uniformly distributed over $\calX$, and $V_i$ and $V_j$ are the outputs of $C^{(i)}$ and $C^{(j)}$, respectively, when $U$ is the input. A \ac{PIR} protocol with private noisy side information and parameters $\left(K,N,T,D,n,\left(d_i\right)_{i\in[D]},\left(C^{(i)}\right)_{i\in[D]}\right)$ consists~of,
\begin{itemize}
    \item $N$ servers, where up to $T$ of these servers may collude;
    \item $K$ independent random sequences $\mathbf{X}_{[K]}^n$ uniformly distributed over $\calX^n$, which represent $K$ files shared at each of the $N$ servers;
    \item $D$ distinct test channels $\left(C^{(i)}\right)_{i\in[D]}$;
    \item a mapping $\boldsymbol{\calM}$ chosen at random from the set $\boldsymbol{\mathfrak{M}}\triangleq\left\{\boldsymbol{\calM}:[K]\to[D]: \forall i\in[D], \abs{\boldsymbol{\calM}^{-1}(i)}=d_i\right\}$; this mapping is only known at the client and not at the servers;
    \item for each file $X_i^n$, where $i\in[K]$, the client has access to a noisy version of $X_i^n$, denoted by $Y_{i,\boldsymbol{\calM}(i)}^n$, which is the output of the test channel $\C^{\left(\boldsymbol{\calM}(i)\right)}$ when $X_i^n$ is the input; 
    \item the random variable $Z$ represents the index of the file that the client wishes to retrieve, i.e., the client wants to retrieve the file~$X_Z^n$; when the client has noiseless side information about some of the files, which means that $C^{(1)}$ is a noiseless test channel, then $Z$ is uniformly distributed over $[K]\backslash\boldsymbol{\calM}^{-1}(1)$, otherwise $Z$ is uniformly distributed over $[K]$;
    \item a stochastic  query function $\calF_i:[K]\times\boldsymbol{\mathfrak{M}}\times\boldsymbol{\calY}_{[K]}^n\to\boldsymbol{\calQ}_i$, for $i\in[N]$, where $\boldsymbol{\calQ}_i$ is a finite alphabet;
    \item for $i\in[N]$, a deterministic answer function $\calE_i:\boldsymbol{\calQ}_i\times\calX^{nK}\to \left[2^{nR\left(\mathbf{Q}_i\right)}\right]$; 
    \item a decoding function $\calD:[K]\times\boldsymbol{\mathfrak{M}}\times\left[2^{n\sum_{i=1}^NR\left(\mathbf{Q}_i\right)}\right]\times\calY^{nK}\to\calX^n$;
\end{itemize}and operates as follows, 
\begin{enumerate}
    \item the client creates the queries $\mathbf{Q}_i\triangleq\calF_i\big(Z,\boldsymbol{\calM},\mathbf{Y}_{[K],\boldsymbol{\calM}}^n\big)$, where $\mathbf{Y}_{[K],\boldsymbol{\calM}}^n\triangleq\left(Y_{i,\boldsymbol{\calM}(i)}^n\right)_{i\in[K]}$, and sends it to Server $i\in[N]$; we assume that the queries must be of negligible length compared to the file length $n$, i.e., $\log\abs{\boldsymbol{\calQ}_i}=o(n)$, for $i\in[N]$;\footnote{When $D=1$ and the test channel is a \ac{BEC} with parameter $\epsilon_1=1$, or when $D=2$ and the test channels are \acp{BEC} with parameters $\epsilon_1=0$ and $\epsilon_2=1$, which correspond to \ac{PIR} without side information in~\cite{SunJafar172} and \ac{PIR} with noiseless side information in~\cite{ChenWangJafar20}, respectively, it is shown in~\cite{SunJafar172,ChenWangJafar20} that there is no loss of generality by making this assumption. In general, allowing the query cost to be non-negligible with the file length $n$ is a different problem. However, similar to~\cite[Remark~1]{PIR_SI_20} and~\cite{PIRStorageConstraint20}, this assumption can also be removed in our converse proofs when the queries $\mathbf{Q}_i$, for $i\in[N]$, are only allowed to depend on $(Z,\boldsymbol{\calM})$.}
    \item then, for all $i\in[N]$, Server~$i$ creates the answer $\mathbf{A}_i\triangleq\calE_i\big(\mathbf{Q}_i,\mathbf{X}_{[K]}^n\big)$, where $\mathbf{X}_{[K]}^n\triangleq\left(X_i^n\right)_{i\in[K]}$, and sends it to the client; therefore,
    \begin{align}
        H\left(\mathbf{A}_i\big|\mathbf{Q}_i,\mathbf{X}_{[K]}^n\right)&=0,\quad\forall i\in[N];\label{eq:Func_Answers}
    \end{align}
    \item finally, the client computes an estimate of $X_Z^n$ as $\calD\left(Z,\boldsymbol{\calM},\mathbf{A}_{[N]},\mathbf{Y}_{[K],\boldsymbol{\calM}}^n\right)$, where $\mathbf{A}_{[N]}\triangleq\left(\mathbf{A}_i\right)_{i\in[N]}$.
\end{enumerate}
Therefore, the probability of error for the client is,
\begin{align}
    &P_{\text{e}}\triangleq\limsup\limits_{n\to\infty}\mathbb{P}\left[\calD\left(Z,\boldsymbol{\calM},\mathbf{A}_{[N]},\mathbf{Y}_{[K],\boldsymbol{\calM}}^n\right)\ne X_Z^n\right].\label{eq:Prob_Error}
\end{align}
$R\left(\mathbf{Q}_{[N]}\right)\triangleq\sum_{i=1}^NR\left(\mathbf{Q}_i\right)$, where $\mathbf{Q}_{[N]}\triangleq\left(\mathbf{Q}_i\right)_{i\in[N]}$, is the normalized download cost of the \ac{PIR} protocol and is random with respect to $\mathbf{Q}_{[N]}$, which makes the protocol a variable length coding scheme. We also define the expected normalized download cost of the protocol as $R\triangleq\bbE_{\mathbf{Q}_{[N]}}[R\left(\mathbf{Q}_{[N]}\right)]$.
\end{definition}
\begin{example}[When $K=D=2$, $T=1$, and $d_1=d_2=1$]
\label{ex:K=D=2_T=1}
Let $X_1^n$ and $X_2^n$ be the two files at the server and $Y_{i,\boldsymbol{\calM}(i)}^n$ be the side information about $X_i^n$, $i\in\{1,2\}$, available at the client but unavailable at the server, where $Y_{i,\boldsymbol{\calM}(i)}^n$ is the output of the test channel $C^{\boldsymbol{(\calM}(i))}$ when the input is $X_i^n$. 
Note that $\boldsymbol{\calM}$ can take two values (with the notation introduced in Section~\ref{sec:Notations}):
\begin{align}
    \mathbf{M}_1&\triangleq\begin{pmatrix}
1  & 2 \\
1 & 2
\end{pmatrix},\quad
    \mathbf{M}_2\triangleq\begin{pmatrix}
1  & 2 \\
2 & 1
\end{pmatrix}.\nonumber
\end{align}
When $Z=1$, since there are two different possibilities for the side information about $X_1^n$, that are $Y_{1,1}^n$ and $Y_{1,2}^n$, we define,
\begin{align}
&P_{\text{e}}^{(1)}\big(Z=1,\mathbf{M}_1\big)\triangleq\nonumber\\
&\mathbb{P}\Big[\calD\left(Z,\boldsymbol{\calM},\mathbf{A}_{[N]},Y_{1,1}^n,Y_{2,2}^n\right)\ne X_1^n\Big|Z=1,\boldsymbol{\calM}=\mathbf{M}_1\Big],\nonumber\\
&P_{\text{e}}^{(2)}\big(Z=1,\mathbf{M}_2\big)\triangleq\nonumber\\
&\mathbb{P}\Big[\calD\left(Z,\boldsymbol{\calM},\mathbf{A}_{[N]},Y_{1,2}^n,Y_{2,1}^n\right)\ne X_1^n\Big|Z=1,\boldsymbol{\calM}=\mathbf{M}_2\Big].\nonumber
\end{align}
Similarly, when $Z=2$, since there are two different possibilities for the side information about $X_2^n$ at the server, that are $Y_{2,1}^n$ and $Y_{2,2}^n$, we define,
\begin{align}
&P_{\text{e}}^{(3)}\big(Z=2,\mathbf{M}_1\big)\triangleq\nonumber\\
&\mathbb{P}\Big[\calD\left(Z,\boldsymbol{\calM},\mathbf{A}_{[N]},Y_{1,1}^n,Y_{2,2}^n\right)\ne X_2^n\Big|Z=2,\boldsymbol{\calM}=\mathbf{M}_1\Big],\nonumber\\
&P_{\text{e}}^{(4)}\big(Z=2,\mathbf{M}_2\big)\nonumber\\
&\triangleq\mathbb{P}\Big[\calD\left(Z,\boldsymbol{\calM},\mathbf{A}_{[N]},Y_{1,2}^n,Y_{2,1}^n\right)\ne X_2^n\Big|Z=2,\boldsymbol{\calM}=\mathbf{M}_2\Big].\nonumber
\end{align}
Therefore, the probability of error in  \eqref{eq:Prob_Error} is equal to,
\begin{align}
&\bbP[Z=1,\boldsymbol{\calM}=\mathbf{M}_1\big]P_{\text{e}}^{(1)}\big(Z=1,\mathbf{M}_1\big)+\nonumber\\
&\bbP[Z=1,\boldsymbol{\calM}=\mathbf{M}_2\big]P_{\text{e}}^{(2)}\left(Z=1,\mathbf{M}_2\right)+\nonumber\\
&\bbP[Z=2,\boldsymbol{\calM}=\mathbf{M}_1\big]P_{\text{e}}^{(3)}\big(Z=2,\mathbf{M}_1\big)+\nonumber\\
&\bbP[Z=2,\boldsymbol{\calM}=\mathbf{M}_2\big]P_{\text{e}}^{(4)}\left(Z=2,\mathbf{M}_2\right).\nonumber
\end{align}
\end{example}
We consider two privacy metrics to study the problem defined above. For the first metric, we keep the index of the desired file $Z$ and the mapping $\boldsymbol{\calM}$ private from the servers, whereas, for the second metric, we allow the index $\boldsymbol{\calM}(Z)$ to be revealed to the server through the queries. As discussed in Section~\ref{sec:ex_defi}, these two privacy metrics recover, as special cases, several \ac{PIR} settings previously studied in the literature.
\begin{definition}[$\C_{\mbox{\scriptsize\rm PIR-PNSI}}$ optimal normalized download cost]
\label{defi:PNSI}
An expected normalized download cost $R\in\bbR_+$ is achievable with private noisy side information and undisclosed side information statistics of the desired file, when up to $T$ servers may collude, if there exist \ac{PIR} protocols such that, for any set $\calT\subseteq[N]$ such that $\abs{\calT}=T$,
\begin{subequations}\label{eq:Achi_Defi_C}
\begin{align}
&P_{\text{e}}= 0,\\
&I\big(\mathbf{Q}_\calT,\mathbf{A}_\calT,\mathbf{X}_{[K]}^n;Z,\boldsymbol{\calM}\big)=0.\label{eq:Sec_Constraint}
\end{align}
\end{subequations}The privacy metric \eqref{eq:Sec_Constraint} means that the client file choice $Z$ and mapping $\boldsymbol{\calM}$ must be kept secret from any $T$ colluding servers. 
The infimum of all achievable normalized download costs is referred to as the \ac{PIR} optimal normalized download cost with private noisy side information and undisclosed side information statistics of the desired file, and is denoted by $\C_{\mbox{\scriptsize\rm PIR-PNSI}}$.
\end{definition}
\begin{definition}[$\C_{\mbox{\scriptsize\rm PIR-PNSI}}^*$ optimal normalized download cost]
\label{defi:CPNSI}
An expected normalized download cost $R\in\bbR_+$ is achievable with private noisy side information and disclosed side information statistics of the desired file, when up to $T$ servers may collude, if there exist \ac{PIR} protocols such that, for any set $\calT\subseteq[N]$ such that $\abs{\calT}=T$,
\begin{subequations}\label{eq:Achi_Defi}
\begin{align}
&P_{\text{e}}= 0,\\
&I\big(\mathbf{Q}_\calT,\mathbf{A}_\calT,\mathbf{X}_{[K]}^n;Z,\boldsymbol{\calM}\big|\boldsymbol{\calM}(Z)\big)=0.\label{eq:Sec_Constraint_C}
\end{align}
\end{subequations}The privacy metric \eqref{eq:Sec_Constraint_C} means that the client file choice $Z$ and the mapping $\boldsymbol{\calM}$ must be kept secret from any $T$ colluding servers, but the noise statistics of the side information of the desired file  available at the client, i.e., $\boldsymbol{\calM}(Z)$, may be revealed to the servers.\footnote{For this privacy metric, we assume that $Z$ is uniformly distributed over $[K]$ because if the client has access to the desired file in a noiseless manner, then according to \eqref{eq:Sec_Constraint_C}, the client can reveal this to the servers and as a result the normalized download cost will be equal to zero.}
This contrasts with privacy metric \eqref{eq:Sec_Constraint} where the noise statistics of the side information of the desired file available at the client must be kept secret from the servers. 
The infimum of all achievable normalized download costs is referred to as the \ac{PIR} optimal normalized download cost with private noisy side information and disclosed side information statistics of the desired file, and is denoted by $\C_{\mbox{\scriptsize\rm PIR-PNSI}}^*$.
\end{definition}
\begin{figure*}
\centering
\includegraphics[width=5.0in]{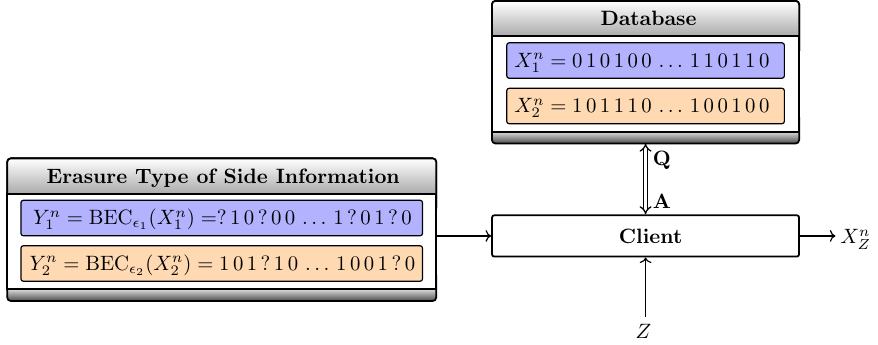}
\vspace{-0.1in}
\caption{Example with $(K,N,T,D)=(2,1,1,2)$ when the test channels are \acp{BEC}.}
\label{fig:Example_Achiev}
\vspace{-0.2in}
\end{figure*}
Note that the privacy constraint in Definition~\ref{defi:PNSI} implies the privacy constraint in Definition~\ref{defi:CPNSI}, i.e., \eqref{eq:Sec_Constraint}$\Rightarrow$\eqref{eq:Sec_Constraint_C}, and we will show that revealing the index of the test channel associated with the desired file $X_Z^n$ can result in a strictly lower normalized download cost. 
When $D=1$, the mapping $\boldsymbol{\calM}$ is deterministic and therefore the secrecy constraints in \eqref{eq:Sec_Constraint} and \eqref{eq:Sec_Constraint_C} are both equal to
\begin{align}
&I\big(\mathbf{Q}_\calT,\mathbf{A}_\calT,\mathbf{X}_{[K]}^n;Z\big)=0.\label{ed:PC_D=1}
\end{align}
\begin{remark}[Rate definition]
Note that in all the previous \ac{PIR} settings, as reviewed in the introduction, the normalized download cost is not a random variable. However, in our setting, since the privacy condition in Definition~\ref{defi:CPNSI} depends on the index of the desired file, the normalized download cost is potentially a random variable. Specifically, we allow the query $\mathbf{Q}_{[N]}$ to depend on $\boldsymbol{\calM}(Z)$, which leads to the normalized download cost $R\left(\mathbf{Q}_{[N]}\right)$ to be a random variable.  
\end{remark}
\subsection{Examples}
\label{sec:ex_defi}
In Example~\ref{ex:PIR_Colluding}, Example~\ref{ex:PIR_Noiseless_SI}, and Example~\ref{ex:PIR_Noiseless_SI_SC}, we show that our problem setup recovers the problem setup for \ac{PIR} with colluding servers~\cite{SunJafar172}, \ac{PIR} with colluding servers and noiseless side information~\cite{PIR_SI_20,ChenWangJafar20}, and \ac{PIR} with private side information under storage constraints~\cite{PIRStorageConstraint20}. 
\begin{example}[\ac{PIR} with colluding servers]
\label{ex:PIR_Colluding}
When $D=1$ and the test channel is a \ac{BEC} with parameter $\epsilon=1$, then the client has no side information about the files. In this case, Definition~\ref{defi:Problem_Statement} reduces to \ac{PIR} without side information as introduced in~\cite{SunJafar172}, and the privacy constraints in Definition~\ref{defi:PNSI} and Definition~\ref{defi:CPNSI} reduce to \eqref{ed:PC_D=1}, which is equivalent to the privacy constraint in~\cite{SunJafar172}.
\end{example}
\begin{example}[\ac{PIR} with private noiseless side information]
\label{ex:PIR_Noiseless_SI}
When $D=2$ and the test channels are \acp{BEC} with parameters $\epsilon_1=0$, and $\epsilon_2=1$, the client has access to $d_1$ files in a noiseless manner as side information. This case corresponds to \ac{PIR} with side information as introduced in~\cite[Theorem~2]{PIR_SI_20} for non-colluding servers and in~\cite[Theorem~1]{ChenWangJafar20} for colluding servers, with the privacy constraint in Definition~\ref{defi:PNSI}.
\end{example}
\begin{example}[\ac{PIR} with private side information under storage constraints]
\label{ex:PIR_Noiseless_SI_SC}
Suppose that $T=1$, $D=M+1$, for $M\in\bbN_*$ and $M\le K$, the test channels are \acp{BEC} with parameters $\epsilon_D=1$, $\epsilon_i=1-r_i$, for $i\in[M]$, with $r_1\ge r_2\ge\dots\ge r_M$, and $d_i=1$, for $i\in[M]$. This problem setup, under the privacy constraint in Definition~\ref{defi:PNSI}, is related to the problem studied in~\cite{PIRStorageConstraint20}. The difference with~\cite{PIRStorageConstraint20} is that the positions of the erasures are known at the servers in \cite{PIRStorageConstraint20}, whereas in our setting, the positions of the erasures are random and unknown at the servers. Therefore, the optimal normalized download cost for our problem setup in this example might be higher than the normalized download cost in~\cite{PIRStorageConstraint20}. However, we will show in the next section that the same normalized download cost as in~\cite{PIRStorageConstraint20} is achievable.
\end{example}

\section{Main Results}
\label{sec:Main_Results}
The novel element of the achievability scheme in this paper is the redundancy removal based on the noisy side information. Before we present our main results, we provide a toy example to illustrate the main ideas of the achievability scheme. 
\subsection{Example with \texorpdfstring{$(K,N,T,D)=(2,1,1,2)$}{(K,N,T,D)=(2,1,1,2)}}In this example, as illustrated in Fig.~\ref{fig:Example_Achiev}, we assume that the files are binary sequences, and the test channels are \acp{BEC} with parameters $\epsilon_1$ and $\epsilon_2$, where $\epsilon_1 < \epsilon_2$. Therefore, the mapping $\boldsymbol{\calM}$ is a random mapping that maps the first file to one of the test channels and maps the other file to the other test channel. Hence, $\boldsymbol{\calM}$ is a random permutation of the set $\{1,2\}$. Here, as seen in Fig.~\ref{fig:Example_SDB}, we first generate the source codes of the files in the database, assuming that the side information of all the files at the client is according to a \ac{BEC} with parameter $\epsilon_1$, by using the Slepian-Wolf encoder \cite{SlepianWolf} \cite[Section~10.4]{ElGamalKim}. We refer to these source codes as $\mathbf{SC}_1$. Similarly, we generate the source codes of the files in the database, assuming that the side information of all the files at the client is according to a \ac{BEC} with parameter $\epsilon_2-\epsilon_1$, by using a second Slepian-Wolf encoder that is nested with the first Slepian-Wolf encoder, and refer to these source codes as $\mathbf{SC}_2$. For the privacy metric defined in Definition~\ref{defi:PNSI}, i.e., when the client does not reveal the index of the test channel associated with the desired file, the client first downloads $\mathbf{SC}_1$, which results in retrieving the file that is associated with the \ac{BEC} with parameter $\epsilon_1$ and also gaining some information about the other file. The normalized download cost for downloading $\mathbf{SC}_1$ is $2\epsilon_1$ \cite{SunJafar171}. According to the Slepian-Wolf Theorem, e.g., \cite{SlepianWolf,ElGamalKim}, the probability of error for retrieving the file that is associated with the \ac{BEC} with parameter $\epsilon_1$ at the client goes to zero as $n$ goes to infinity since the source coding rate is $\epsilon_1$. Next, since the client has noiseless access to the file that is associated with the \ac{BEC} with parameter $\epsilon_1$, and therefore to the source coded version of this file, it then suffices to download $\mathbf{SC}_2(1)\oplus$$\mathbf{SC}_2(2)$, where $\mathbf{SC}_2(i)$, for $i\in\{1,2\}$, is the source code of File $i$, and $\oplus$ denote the modulo $2$ addition. Therefore, the normalized download cost of this operation is $\epsilon_2-\epsilon_1$ \cite[Theorem~2]{PIR_SI_20}, \cite[Therorem~1]{ChenWangJafar20}. The probability of error for decoding the file that is associated with the \ac{BEC} with parameter $\epsilon_2$ at the client goes to zero as $n$ goes to infinity since the source coding rate for this file is $\epsilon_1+(\epsilon_2-\epsilon_1)=\epsilon_2$. The total normalized download cost for this privacy metric is $2\epsilon_1+\epsilon_2-\epsilon_1=\epsilon_1+\epsilon_2$.

Consider now the privacy metric defined in Definition~\ref{defi:CPNSI}, i.e., when the client is willing to reveal the index of the test channel associated with the desired file. When $Z=1$, the client only downloads $\mathbf{SC}_1$, therefore, the normalized download cost is $2\epsilon_1$. When $Z=2$, the client downloads $\mathbf{SC}_1$ and retrieves the file that is associated with the \ac{BEC} with parameter $\epsilon_1$, then it retrieves the desired file by downloading $\mathbf{SC}_2(1)\oplus$$\mathbf{SC}_2(2)$. Therefore, the normalized download cost is $2\epsilon_1+\epsilon_2-\epsilon_1=\epsilon_1+\epsilon_2$. Since $Z$ is a random variable with uniform distribution, the average normalized download cost is $\epsilon_1+\frac{1}{2}(\epsilon_1+\epsilon_2)$. Therefore, the normalized download cost when the client is willing to reveal the index of the test channel associated with the desired file, i.e., Definition~\ref{defi:CPNSI}, is $\frac{1}{2}(\epsilon_2-\epsilon_1)$ less than the normalized download cost when the client does not reveal the index of the test channel associated with the desired file, i.e., Definition~\ref{defi:PNSI}. However, this reduced normalized download cost comes at a price, since the privacy metric in Definition~\ref{defi:PNSI} is stronger than the privacy metric in Definition~\ref{defi:CPNSI}.
\begin{figure*}
\centering
\includegraphics[width=4.0in]{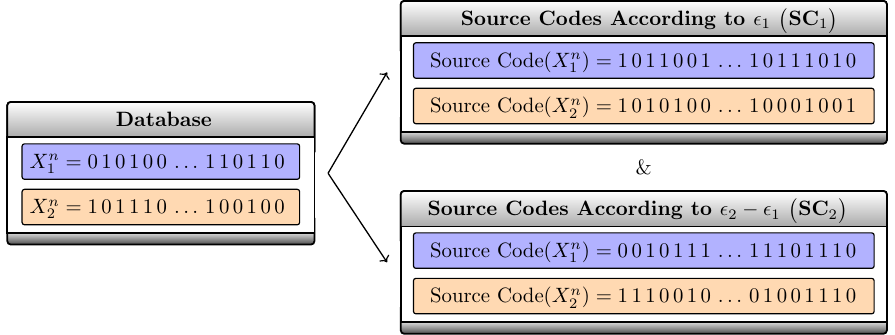}
\vspace{-0.1in}
\caption{Source codes of the files when the side information available at the client is according to the \ac{BEC} with parameter $\epsilon_1$, i.e., $\mathbf{SC}_1$, and source codes of the files when the side information available at the client is according to the \ac{BEC} with parameter $\epsilon_2-\epsilon_1$, i.e., $\mathbf{SC}_2$. Note that the source codes considered are nested.}
\label{fig:Example_SDB}
\vspace{-0.2in}
\end{figure*}
\subsection{Main results}
We now state our main results and present some examples that recover and extend known results. 
\begin{theorem}
\label{thm:Capacity_PIRNSI}
Consider $K$ files that are replicated in $N$ servers, where up to $T$ of them may collude. Then, the optimal normalized download cost of \ac{PIR} with private noisy side information and undisclosed side information statistics of the desired file is
\begin{align}
    \C_{\mbox{\scriptsize\rm PIR-PNSI}}&=\sum\limits_{\ell=1}^{D}H\big(X_1|Y_{1,\ell}\big)\left(\frac{T}{N}\right)^{d_{[\ell+1:D]}}\times\nonumber\\
    &\qquad\left(1+\frac{T}{N}+\left(\frac{T}{N}\right)^2+\dots+\left(\frac{T}{N}\right)^{d_\ell-1}\right)\nonumber\\
    &=\sum\limits_{\ell=1}^{D}H\big(X_1|Y_{1,\ell}\big)\left(\frac{T}{N}\right)^{d_{[\ell+1:D]}}\Psi^{-1}\left(\frac{T}{N},d_\ell\right),\label{eq:Capacity_PIRNSI}
\end{align}where $\Psi^{-1}(A,B)\triangleq\left(1+A+A^2+\dots+A^{B-1}\right)$, and  for $i,j\in\bbN_*$, $d_{[i:j]}\triangleq\sum_{t=i}^jd_t$, when $i\le j$, and $d_{[i:j]}\triangleq0$, when $i>j$.
\end{theorem}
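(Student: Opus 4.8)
The plan is to prove matching achievability and converse bounds for $\C_{\mbox{\scriptsize\rm PIR-PNSI}}$.

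For the achievability, I would build on the nested Slepian--Wolf idea illustrated in the toy example, generalized to $D$ test channels and to $T$-colluding servers. First, order the test channels so that $H(X_1|Y_{1,1})\le \cdots \le H(X_1|Y_{1,D})$ (already assumed in Definition~\ref{defi:Problem_Statement}). Construct $D$ nested Slepian--Wolf (binned) source codes $\mathbf{SC}_1,\dots,\mathbf{SC}_D$ of the files, where $\mathbf{SC}_\ell$ has incremental rate $H(X_1|Y_{1,\ell})-H(X_1|Y_{1,\ell-1})$ (with $H(X_1|Y_{1,0})\triangleq 0$), so that the first $\ell$ layers together form a Slepian--Wolf code at rate $H(X_1|Y_{1,\ell})$, decodable at any client whose side information for that file comes from $C^{(\ell)}$ or a less noisy channel. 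The retrieval then proceeds in $D$ stages: in stage $\ell$, the client needs the layer-$\ell$ bins of all $K$ files \emph{except} those already recoverable (the $d_{[1:\ell-1]}$ files with less-noisy side information). Crucially, to keep $Z$ and $\boldsymbol{\calM}$ private, the client must download the layer-$\ell$ bins using a $T$-private PIR-with-side-information scheme (as in~\cite{ChenWangJafar20}): at stage $\ell$ it knows $d_{[1:\ell-1]}$ of the relevant bin-sequences and wants one more, so the normalized download cost of stage $\ell$ is the incremental source rate times the per-bit cost $\left(\frac{T}{N}\right)^{d_{[\ell:D]}-1}\Psi^{-1}\!\left(\frac{T}{N},d_\ell\right)$-type factor; summing over stages and reindexing (Abel summation on the incremental rates $H(X_1|Y_{1,\ell})-H(X_1|Y_{1,\ell-1})$) collapses to the telescoped expression in~\eqref{eq:Capacity_PIRNSI}. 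I expect the bookkeeping of exactly which files count as "known side information" at each stage, and verifying that the scheme is genuinely private against any $\calT$ with $|\calT|=T$ (i.e.\ that the queries reveal nothing about $Z$ or $\boldsymbol{\calM}$), to be the fiddly part here.

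For the converse, I would prove that any protocol satisfying $P_{\text{e}}=0$ and~\eqref{eq:Sec_Constraint} has $R\ge$ the right-hand side of~\eqref{eq:Capacity_PIRNSI}. The standard PIR converse technique applies: use the privacy constraint to replace the true $(Z,\boldsymbol{\calM})$ by any other $(z',\boldsymbol{\calM}')$ when lower-bounding $H(\mathbf{A}_\calT)$, and use correctness to argue that from the answers together with the side information $\mathbf{Y}^n_{[K],\boldsymbol{\calM}}$ the client recovers $X^n_Z$, so $H(\mathbf{A}_{[N]}\mid \mathbf{Y}^n, Z, \boldsymbol{\calM}) \ge H(X^n_Z \mid \mathbf{Y}^n_{Z,\boldsymbol{\calM}(Z)}) = n H(X_1 \mid Y_{1,\boldsymbol{\calM}(Z)})+o(n)$. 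Then one runs the recursive "induction over files" argument of~\cite{SunJafar172,ChenWangJafar20}: because up to $T$ servers collude, each additional file's worth of information costs a factor $T/N$ relative to the previous, and because side information comes in $D$ quality levels with multiplicities $d_1,\dots,d_D$, the recursion must be carried out level by level, the cheapest ($\ell=1$) files first. The key inductive step shows that conditioned on having "paid for" the files at quality levels $1,\dots,\ell-1$, the marginal cost of the $d_\ell$ files at level $\ell$ is $H(X_1|Y_{1,\ell})\left(\frac{T}{N}\right)^{d_{[\ell+1:D]}}\Psi^{-1}\!\left(\frac{T}{N},d_\ell\right)$, because within a level the files are statistically exchangeable and the intra-level recursion is exactly the $K$-file colluding-PIR recursion of depth $d_\ell$ scaled by the outer factor $\left(\frac{T}{N}\right)^{d_{[\ell+1:D]}}$.

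The main obstacle will be the converse, and specifically handling the fact that $\boldsymbol{\calM}$ is a random permutation-type object that must itself be kept private: one must be careful that the symmetrization step (averaging the download cost over all $(z',\boldsymbol{\calM}')$ consistent with the privacy constraint) correctly produces the per-level factors $\left(\frac{T}{N}\right)^{d_{[\ell+1:D]}}$ rather than a cruder bound. I would address this by first establishing a symmetry lemma --- that without loss of generality the answer lengths depend on $(Z,\boldsymbol{\calM})$ only through the induced partition of $[K]$ into quality classes --- and then formalizing the level-by-level recursion as a clean induction on $D$, peeling off the highest-noise level $D$ first (whose $d_D$ files are the most expensive and appear with the outer factor $1$), which reduces the problem to the same statement with $D-1$ levels on a reduced file set. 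Matching this lower bound to the achievable rate above, and invoking the rate definition in Definition~\ref{defi:PNSI} (note $P_{\text{e}}=0$ is required, so one also argues the $\limsup$ in~\eqref{eq:Prob_Error} can be driven to $0$ by the Slepian--Wolf layers), completes the proof.
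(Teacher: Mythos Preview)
Your proposal is correct and follows essentially the same route as the paper: nested Slepian--Wolf layers combined with the $T$-colluding PIR-with-side-information scheme of~\cite{ChenWangJafar20} for achievability, and a file-by-file recursion (noisiest level first) with Han's inequality producing the $T/N$ factors for the converse. Two small deviations are worth noting: the paper does not use a symmetry lemma on answer lengths---instead it fixes a single realization $(z,\mathbf{M})$, conditions $H(\mathbf{A}_{[N]})$ on it, and uses an index-exchange lemma (Lemma~\ref{lemma:Changing_Z}, which follows directly from the privacy constraint and negligible query length) to swap $Z$ while keeping $\boldsymbol{\calM}=\mathbf{M}$ fixed throughout the recursion; and the recursion is carried out over all $K$ files in a single chain rather than as an outer induction on~$D$, though this is only a packaging difference.
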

\begin{proof}
The achievability proof is based on a multilevel nested random binning scheme, that allows the client to use the side information efficiently, and the achievability schemes in~\cite{SunJafar172},~\cite{ChenWangJafar20}. 
Without loss of generality, we assume that the channels are ordered according to their noise level, i.e., we assume $H(U|V_i)\le H(U|V_j)$, for $i,j\in[D]$ such that $i\leq j$, where $U$ is uniformly distributed over $\calX$ and $V_i$ and $V_j$ are the outputs of $C^{(i)}$ and $C^{(j)}$, respectively, when $U$ is the input. In our scheme, for each test channel $C^{(\ell)}$, $\ell\in[D]$, the servers store the source coded version of all the files in a new database, denoted by $\mathbf{SC}_\ell$. These databases are obtained by applying a multilevel nested random binning scheme that performs source coding with side information. The client first downloads $\mathbf{SC}_1$, which results in retrieving the $d_1$ files that are associated with the first test channel and obtaining some information about the other files. To download $\mathbf{SC}_2$ the client uses the $d_1$ files that have been retrieved from the previous level as noiseless side information similar to \cite{ChenWangJafar20}. Downloading $\mathbf{SC}_2$ results in retrieving the $d_2$ files that are associated with the second test channel and obtaining some information about the other files. The client continues this process to download all the $(\mathbf{SC}_1,\ldots, \mathbf{SC}_D)$ new databases and, in each step, uses all the files retrieved from the previous steps as noiseless side information. When the client is willing to reveal the index of the test channel that is associated with the desired file, which is denoted by $i$, it only downloads ($\mathbf{SC}_1,\ldots,\mathbf{SC}_i$). Our converse proof shows that this scheme is optimal. 
The details of the achievability proof are available in Section~\ref{sec:Achie_Proof_NNL}. The converse proof is presented in Section~\ref{sec:Converse_Proof}.
\end{proof}
\begin{remark}[Index of random variables]
\label{rem:indices}
Since all the files are generated according to the same distribution, namely, the uniform distribution over $\calX^n$, the index 1 of $X_1$ and $Y_{1,\ell}$ in Theorem~\ref{thm:Capacity_PIRNSI} can be replaced with any other index $i\in[K]$.
\end{remark}
\begin{corollary}
\label{cor:Capacity_PIRNSI}
Consider $K$ files that are replicated in $N$ servers,  where up to $T$ of them may collude. Additionally, the test channels are \acp{BEC} with parameters $\left(\epsilon_i\right)_{i\in[D]}\in[0,1]^D$ such that $\epsilon_i<\epsilon_j$, for $i,j\in\bbN_*$ and $i<j$. Then, the optimal normalized download cost of \ac{PIR} with private noisy side information and undisclosed side information statistics of the desired file is
\begin{align}
    \C_{\mbox{\scriptsize\rm PIR-PNSI}}&=\sum\limits_{\ell=1}^{D}\epsilon_\ell\left(\frac{T}{N}\right)^{d_{[\ell+1:D]}}\times\nonumber\\
    &\qquad\left(1+\frac{T}{N}+\left(\frac{T}{N}\right)^2+\dots+\left(\frac{T}{N}\right)^{d_\ell-1}\right).\nonumber
\end{align}
\end{corollary}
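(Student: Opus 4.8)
The plan is to obtain Corollary~\ref{cor:Capacity_PIRNSI} directly from Theorem~\ref{thm:Capacity_PIRNSI} by evaluating the conditional entropies $H(X_1|Y_{1,\ell})$, $\ell\in[D]$, in the special case where the $\ell$-th test channel is a \ac{BEC} with erasure probability $\epsilon_\ell$. Since the formula \eqref{eq:Capacity_PIRNSI} already holds for arbitrary \acp{DMC}, the only substantive thing to check is that under the \ac{BEC} hypothesis one has $H(X_1|Y_{1,\ell})=\epsilon_\ell$, together with the bookkeeping point that the ordering convention of Theorem~\ref{thm:Capacity_PIRNSI} is consistent with the hypothesis $\epsilon_i<\epsilon_j$ for $i<j$.

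For the entropy computation, first I would recall that a \ac{BEC} has binary input alphabet $\calX=\{0,1\}$, so $X_1$ is uniform on $\{0,1\}$ with $H(X_1)=1$ bit, and output alphabet $\{0,1,\star\}$ with $\star$ denoting an erasure. Conditioning on the output: with probability $1-\epsilon_\ell$ the output is a nonerasure symbol that determines $X_1$, so $H(X_1\mid Y_{1,\ell}=y)=0$ for $y\in\{0,1\}$; with probability $\epsilon_\ell$ the output is $\star$, which is independent of $X_1$, so $X_1$ stays uniform and $H(X_1\mid Y_{1,\ell}=\star)=1$ bit. Averaging over the output gives $H(X_1|Y_{1,\ell})=\epsilon_\ell\cdot 1+(1-\epsilon_\ell)\cdot 0=\epsilon_\ell$, measured in bits, which is the unit in which the normalized download cost of Definition~\ref{defi:Problem_Statement} is expressed.

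Next I would verify the hypotheses of Theorem~\ref{thm:Capacity_PIRNSI}. That theorem is stated under the assumption $H(U|V_i)\le H(U|V_j)$ for $i\le j$, which by the previous paragraph reads $\epsilon_i\le\epsilon_j$ for $i\le j$, and this is implied by the corollary's assumption $\epsilon_i<\epsilon_j$ for $i<j$; the strict inequality also guarantees that the $D$ test channels are distinct, as required in Definition~\ref{defi:Problem_Statement}. Substituting $H(X_1|Y_{1,\ell})=\epsilon_\ell$ into \eqref{eq:Capacity_PIRNSI}, and writing $\Psi^{-1}(T/N,d_\ell)$ out as $1+T/N+(T/N)^2+\dots+(T/N)^{d_\ell-1}$, then yields exactly the claimed expression for $\C_{\mbox{\scriptsize\rm PIR-PNSI}}$.

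I do not expect any real obstacle here: the corollary is a one-line specialization of Theorem~\ref{thm:Capacity_PIRNSI} once $H(X_1|Y_{1,\ell})=\epsilon_\ell$ is in hand. The only point that warrants a moment of care is the logarithm base, so that the conditional entropy in \eqref{eq:Capacity_PIRNSI} is measured in bits and is thus commensurate with the factor $2^{nR(\mathbf{Q}_i)}$ appearing in the answer alphabets of Definition~\ref{defi:Problem_Statement}; with base-$2$ entropies and the binary input alphabet of the \ac{BEC} this is automatic, so no separate argument is needed.
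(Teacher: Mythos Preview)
Your proposal is correct and matches the paper's approach: the paper states Corollary~\ref{cor:Capacity_PIRNSI} without a separate proof, treating it as an immediate specialization of Theorem~\ref{thm:Capacity_PIRNSI} obtained by substituting $H(X_1|Y_{1,\ell})=\epsilon_\ell$ for a \ac{BEC} with uniform binary input. Your verification of the ordering hypothesis and the entropy computation are exactly the details one would fill in.
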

\begin{example}[No side information]
\label{rem:D=1_PIRNSI}
In Corollary~\ref{cor:Capacity_PIRNSI}, if we set $D=1$, and $\epsilon_1=1$, which means that the client has no side information and $d_D=K$, then the optimal normalized download cost result in Corollary~\ref{cor:Capacity_PIRNSI} reduces to~\cite[Theorem~1]{SunJafar172}, i.e.,
\begin{align}
    \C_{\mbox{\scriptsize\rm PIR-PNSI}}&=\left(1+\frac{T}{N}+\left(\frac{T}{N}\right)^2+\dots+\left(\frac{T}{N}\right)^{K-1}\right).\nonumber
\end{align}
\end{example}
\begin{example}[Private noiseless side information]
In Corollary~\ref{cor:Capacity_PIRNSI}, if we set $D=2$, $T=1$, $\epsilon_1=0$, which means that the client knows $d_1$ files as side information in a noiseless manner, and $\epsilon_2=1$, which means that there is no side information about $d_2=K-d_1$ files, then the optimal normalized download cost result in Corollary~\ref{cor:Capacity_PIRNSI} reduces to~\cite[Theorem~2]{PIR_SI_20}, i.e.,
\begin{align}
    \C_{\mbox{\scriptsize\rm PIR-PNSI}}&=\left(1+\frac{1}{N}+\left(\frac{1}{N}\right)^2+\dots+\left(\frac{1}{N}\right)^{K-d_1-1}\right).\nonumber
\end{align}
\end{example}
\begin{example}[Erasure side information with $D=1$]
\label{ex:D=1_BEC}
In Corollary~\ref{cor:Capacity_PIRNSI}, if we set $D=1$ and the test channel to be a \ac{BEC} with parameter $\epsilon$, then the result in Corollary~\ref{cor:Capacity_PIRNSI} reduces to
\begin{align}
    \C_{\mbox{\scriptsize\rm PIR-PNSI}}&=\epsilon\left(1+\frac{T}{N}+\left(\frac{T}{N}\right)^2+\dots+\left(\frac{T}{N}\right)^{K-1}\right).\label{cor:One_Level_Erasure}
\end{align}The optimal result in Example~\ref{ex:D=1_BEC} 
is equal to the optimal normalized download cost of the \ac{PIR} problem when the side information is known by all the terminals.
\end{example}
\begin{example}[\ac{PIR} with private side information under storage constraints]
\label{ex:storage_Cons}
Set $T=1$, $D=M+1$, for $M\in\bbN_*$ and $M< K$. If we set $d_i=1$, for $i\in[M]$, $d_{M+1}=K-M$, and $\epsilon_i=1-r_i$, with $r_1\ge r_2\ge\dots\ge r_M$, and $\epsilon_D=1$, then the optimal normalized download cost in Corollary~\ref{cor:Capacity_PIRNSI} reduces to
\begin{align}
    \C_{\mbox{\scriptsize\rm PIR-PNSI}}&=\frac{1-r_1}{N^{K-1}}+\frac{1-r_2}{N^{K-2}}+\frac{1-r_3}{N^{K-3}}+\dots+\frac{1-r_{M-1}}{N^{K-M+1}}+\nonumber\\
    &\quad\frac{1-r_{M}}{N^{K-M}}+1+\frac{1}{N}+\frac{1}{N^2}+\dots+\frac{1}{N^{K-M-1}}.\nonumber
\end{align}
Note that this result is stronger than that of~\cite[Theorem~1]{PIRStorageConstraint20}, since in~\cite{PIRStorageConstraint20} it is assumed that the client knows the \textit{first} $r_i$ bits, for $i\in[M]$, of $M$ randomly selected files, however, our result in Corollary~\ref{cor:Capacity_PIRNSI} reduces to the same optimal download cost as the optimal download cost derived in \cite{PIRStorageConstraint20} by removing the constraint that the client knows the \textit{first} $r_i$ bits of $M$ files and assumes that the client knows \textit{any randomly chosen} $r_i$ bits of $M$ files.
\end{example}
\begin{subequations}\label{eq:capacity_C}
\begin{theorem}
\label{thm:Capacity_CPIRNSI}
Consider $K$ files, $N$ replicated servers, where up to $T$ of them may collude, and $D$ test channels $\left(C^{(i)}\right)_{i\in[D]}$ as in Definition~\ref{defi:Problem_Statement}. Then, the optimal normalized download cost of \ac{PIR} with private noisy side information and disclosed side information statistics of the desired file is
\begin{align}
    &\C_{\mbox{\scriptsize\rm PIR-PNSI}}^*=\bbE_U[R(U)]\nonumber\\
    &=\frac{1}{K}\sum\limits_{\ell=1}^{D}H\big(X_1|Y_{1,\ell}\big)\left[d_{[\ell+1:D]}\sum\limits_{j=d_{[\ell+1:D]}}^{-1+d_{\ell}+d_{[\ell+1:D]}}\left(\frac{T}{N}\right)^{j} \right.\nonumber\\
    &\qquad\left.+ d_\ell\sum\limits_{j=0}^{-1+d_{[\ell:D]}}\left(\frac{T}{N}\right)^j\right]\nonumber\\
    &=\frac{1}{K}\sum\limits_{\ell=1}^{D}H\big(X_1|Y_{1,\ell}\big)\left[d_{[\ell+1:D]}\left(\frac{T}{N}\right)^{d_{[\ell+1:D]}}\Psi^{-1}\left(\frac{T}{N},d_{\ell}\right)\right.\nonumber\\
    &\qquad\left.+d_\ell \Psi^{-1}\left(\frac{T}{N},d_{[\ell:D]}\right)\right],\label{eq:capacity_Average}
\end{align}where 
\begin{align}
    R(U)&\triangleq\sum\limits_{\ell=1}^{U-1}H\big(X_1|Y_{1,\ell}\big)\left(\frac{T}{N}\right)^{d_{[\ell+1:D]}}\Psi^{-1}\left(\frac{T}{N},d_{\ell}\right)+\nonumber\\
    &\qquad H\big(X_1|Y_{1,U}\big)\Psi^{-1}\left(\frac{T}{N},d_{[U:D]}\right),\label{eq:capacity_NonAverage}
\end{align} with $U$ distributed according to $\bbP[U=u]\triangleq\frac{d_u}{K}$, for $u\in[D]$, $\Psi^{-1}(A,B)\triangleq\left(1+A+A^2+\dots+A^{B-1}\right)$, and  for $i,j\in\bbN_*$, $d_{[i:j]}\triangleq\sum_{t=i}^jd_t$, when $i\le j$, and $d_{[i:j]}\triangleq0$, when $i>j$.
\end{theorem}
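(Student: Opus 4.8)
The strategy is to reduce the disclosed-side-information problem to the already-established Theorem~\ref{thm:Capacity_PIRNSI} by conditioning on the revealed index $\boldsymbol{\calM}(Z)$. First I would argue that, because the privacy constraint \eqref{eq:Sec_Constraint_C} is a conditional version of \eqref{eq:Sec_Constraint}, once $\boldsymbol{\calM}(Z)=u$ is fixed the problem becomes structurally identical to the undisclosed case but with a smaller effective mapping: the desired file is known to sit in the block $\boldsymbol{\calM}^{-1}(u)$, and the client may exploit the fact that it does \emph{not} need to retrieve the files in the noisier blocks $\ell>u$ beyond what is automatically delivered. Concretely, for each realization $U=u$ the client runs the multilevel nested binning scheme of Theorem~\ref{thm:Capacity_PIRNSI} but stops after downloading $(\mathbf{SC}_1,\dots,\mathbf{SC}_u)$, so that the download cost conditioned on $\{U=u\}$ is exactly $R(u)$ in \eqref{eq:capacity_NonAverage}: the first $u-1$ terms are the full contributions $H(X_1|Y_{1,\ell})(T/N)^{d_{[\ell+1:D]}}\Psi^{-1}(T/N,d_\ell)$ inherited from the earlier proof, while the $u$-th term $H(X_1|Y_{1,U})\Psi^{-1}(T/N,d_{[U:D]})$ accounts for retrieving $\mathbf{SC}_u$ with the $d_{[1:u-1]}$ already-decoded files (plus the $d_{[u+1:D]}$ not-yet-needed files) acting as noiseless side information, which lumps $d_{[u:D]}$ files into a single $\Psi^{-1}$ factor. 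Averaging over $U$ with $\bbP[U=u]=d_u/K$ (which is the correct law because $Z$ is uniform on $[K]$ and $|\boldsymbol{\calM}^{-1}(u)|=d_u$) gives the first equality $\C^*_{\text{PIR-PNSI}}=\bbE_U[R(U)]$, and then a routine rearrangement of the double sum — splitting $R(u)$'s last term and reindexing — yields the closed forms in \eqref{eq:capacity_Average}.

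For the converse I would condition on $\boldsymbol{\calM}(Z)=u$ throughout. Since the privacy constraint \eqref{eq:Sec_Constraint_C} guarantees that, given $\boldsymbol{\calM}(Z)=u$, the colluding servers' view $(\mathbf{Q}_\calT,\mathbf{A}_\calT,\mathbf{X}_{[K]}^n)$ is independent of $(Z,\boldsymbol{\calM})$, the conditional problem is precisely an instance of the undisclosed-statistics setting on the sub-collection of files, with $u$ playing the role of the "last test channel the client actually uses." I would then invoke the converse argument behind Theorem~\ref{thm:Capacity_PIRNSI} (the chain of entropy inequalities exploiting $T$-privacy and the Slepian--Wolf-type lower bounds $H(\mathbf{A}_i)\ge$ the relevant conditional entropies) to show that the expected download cost conditioned on $\{U=u\}$ is at least $R(u)$. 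Because the protocol is allowed to make $\mathbf{Q}_{[N]}$ depend on $\boldsymbol{\calM}(Z)$, no cross-constraint couples the different values of $u$, so the overall expected cost is lower bounded by $\sum_u \bbP[U=u]\,R(u)=\bbE_U[R(U)]$, matching achievability.

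The main obstacle I anticipate is making the converse for each fixed $u$ fully rigorous: one must verify that conditioning on $\boldsymbol{\calM}(Z)=u$ really does leave enough residual uncertainty about which specific file in $\boldsymbol{\calM}^{-1}(u)$ is desired (and about the mapping restricted to the complementary blocks) that the induction underlying Theorem~\ref{thm:Capacity_PIRNSI} goes through unchanged — in particular that the "privacy forces the answers to be as large as if every file at noise level $\ell$ had to be decodable" step is not weakened by the disclosure. A secondary, purely bookkeeping, difficulty is the index algebra needed to pass from \eqref{eq:capacity_NonAverage} to the two displayed closed forms in \eqref{eq:capacity_Average}, i.e.\ checking that $\sum_{u}\tfrac{d_u}{K}\big[\sum_{\ell<u}(\cdots)+(\text{$u$-th term})\big]$ collapses to $\tfrac1K\sum_\ell H(X_1|Y_{1,\ell})\big[d_{[\ell+1:D]}(\cdots)+d_\ell\,\Psi^{-1}(T/N,d_{[\ell:D]})\big]$; this is a finite double-sum interchange and I would relegate it to the appendix.
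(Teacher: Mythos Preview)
Your approach matches the paper's: the achievability is exactly ``run the multilevel scheme of Theorem~\ref{thm:Capacity_PIRNSI} but stop after level $U=\boldsymbol{\calM}(Z)$,'' and the converse is the entropy-chain argument conditioned on $\boldsymbol{\calM}(Z)=i$, followed by averaging over $U$.

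One refinement is needed, and it is precisely the obstacle you flag. The induction underlying Theorem~\ref{thm:Capacity_PIRNSI} does \emph{not} go through unchanged: in that converse one swaps $Z=z$ to $Z=z'$ while keeping $\boldsymbol{\calM}=\mathbf{M}$ fixed (Lemma~\ref{lemma:Changing_Z}), but here doing so would change $\boldsymbol{\calM}(Z)$ and break the conditioning. The paper's fix is to swap $(Z,\boldsymbol{\calM})$ jointly: at each induction step one moves from $(z,\mathbf{M})$ to $(\bar{z}_{K-\ell},\mathbf{M}_\ell)$ where $\mathbf{M}_\ell$ is obtained from $\mathbf{M}_{\ell-1}$ by the transposition $\tau$ exchanging the images of the current and next file indices, so that $\mathbf{M}_\ell(\bar{z}_{K-\ell})=i$ is preserved throughout. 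This is legitimized by a replacement for Lemma~\ref{lemma:Changing_Z} (Lemma~\ref{lemma:Changing_Z_C} in the paper) which uses exactly the independence you cite, namely that $(\mathbf{Q}_\calT,\mathbf{A}_\calT,\mathbf{X}_{[K]}^n)$ is conditionally independent of $(Z,\boldsymbol{\calM})$ given $\boldsymbol{\calM}(Z)$. So your framing ``the conditional problem is precisely an instance of the undisclosed-statistics setting'' is not literally accurate---the exponents in $R(u)$ involve $d_{[\ell+1:D]}$, not $d_{[\ell+1:u]}$, because all $K$ files still participate in the chain---but the mechanism you anticipated is the right one once this joint-swap modification is in place.
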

\end{subequations}
\begin{proof}
Similar to the achievability scheme of Theorem~\ref{thm:Capacity_PIRNSI}, the achievability scheme of Theorem~\ref{thm:Capacity_CPIRNSI} is based on source coding with side information, and the achievability schemes in~\cite{SunJafar172},~\cite{ChenWangJafar20}. 
{Specifically, we use the same achievability scheme as in Theorem~\ref{thm:Capacity_PIRNSI} by using $\boldsymbol{\calM}(Z)$, instead of $D$, nested random bin indices for each file.}  
The details of the proof are available in Section~\ref{sec:Achie_Proof_C}. The converse proof is presented in Section~\ref{sec:Converse_Proof_CPIRNSI}. 
\end{proof}
The optimal results in Theorem~\ref{thm:Capacity_PIRNSI} and Theorem~\ref{thm:Capacity_CPIRNSI} show that the optimal normalized download cost grows linearly with $H\big(X_1|Y_{1,\ell}\big)$, for $\ell\in[D]$, which quantifies how noisy the side information is. This confirms the intuition that the noisier the side information is, the higher the normalized download cost will become. Note that, the same remark as Remark~\ref{rem:indices} also applies to Theorem~\ref{thm:Capacity_CPIRNSI}.
\begin{corollary}[Binary erasure test channels]
\label{Cor:BEC_Capacity}
Consider $K$ files and $N$ replicated servers, where up to $T$ of them may collude. Additionally, assume that the test channels are \acp{BEC} with parameters $\left(\epsilon_i\right)_{i\in[D]}\in[0,1]^D$ such that $\epsilon_i<\epsilon_j$, for $i,j\in\bbN_*$ and $i<j$. Then, the optimal normalized download cost of \ac{PIR} with private noisy side information and disclosed side information statistics of the desired file~is
\begin{subequations}
\begin{align}
    &\C_{\mbox{\scriptsize\rm PIR-PNSI}}^*=\bbE_U[R(U)]\nonumber\\
    &=\frac{1}{K}\sum\limits_{\ell=1}^{D}\epsilon_\ell\left[d_{[\ell+1:D]}\left(\frac{T}{N}\right)^{d_{[\ell+1:D]}}\Psi^{-1}\left(\frac{T}{N},d_{\ell}\right)+\right.\nonumber\\
    &\qquad\left.d_\ell \Psi^{-1}\left(\frac{T}{N},d_{[\ell:D]}\right)\right],\label{eq:capacity_Average_BEC}\\
  R(U)&\triangleq\sum\limits_{\ell=1}^{U-1}\epsilon_\ell\left(\frac{T}{N}\right)^{d_{[\ell+1:D]}}\Psi^{-1}\left(\frac{T}{N},d_{\ell}\right)+\nonumber\\
    &\qquad\epsilon_U\Psi^{-1}\left(\frac{T}{N},d_{[U:D]}\right).\label{eq:capacity_NonAverag_BEC}
\end{align}
\end{subequations}
\end{corollary}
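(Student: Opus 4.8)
The plan is to obtain Corollary~\ref{Cor:BEC_Capacity} as a direct specialization of Theorem~\ref{thm:Capacity_CPIRNSI}: the only quantity in \eqref{eq:capacity_Average} and \eqref{eq:capacity_NonAverage} that depends on the nature of the test channels is the conditional entropy $H\big(X_1|Y_{1,\ell}\big)$, so the entire argument reduces to evaluating this term when $C^{(\ell)}$ is a \ac{BEC} with erasure probability $\epsilon_\ell$, and to checking that the channel-ordering convention of Definition~\ref{defi:Problem_Statement} is consistent with the stated ordering $\epsilon_i<\epsilon_j$ for $i<j$.

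First I would recall that the files are binary, so $X_1^n$ is uniform over $\{0,1\}^n$, and $Y_{1,\ell}^n$ is its image through the memoryless \ac{BEC} with parameter $\epsilon_\ell$. For a single symbol: conditioned on the output symbol being an erasure (probability $\epsilon_\ell$) the input remains uniform on $\{0,1\}$, contributing $1$ bit; conditioned on a non-erasure (probability $1-\epsilon_\ell$) the input equals the output, contributing $0$. Hence $H\big(X_1|Y_{1,\ell}\big)=\epsilon_\ell$ with the logarithm base matching the normalized-download-cost convention used throughout. Next I would note that Definition~\ref{defi:Problem_Statement} orders the channels so that $H(U|V_i)\le H(U|V_j)$ for $i\le j$ with $U$ uniform on the input alphabet; by the computation just made this is precisely $\epsilon_i\le\epsilon_j$, so the hypothesis $\epsilon_i<\epsilon_j$ for $i<j$ is compatible with (indeed slightly stronger than) the adopted convention, and no reindexing of the channels is required.

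Finally, substituting $H\big(X_1|Y_{1,\ell}\big)=\epsilon_\ell$ into \eqref{eq:capacity_Average} yields \eqref{eq:capacity_Average_BEC}, and into \eqref{eq:capacity_NonAverage} yields \eqref{eq:capacity_NonAverag_BEC}, with $U$ still distributed as $\bbP[U=u]=d_u/K$ for $u\in[D]$. Everything else in the expressions, namely the sums $d_{[i:j]}$ together with the convention $d_{[i:j]}=0$ for $i>j$, and the polynomial $\Psi^{-1}(A,B)=1+A+\dots+A^{B-1}$, is inherited verbatim from Theorem~\ref{thm:Capacity_CPIRNSI} and is unaffected by the specialization.

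There is no genuine obstacle: the "hard part," such as it is, is only making the elementary \ac{BEC} entropy computation explicit and observing that the ordering convention is preserved. The same one-line substitution, of course, also gives Corollary~\ref{cor:Capacity_PIRNSI} from Theorem~\ref{thm:Capacity_PIRNSI}, so the two corollaries can be proved together.
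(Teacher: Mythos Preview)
Your proposal is correct and matches the paper's approach: the corollary is stated without explicit proof, as an immediate specialization of Theorem~\ref{thm:Capacity_CPIRNSI} obtained by substituting $H(X_1|Y_{1,\ell})=\epsilon_\ell$ for a \ac{BEC} with erasure probability $\epsilon_\ell$. Your verification of the channel-ordering convention and the entropy computation are exactly what is needed.
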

\begin{figure*}[t!]
\centering
\includegraphics[width=4.0in]{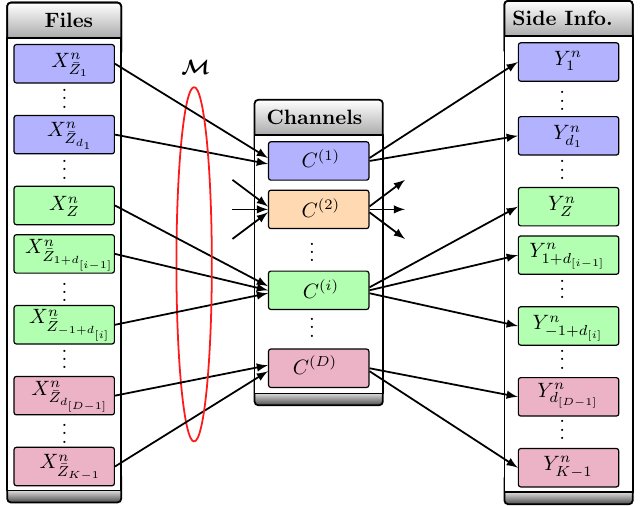}
\caption{Indexing the files based on the mapping $\boldsymbol{\calM}$.}
\label{fig:Mapping}
\vspace{-0.5cm}
\end{figure*}
\begin{figure*}[b!]
\hrulefill
\setcounter{equation}{12}
\begin{align}
    \boldsymbol{\calM}&=\begin{pmatrix}
\mathbf{\bar{Z}}_{[1:d_1]}  & \dots& \left(Z,\mathbf{\bar{Z}}_{[1+d_{[i-1]}:-1+d_{[i]}]}\right)& \dots& \mathbf{\bar{Z}}_{[d_{[D-1]}:-1+d_{[D]}]} \\
(1,\dots,1)  & \dots & \left(i,i,\dots,i\right)& \dots & (D,\dots,D)
\end{pmatrix},\label{eq:Mapping}\\
    \mathbf{Y}_{[K],\boldsymbol{\calM}}^n&\triangleq\left(\mathbf{Y}_{\boldsymbol{\calM}^{-1}(i),i}^n\right)_{i\in[D]}=\left(\mathbf{Y}_{\mathbf{\bar{Z}}_{[1:d_1]},1}^n,\mathbf{Y}_{\mathbf{\bar{Z}}_{\left[1+d_1:d_{[2]}\right]},2}^n,\dots,\mathbf{Y}_{\mathbf{\bar{Z}}_{\left[1+d_{[i-2]}:d_{[i-1]}\right]},i-1}^n,Y_{Z,i}^n,\right.\nonumber\\
    &\quad\left.\mathbf{Y}_{\mathbf{\bar{Z}}_{\left[1+d_{[i-1]}:-1+d_{[i]}\right]},i}^n,\mathbf{Y}_{\mathbf{\bar{Z}}_{\left[d_{[i]}:-1+d_{[i+1]}\right]},i+1}^n,\mathbf{Y}_{\mathbf{\bar{Z}}_{\left[d_{[i+1]}:-1+d_{[i+2]}\right]},i+2}^n,\dots, \mathbf{Y}_{\mathbf{\bar{Z}}_{\left[d_{[D-1]}:-1+d_{[D]}\right]},D}^n\right).\label{eq:Y_Defi}
\end{align}
\setcounter{equation}{9}
\end{figure*}
\begin{corollary}[Binary symmetric channels]
\label{Cor:BSC_Capacity}
When the test channels $C^{(\ell)}$, for $\ell\in[D]$, are \acp{BSC} with parameters $0\le p_1< p_2<\dots< p_D\le\frac{1}{2}$, then the optimal normalized download cost of the \ac{PIR} problem with private noisy side information and disclosed side information statistics of the desired file is 
\begin{align}
    &\C_{\mbox{\scriptsize\rm PIR-PNSI}}^*\nonumber\\
    &=\frac{1}{K}\sum\limits_{\ell=1}^{D}H(p_\ell)\left[d_{[\ell+1:D]}\left(\frac{T}{N}\right)^{d_{[\ell+1:D]}}\Psi^{-1}\left(\frac{T}{N},d_{\ell}\right)+\right.\nonumber\\
    &\qquad\left.d_\ell \Psi^{-1}\left(\frac{T}{N},d_{[\ell:D]}\right)\right],\label{eq:capacity_Average_BSC}
\end{align}where $H(p_\ell)\triangleq -p_\ell\log(p_\ell)-(1-p_\ell)\log(1-p_\ell)$.
\end{corollary}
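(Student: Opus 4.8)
The plan is to obtain Corollary~\ref{Cor:BSC_Capacity} as an immediate specialization of Theorem~\ref{thm:Capacity_CPIRNSI}. The only ingredient in \eqref{eq:capacity_Average} that depends on the specific test channels is the conditional entropy $H\big(X_1|Y_{1,\ell}\big)$, so it suffices to evaluate this term when $C^{(\ell)}$ is a \ac{BSC} with crossover probability $p_\ell$, and to verify that with this choice the channels are listed in the order required by Definition~\ref{defi:Problem_Statement}.

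First I would write $Y_{1,\ell}=X_1\oplus N_\ell$, where $X_1$ is uniform over $\{0,1\}$ and $N_\ell\sim\Bernoulli(p_\ell)$ is independent of $X_1$. A one-line computation, namely $\bbP[Y_{1,\ell}=y,N_\ell=\nu]=\bbP[X_1=y\oplus\nu]\,\bbP[N_\ell=\nu]=\tfrac12\,\bbP[N_\ell=\nu]$, shows that $Y_{1,\ell}$ is uniform over $\{0,1\}$ and independent of $N_\ell$. Since $x\mapsto x\oplus y$ is a bijection for each fixed $y$, we have $H\big(X_1|Y_{1,\ell}\big)=H\big(X_1\oplus Y_{1,\ell}\,\big|\,Y_{1,\ell}\big)$, hence
\begin{align}
H\big(X_1|Y_{1,\ell}\big)=H\big(N_\ell\,\big|\,Y_{1,\ell}\big)=H(N_\ell)=H(p_\ell),\nonumber
\end{align}
with $H(p_\ell)=-p_\ell\log p_\ell-(1-p_\ell)\log(1-p_\ell)$ as in the statement. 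Next I would check the hypothesis $H(U|V_i)\le H(U|V_j)$ for $i\le j$ from Definition~\ref{defi:Problem_Statement}: the identity above gives $H(U|V_\ell)=H(p_\ell)$, and since $p\mapsto H(p)$ is strictly increasing on $[0,\tfrac12]$ while $0\le p_1<p_2<\dots<p_D\le\tfrac12$, ordering the test channels by increasing crossover probability coincides with ordering them by increasing conditional entropy. Thus Theorem~\ref{thm:Capacity_CPIRNSI} applies with no re-indexing, and substituting $H\big(X_1|Y_{1,\ell}\big)=H(p_\ell)$ into \eqref{eq:capacity_Average} yields \eqref{eq:capacity_Average_BSC}.

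I do not expect any genuine obstacle: the argument mirrors the \ac{BEC} case in Corollary~\ref{Cor:BEC_Capacity}, where the analogous term simplifies to $\epsilon_\ell$. The two points that deserve a little care are the independence of $Y_{1,\ell}$ and $N_\ell$ used to identify $H\big(X_1|Y_{1,\ell}\big)$ with the binary entropy $H(p_\ell)$, and the monotonicity of $H(\cdot)$ on $[0,\tfrac12]$, which is precisely what ensures that listing the \acp{BSC} in increasing order of $p_\ell$ is consistent with the entropy ordering assumed in Theorem~\ref{thm:Capacity_CPIRNSI}.
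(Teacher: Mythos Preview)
Your proposal is correct and matches the paper's approach: the corollary is an immediate specialization of Theorem~\ref{thm:Capacity_CPIRNSI}, obtained by substituting $H(X_1|Y_{1,\ell})=H(p_\ell)$ for a \ac{BSC} with uniform input, and the paper does not supply any argument beyond this. Your verification of the entropy ordering via the monotonicity of $H(\cdot)$ on $[0,\tfrac12]$ is the right consistency check with Definition~\ref{defi:Problem_Statement}.
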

\begin{example}[Private noiseless side information]
\label{ex:PNLSI}
In Corollary~\ref{Cor:BEC_Capacity}, if we set $D=2$, $\epsilon_1=0$, which means that the client knows $d_1$ files as side information in a noiseless manner, $\epsilon_2=1$, which means that there is no side information about $d_2$ files, and $\boldsymbol{\calM}(Z)=1$, which means that the intended file is included in the noiseless side information, then the normalized download cost $R(U)$ in \eqref{eq:capacity_NonAverag_BEC} is equal to zero. When $\boldsymbol{\calM}(Z)=2$, which means that the intended file is not included in the noiseless side information, then the optimal normalized download cost $R(U)$ in \eqref{eq:capacity_NonAverag_BEC} reduces to, 
\begin{align}
    R(U)&=\left(1+\frac{T}{N}+\left(\frac{T}{N}\right)^2+\dots+\left(\frac{T}{N}\right)^{d_2-1}\right),\nonumber
\end{align}which is the result in~\cite[Theorem~2]{PIR_SI_20}, with $T=1$, and \cite[Theorem~1]{ChenWangJafar20}. Additionally, on average over the file choice, the expected normalized download cost is $\C_{\mbox{\scriptsize\rm PIR-PNSI}}^*=\bbE_U[R(U)]$.
\end{example}
\begin{example}[When $D=1$]
When $D=1$, $U=1$, the first term on the \ac{RHS} of \eqref{eq:capacity_NonAverage} is equal to zero, and the optimal normalized download cost in \eqref{eq:capacity_Average} reduces to Theorem~\ref{thm:Capacity_PIRNSI} when $D=1$, that is,
\begin{align}
    &\C_{\mbox{\scriptsize\rm PIR-PNSI}}^*=\C_{\mbox{\scriptsize\rm PIR-PNSI}}\nonumber\\
    &=H\big(X_1|Y_{1,1}\big)\left(1+\frac{T}{N}+\left(\frac{T}{N}\right)^2+\dots+\left(\frac{T}{N}\right)^{K-1}\right).\nonumber
\end{align}
\end{example}
\begin{remark}[Comparing the results in Theorem~\ref{thm:Capacity_PIRNSI} and Theorem~\ref{thm:Capacity_CPIRNSI}]
\label{EX:Comaring_Main_Results}
We rewrite the result in Theorem~\ref{thm:Capacity_PIRNSI} for any $U\in[D]$ as follows,
\begin{align}
    &\C_{\mbox{\scriptsize\rm PIR-PNSI}}=\sum\limits_{\ell=1}^DH\big(X_1|Y_{1,\ell}\big)\left(\frac{T}{N}\right)^{d_{[\ell+1:D]}}\Psi^{-1}\left(\frac{T}{N},d_\ell\right)\nonumber\\
    &=\sum\limits_{\ell=1}^{U-1}H\big(X_1|Y_{1,\ell}\big)\left(\frac{T}{N}\right)^{d_{[\ell+1:D]}}\Psi^{-1}\left(\frac{T}{N},d_\ell\right)\nonumber\\
    &\quad+\sum\limits_{\ell=U}^DH\big(X_1|Y_{1,\ell}\big)\left(\frac{T}{N}\right)^{d_{[\ell+1:D]}}\Psi^{-1}\left(\frac{T}{N},d_\ell\right)\nonumber\\
    &\mathop=\limits^{(a)}R(U)-H(X_1|Y_{1,U})\Psi^{-1}\left(\frac{T}{N},d_{[U:D]}\right)+\nonumber\\
    &\qquad\sum\limits_{\ell=U}^DH\big(X_1|Y_{1,\ell}\big)\left(\frac{T}{N}\right)^{d_{[\ell+1:D]}}\Psi^{-1}\left(\frac{T}{N},d_\ell\right)\nonumber\\
    &\mathop=\limits^{(b)} R(U)+\sum\limits_{\ell=U}^D\Big(H\big(X_1|Y_{1,\ell}\big)-H\big(X_1|Y_{1,U}\big)\Big)\times\nonumber\\
    &\qquad\left(\frac{T}{N}\right)^{d_{[\ell+1:D]}}\Psi^{-1}\left(\frac{T}{N},d_\ell\right)\nonumber\\
    &\mathop\ge\limits^{(c)}R(U),\label{eq:rem_Comarision}
\end{align}where 
\begin{itemize}
    \item[$(a)$] follows from \eqref{eq:capacity_NonAverage};
    \item[$(b)$] follows by expanding $\Psi^{-1}\left(\frac{T}{N},d_{[U:D]}\right)$;
    \item[$(c)$] follows since $H\big(X_1|Y_{1,U})\le H\big(X_1|Y_{1,\ell})$, for $\ell\in[U:D]$.
    \end{itemize}
Therefore, the optimal normalized download cost in \eqref{eq:capacity_Average}, which is the average of $R(U)$, with respect to $U$, is always smaller than or equal to the optimal normalized download cost in Theorem~\ref{thm:Capacity_PIRNSI}, i.e., $\C_{\mbox{\scriptsize\rm PIR-PNSI}}^*\le\C_{\mbox{\scriptsize\rm PIR-PNSI}}$. This shows that revealing the index of the test channel that is associated with the desired file reduces the normalized download cost. \\ 
Hence, the optimal normalized download cost in Theorem~\ref{thm:Capacity_CPIRNSI} is smaller than the optimal normalized download cost in Theorem~\ref{thm:Capacity_PIRNSI}, and the difference between these two quantities increases as the index $\boldsymbol{\calM}(Z)$ of the test channel that is associated with the desired file decreases.
\end{remark}

\section{Proof of Theorem~\ref{thm:Capacity_PIRNSI}}
\label{sec:Proof_Thm_Single_Server}
\subsection{Converse proof}
\label{sec:Converse_Proof}
\begin{subequations}\label{eq:Mapping_Zbar}
Define $\mathbf{Z}\triangleq(Z,\bar{\mathbf{Z}})$, where $\bar{\mathbf{Z}}\triangleq\left(\bar{Z}_1,\bar{Z}_2,\dots,\bar{Z}_{K-1}\right)$, and
\begin{align}
  \begin{cases}
    \mathbf{\bar{Z}}_{[1+d_{[i-1]}:d_{[i]}]}\triangleq\boldsymbol{\calM}^{-1}(i)& \quad\text{if\quad$i<\boldsymbol{\calM}(Z)$} \\
    \mathbf{\bar{Z}}_{[1+d_{[i-1]}:-1+d_{[i]}]}\triangleq\boldsymbol{\calM}^{-1}(i)\backslash \{Z\}& \quad\text{if\quad$i=\boldsymbol{\calM}(Z)$} \\
    \mathbf{\bar{Z}}_{[d_{[i-1]}:-1+d_{[i]}]}\triangleq\boldsymbol{\calM}^{-1}(i)& \quad\text{if\quad$i>\boldsymbol{\calM}(Z)$} \\
  \end{cases},\label{eq:Z_Bar}
\end{align}where $d_{[i]}\triangleq\sum_{j=1}^id_i$, $\mathbf{\bar{Z}}_{[i:j]}\triangleq(\bar{Z}_i,\bar{Z}_{i+1},\dots,\bar{Z}_j)$, and,  by convention, for $a,b\in\bbN_*$ and $a>b$ define $\mathbf{\bar{Z}}_{\sbra{a}{b}}\triangleq\emptyset$. Then, we index all the files 
as depicted in Fig.~\ref{fig:Mapping} such that the mapping $\boldsymbol{\calM}$ can be described as \eqref{eq:Mapping}, provided at the bottom of the next page, (with the notation introduced in Section~\ref{sec:Notations}). From \eqref{eq:Z_Bar} and \eqref{eq:Mapping}, the side information available at the client is \eqref{eq:Y_Defi}, provided at the bottom of the next page,
in which $\mathbf{Y}_{\mathbf{\bar{Z}}_{[i:j]},\ell}^n\triangleq\left(Y_{\bar{Z}_i,\ell}^n,Y_{\bar{Z}_{i+1},\ell}^n,\dots,Y_{\bar{Z}_j,\ell}^n\right)$.
\end{subequations}
\begin{example}
\label{Ex:Z_Bar}
To illustrate the definition of $\bar{\mathbf{Z}}\triangleq(\bar{Z}_1,\dots,\bar{Z}_{K-1})$ in \eqref{eq:Z_Bar}, consider a setting where $d_i=1$, for $i\in[D]$, which means that $D=K$. In this case, $\sum_{t=1}^{i-1}d_t=i-1$ and $\sum_{t=1}^{i}d_t=i$, therefore, the definition of $\bar{Z}_i$, for $i\in[K-1]$, in \eqref{eq:Z_Bar} reduces to
\setcounter{equation}{14}
\begin{align}
  \begin{cases}
    \bar{Z}_i\triangleq\boldsymbol{\calM}^{-1}(i)& \quad\text{if\quad$i<\boldsymbol{\calM}(Z)$} \\
    \bar{Z}_{i-1}\triangleq\boldsymbol{\calM}^{-1}(i)& \quad\text{if\quad$i>\boldsymbol{\calM}(Z)$} \\
  \end{cases}.\label{eq:Z_Bar_121}
\end{align}For example, let $K=D=3$, $Z=2$, and the side information at the client be $\left(Y_{1,3}^n,Y_{2,1}^n,Y_{3,2}^n\right)$, therefore $\boldsymbol{\calM}(Z)=1$, $\bar{Z}_1\triangleq\boldsymbol{\calM}^{-1}(2)=3$, and $\bar{Z}_2\triangleq\boldsymbol{\calM}^{-1}(3)=1$.
\end{example}
The following equations and lemmas are essential for the converse proof. 
From the dependency graph in Fig.~\ref{fig:Chaining_UC} we have
\begin{align}
    I\left(Z,\mathbf{Q}_{[N]};\mathbf{X}_{[K]}^n\Big|\mathbf{Y}_{[K],\boldsymbol{\calM}}^n,\boldsymbol{\calM}\right)&=0.\label{eq:independ_MZQX}
\end{align}
Considering the probability of error in~\eqref{eq:Prob_Error}, by Fano’s inequality~\cite[Section~2.11]{Cover_Book}, we also have
\begin{align} &\max\limits_{z\in[K]}\,\,\,\max\limits_{\substack{\mathbf{M}\in\mathfrak{M}}}H\big(X_Z^n|\mathbf{Q}_{[N]},\mathbf{A}_{[N]},\mathbf{Y}_{[K],\boldsymbol{\calM}}^n,Z=z,\boldsymbol{\calM}=\mathbf{M}\big) \nonumber\\
&=o(n).\label{eq:Fano_NNL}
\end{align}
\begin{figure}
\centering
\includegraphics[width=3.0in]{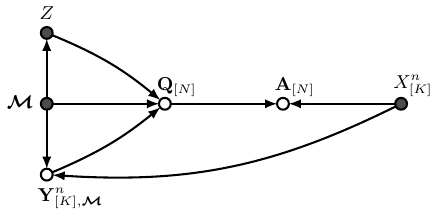}
\vspace{-0.1in}
\caption{Dependency graph for all the involved random variables.}
\label{fig:Chaining_UC}
\vspace{-0.2in}
\end{figure}
\begin{lemma}
\label{lemma:Markov_Chain}
For all $\mathbf{M}\in\boldsymbol{\mathfrak{M}}$, $z\in[K]$, $\calT'\subseteq[N]$, and $\calT\subseteq[N]$ such that $\abs{\calT}=T$, we have 
\begin{align}
    I\left(\mathbf{A}_{\calT};\mathbf{Q}_{[N]\backslash\calT}\Big|\mathbf{Q}_{\calT},\mathbf{Y}_{[K],\boldsymbol{\calM}}^n,\mathbf{X}_{\calT'}^n,Z=z,\boldsymbol{\calM}=\mathbf{M}\right)=0.\label{eq:Markov_Chain_Lemma1}
\end{align}
\end{lemma}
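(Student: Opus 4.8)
The plan is to reduce the claim to two ingredients: the fact, inherited from \eqref{eq:Func_Answers}, that conditioned on the colluding queries $\mathbf{Q}_\calT$ and on the files $\mathbf{X}_{\calT'}^n$ the answers $\mathbf{A}_\calT$ are a deterministic function of the remaining files $\mathbf{X}_{[K]\backslash\calT'}^n$, and the conditional independence of $\mathbf{X}_{[K]}^n$ from the queries already recorded in \eqref{eq:independ_MZQX}. Since a deterministic function of a random variable inherits that variable's conditional independences, it will then be enough to show that $\mathbf{X}_{[K]\backslash\calT'}^n$ is conditionally independent of $\mathbf{Q}_{[N]\backslash\calT}$ given $(\mathbf{Q}_\calT,\mathbf{Y}_{[K],\boldsymbol{\calM}}^n,\mathbf{X}_{\calT'}^n,Z=z,\boldsymbol{\calM}=\mathbf{M})$.

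First I would make the deterministic-function observation precise: because $\mathbf{A}_i=\calE_i(\mathbf{Q}_i,\mathbf{X}_{[K]}^n)$ for every $i\in[N]$, once $\mathbf{Q}_\calT$ and $\mathbf{X}_{\calT'}^n$ are fixed the tuple $\mathbf{A}_\calT=(\calE_i(\mathbf{Q}_i,\mathbf{X}_{[K]}^n))_{i\in\calT}$ depends only on $\mathbf{X}_{[K]\backslash\calT'}^n$. To establish the needed conditional independence I would start from \eqref{eq:independ_MZQX}, i.e.\ $\mathbf{X}_{[K]}^n\indep(Z,\mathbf{Q}_{[N]})\,\big|\,(\mathbf{Y}_{[K],\boldsymbol{\calM}}^n,\boldsymbol{\calM})$, and apply the standard weak-union and decomposition rules for conditional independence in the order: (i) move $Z$ into the conditioning, giving $\mathbf{X}_{[K]}^n\indep\mathbf{Q}_{[N]}\,|\,(\mathbf{Y}_{[K],\boldsymbol{\calM}}^n,\boldsymbol{\calM},Z)$; (ii) split $\mathbf{Q}_{[N]}=(\mathbf{Q}_\calT,\mathbf{Q}_{[N]\backslash\calT})$ and move $\mathbf{Q}_\calT$ into the conditioning, giving $\mathbf{X}_{[K]}^n\indep\mathbf{Q}_{[N]\backslash\calT}\,|\,(\mathbf{Y}_{[K],\boldsymbol{\calM}}^n,\boldsymbol{\calM},Z,\mathbf{Q}_\calT)$; (iii) split $\mathbf{X}_{[K]}^n=(\mathbf{X}_{\calT'}^n,\mathbf{X}_{[K]\backslash\calT'}^n)$ and move $\mathbf{X}_{\calT'}^n$ into the conditioning, giving $\mathbf{X}_{[K]\backslash\calT'}^n\indep\mathbf{Q}_{[N]\backslash\calT}\,|\,(\mathbf{Y}_{[K],\boldsymbol{\calM}}^n,\boldsymbol{\calM},Z,\mathbf{Q}_\calT,\mathbf{X}_{\calT'}^n)$.

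Combining (iii) with the deterministic-function observation yields $I(\mathbf{A}_\calT;\mathbf{Q}_{[N]\backslash\calT}\,|\,\mathbf{Q}_\calT,\mathbf{Y}_{[K],\boldsymbol{\calM}}^n,\mathbf{X}_{\calT'}^n,Z,\boldsymbol{\calM})=0$. Finally, since $(Z,\boldsymbol{\calM})$ is discrete and this conditional mutual information is a nonnegative combination over the realizations $(z,\mathbf{M})$, each term vanishes; in particular it vanishes for the prescribed $z\in[K]$ and $\mathbf{M}\in\boldsymbol{\mathfrak{M}}$, which is exactly \eqref{eq:Markov_Chain_Lemma1}.

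I expect the only delicate point to be the bookkeeping in steps (i)--(iii), namely verifying that each invocation of weak union/decomposition is legitimate given the way the conditioning set grows; there is no genuine computation involved. A secondary technical point is the passage from conditioning on the random variables $Z,\boldsymbol{\calM}$ to conditioning on their fixed realizations $z,\mathbf{M}$, which is justified because these variables are finitely supported and every realization in $[K]\times\boldsymbol{\mathfrak{M}}$ relevant to the statement has positive probability.
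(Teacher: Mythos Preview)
Your proof is correct and uses essentially the same approach as the paper: both rest on the two ingredients \eqref{eq:independ_MZQX} and \eqref{eq:Func_Answers}. The paper's version is simply more compact---it upper-bounds the target by $I(\mathbf{A}_\calT,\mathbf{X}_{[K]}^n;\mathbf{Q}_{[N]\backslash\calT}\mid\cdots)$ and splits via the chain rule into two terms that vanish by \eqref{eq:independ_MZQX} and \eqref{eq:Func_Answers} respectively---whereas you unpack the same logic through the weak-union/decomposition rules, but the argument is identical.
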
 
\begin{proof}
We have,
\begin{align}
    &I\left(\mathbf{A}_{\calT};\mathbf{Q}_{[N]\backslash\calT}\Big|\mathbf{Q}_{\calT},\mathbf{Y}_{[K],\boldsymbol{\calM}}^n,\mathbf{X}_{\calT'}^n,Z=z,\boldsymbol{\calM}=\mathbf{M}\right)\nonumber\\
    &\mathop\leq\limits^{(a)}I\left(\mathbf{A}_{\calT},\mathbf{X}_{[K]}^n;\mathbf{Q}_{[N]\backslash\calT}\Big|\mathbf{Q}_{\calT},\mathbf{Y}_{[K],\boldsymbol{\calM}}^n,\mathbf{X}_{\calT'}^n\right.\nonumber\\
    &\qquad\qquad,Z=z,\boldsymbol{\calM}=\mathbf{M}\Big)\nonumber\\
    &\mathop=\limits^{(b)}I\left(\mathbf{X}_{[K]}^n;\mathbf{Q}_{[N]\backslash\calT}\Big|\mathbf{Q}_{\calT},\mathbf{Y}_{[K],\boldsymbol{\calM}}^n,\mathbf{X}_{\calT'}^n,Z=z,\boldsymbol{\calM}=\mathbf{M}\right)\nonumber\\
    &\quad+I\left(\mathbf{A}_{\calT};\mathbf{Q}_{[N]\backslash\calT}\Big|\mathbf{X}_{[K]}^n,\mathbf{Q}_{\calT},\mathbf{Y}_{[K],\boldsymbol{\calM}}^n,Z=z,\boldsymbol{\calM}=\mathbf{M}\right),\label{eq:Lemma1_Generl_Term}
\end{align}where $(a)$ and $(b)$ hold by the chain rule and non-negativity of the mutual information. The first term on the \ac{RHS} of \eqref{eq:Lemma1_Generl_Term} is equal to zero because of \eqref{eq:independ_MZQX} and the second the term on the \ac{RHS} of \eqref{eq:Lemma1_Generl_Term} is also equal to zero from \eqref{eq:Func_Answers}. 
\end{proof}

\begin{lemma}
\label{lemma:Changing_Z}
For each $\mathbf{M}\in\boldsymbol{\mathfrak{M}}$, $z,z'\in[K]$, $\calT'\subseteq[N]$, and $\calT\subseteq[N]$ such that $\abs{\calT}=T$, we have
\begin{align}
    &H\left(\mathbf{A}_{\calT}\Big|\mathbf{X}_{\calT'}^n,\mathbf{Q}_{\calT},\mathbf{Y}_{[K],\boldsymbol{\calM}}^n,Z=z,\boldsymbol{\calM}=\mathbf{M}\right)\nonumber\\
    &\quad=H\left(\mathbf{A}_{\calT}\Big|\mathbf{X}_{\calT'}^n,\mathbf{Q}_{\calT},\mathbf{Y}_{[K],\boldsymbol{\calM}}^n,Z=z',\boldsymbol{\calM}=\mathbf{M}\right)-o(n).\label{eq:Index_Exchange_Lemma}
\end{align}
\end{lemma}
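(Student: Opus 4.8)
The plan is to show that the conditional entropy of $\mathbf{A}_\calT$, given the queries $\mathbf{Q}_\calT$ of any $T$ colluding servers, the side information $\mathbf{Y}_{[K],\boldsymbol{\calM}}^n$, and an arbitrary subset $\mathbf{X}_{\calT'}^n$ of the files, is essentially insensitive to the value $z$ of the desired file index, as long as the mapping $\boldsymbol{\calM}$ is held fixed. The key tool is the privacy constraint \eqref{eq:Sec_Constraint}, which guarantees that the colluding servers' view $(\mathbf{Q}_\calT,\mathbf{A}_\calT,\mathbf{X}_{[K]}^n)$ is statistically independent of $(Z,\boldsymbol{\calM})$; in particular the conditional distribution of $(\mathbf{Q}_\calT,\mathbf{A}_\calT)$ given $\mathbf{X}_{[K]}^n$ does not depend on $(z,\mathbf{M})$. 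The subtlety is that on the left-hand side of \eqref{eq:Index_Exchange_Lemma} we are \emph{not} conditioning on all of $\mathbf{X}_{[K]}^n$, only on the sub-tuple $\mathbf{X}_{\calT'}^n$, together with the side-information tuple $\mathbf{Y}_{[K],\boldsymbol{\calM}}^n$; and the joint law of $\big(\mathbf{Y}_{[K],\boldsymbol{\calM}}^n, \mathbf{X}_{\calT'}^n, \mathbf{Q}_\calT, \mathbf{A}_\calT\big)$ conditioned on $\{Z=z,\boldsymbol{\calM}=\mathbf{M}\}$ does in principle shift with $z$, because which noisy observation $Y_{z,\cdot}^n$ is ``the desired one'' changes. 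The point of the lemma is that this shift costs only $o(n)$ in entropy, and the reason is decodability (Fano): conditioned on $\{Z=z,\boldsymbol{\calM}=\mathbf{M}\}$, the answers together with the side information and queries determine $X_z^n$ up to $o(n)$ bits of uncertainty by \eqref{eq:Fano_NNL}.

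Concretely, I would argue as follows. Fix $\mathbf{M}$, $z$, $z'$, $\calT'$, $\calT$. First, enlarge the conditioning on the left of \eqref{eq:Index_Exchange_Lemma} to include $\mathbf{X}_{[K]}^n$ — i.e., compare
\[
H\big(\mathbf{A}_\calT\,\big|\,\mathbf{X}_{[K]}^n,\mathbf{Q}_\calT,\mathbf{Y}_{[K],\boldsymbol{\calM}}^n,Z=z,\boldsymbol{\calM}=\mathbf{M}\big)
\]
for the two values $z$ and $z'$: by the privacy constraint \eqref{eq:Sec_Constraint}, conditioning on $\mathbf{X}_{[K]}^n$ makes $(\mathbf{Q}_\calT,\mathbf{A}_\calT)$ independent of $(Z,\boldsymbol{\calM})$, and $\mathbf{Y}_{[K],\boldsymbol{\calM}}^n$ is a deterministic-channel image of $\mathbf{X}_{[K]}^n$ (given $\boldsymbol{\calM}=\mathbf{M}$, the side information is a fixed memoryless channel applied to the fixed files), so this fully-conditioned entropy is \emph{exactly} the same for $z$ and $z'$. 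Second, I would control the gap between the partially-conditioned entropy on the left of \eqref{eq:Index_Exchange_Lemma} and this fully-conditioned version. Writing $H(\mathbf{A}_\calT\mid \mathbf{X}_{\calT'}^n,\cdots,Z=z,\boldsymbol{\calM}=\mathbf{M}) = H(\mathbf{A}_\calT\mid \mathbf{X}_{[K]}^n,\cdots) + I(\mathbf{A}_\calT;\mathbf{X}_{[K]\setminus\calT'}^n\mid \mathbf{X}_{\calT'}^n,\mathbf{Q}_\calT,\mathbf{Y}_{[K],\boldsymbol{\calM}}^n,Z=z,\boldsymbol{\calM}=\mathbf{M})$, the gap term is a mutual information; I'd bound it by noting that, from $\mathbf{A}_\calT$ one cannot learn the parts of $\mathbf{X}_{[K]\setminus\calT'}^n$ that are not already pinned down by $\mathbf{Y}^n$ except through the decodable file $X_z^n$, and the contribution of $X_z^n$ is itself at most $H(X_z^n\mid \mathbf{Q}_{[N]},\mathbf{A}_{[N]},\mathbf{Y}_{[K],\boldsymbol{\calM}}^n,Z=z,\boldsymbol{\calM}=\mathbf{M}) = o(n)$ via Fano's inequality \eqref{eq:Fano_NNL}. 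Combining, both partially-conditioned entropies equal the common fully-conditioned value up to $o(n)$, giving \eqref{eq:Index_Exchange_Lemma}.

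A cleaner route to the same end, which I'd present as the main line, is to avoid splitting and instead directly chain the two sides through the ``swap $z$ for $z'$'' operation realized on the \emph{query} side: since $\log|\boldsymbol{\calQ}_i|=o(n)$, the queries themselves contribute $o(n)$ to any entropy, and by privacy the joint law of $(\mathbf{Q}_\calT,\mathbf{A}_\calT)$ given $\mathbf{X}_{[K]}^n$ is $z$-independent, so an adversary holding $(\mathbf{Q}_\calT,\mathbf{A}_\calT,\mathbf{X}_{\calT'}^n,\mathbf{Y}_{[K],\boldsymbol{\calM}}^n)$ has a conditional law under $\{Z=z,\boldsymbol{\calM}=\mathbf{M}\}$ that differs from that under $\{Z=z',\boldsymbol{\calM}=\mathbf{M}\}$ only in which coordinate of the $\mathbf{Y}^n$-tuple is relabeled as ``desired'' — but $\mathbf{Y}_{[K],\boldsymbol{\calM}}^n$ is literally the same random vector in both cases, only re-indexed, so the only genuine difference is the identity of the file to be decoded, whose residual entropy is $o(n)$ by \eqref{eq:Fano_NNL}. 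I would then use Lemma~\ref{lemma:Markov_Chain} to drop $\mathbf{Q}_{[N]\setminus\calT}$ where needed so that the ``decodability'' input (which a priori uses all queries $\mathbf{Q}_{[N]}$ and answers $\mathbf{A}_{[N]}$) can be brought down to the $\calT$-view.

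\textbf{Main obstacle.} The delicate point is the second step: the left side of \eqref{eq:Index_Exchange_Lemma} conditions only on $\mathbf{X}_{\calT'}^n$ (a \emph{chosen} subset, possibly empty or all of $[K]$), not on $\mathbf{X}_{[K]}^n$, so privacy alone does not make it $z$-independent — one really does need the Fano bound \eqref{eq:Fano_NNL} to absorb the dependence on the desired-file index, and one must be careful that the Fano term is stated with respect to the full collection $(\mathbf{Q}_{[N]},\mathbf{A}_{[N]})$ whereas the lemma conditions on $(\mathbf{Q}_\calT,\mathbf{A}_\calT)$ — reconciling these is exactly where Lemma~\ref{lemma:Markov_Chain} and the $o(n)$ query-length assumption do their work. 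Getting the bookkeeping of which variables are conditioned on (and in which direction the $o(n)$ slack points) correct is the crux; the information-theoretic content is otherwise elementary.
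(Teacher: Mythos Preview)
Your proposal relies on the privacy constraint \eqref{eq:Sec_Constraint} and Fano's inequality \eqref{eq:Fano_NNL}, but the paper's proof uses neither, and your first route contains a concrete error. You claim the gap
\[
I\big(\mathbf{A}_\calT;\mathbf{X}_{[K]\setminus\calT'}^n\,\big|\, \mathbf{X}_{\calT'}^n,\mathbf{Q}_\calT,\mathbf{Y}_{[K],\boldsymbol{\calM}}^n,Z=z,\boldsymbol{\calM}=\mathbf{M}\big)
\]
is $o(n)$, on the grounds that ``from $\mathbf{A}_\calT$ one cannot learn the parts of $\mathbf{X}_{[K]\setminus\calT'}^n$ \ldots\ except through the decodable file $X_z^n$.'' This is false: the answers in any nontrivial \ac{PIR} scheme carry $\Theta(n)$ bits about \emph{every} file, not just the desired one --- that is exactly what the converse exploits when it later swaps $z$ through all indices in the recursion \eqref{eq:RQ_Converse_NNL}. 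Concretely, take $\calT'=\emptyset$: the fully-conditioned term $H(\mathbf{A}_\calT\mid\mathbf{X}_{[K]}^n,\mathbf{Q}_\calT,\ldots)$ vanishes by \eqref{eq:Func_Answers}, while the partially-conditioned term is $\Theta(n)$, so the gap is $\Theta(n)$. Your appeal to Fano controls only $H(X_z^n\mid\cdots)$, which is a different quantity.

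The paper's argument is simpler and sidesteps this. It shows directly that
\[
I\big(\mathbf{Q}_{[N]},\mathbf{A}_{[N]},\mathbf{X}_{[K]}^n,\mathbf{Y}_{[K],\boldsymbol{\calM}}^n;\,Z\,\big|\,\boldsymbol{\calM}=\mathbf{M}\big)=o(n),
\]
from which $I(\mathbf{A}_\calT;Z\mid\mathbf{Q}_\calT,\mathbf{X}_{\calT'}^n,\mathbf{Y}_{[K],\boldsymbol{\calM}}^n,\boldsymbol{\calM}=\mathbf{M})=o(n)$ and hence \eqref{eq:Index_Exchange_Lemma} follow by the chain rule. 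The four ingredients are: (i)~$\mathbf{A}_{[N]}$ is a function of $(\mathbf{Q}_{[N]},\mathbf{X}_{[K]}^n)$ and drops out by \eqref{eq:Func_Answers}; (ii)~$H(\mathbf{Q}_{[N]})=o(n)$ by the negligible-query-length assumption; (iii)~the law of $\mathbf{Y}_{[K],\boldsymbol{\calM}}^n$ given $\boldsymbol{\calM}=\mathbf{M}$ does not depend on $z$; (iv)~$\mathbf{X}_{[K]}^n$ is independent of $Z$ given $(\mathbf{Y}_{[K],\boldsymbol{\calM}}^n,\boldsymbol{\calM})$ by \eqref{eq:independ_MZQX}. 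The only carrier of $Z$ in the whole collection is $\mathbf{Q}_{[N]}$, whose total entropy is $o(n)$ --- neither privacy nor Fano is invoked. Your ``cleaner route'' does mention the $o(n)$ query length, but then detours through privacy and Fano; the detour is unnecessary and, as written, does not close the argument since the claim that the two conditional laws ``differ only in which coordinate is relabeled as desired'' is not what you need --- you need the quantitative statement that the resulting entropy shift is $o(n)$, and that comes from (ii), not from decodability.
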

\begin{proof}
We have,
\begin{align}
    &I\left(\mathbf{Q}_{[N]},\mathbf{A}_{[N]},\mathbf{X}_{[K]}^n,\mathbf{Y}_{[K],\boldsymbol{\calM}}^n;Z\big|\boldsymbol{\calM}=\mathbf{M}\right)\nonumber\\
    &\quad\mathop=\limits^{(a)}I\left(\mathbf{Q}_{[N]},\mathbf{X}_{[K]}^n,\mathbf{Y}_{[K],\boldsymbol{\calM}}^n;Z\big|\boldsymbol{\calM}=\mathbf{M}\right)\nonumber\\
    &\quad=I\left(\mathbf{X}_{[K]}^n,\mathbf{Y}_{[K],\boldsymbol{\calM}}^n;Z\big|\boldsymbol{\calM}=\mathbf{M}\right)+\nonumber\\
    &\quad I\left(\mathbf{Q}_{[N]};Z\big|\mathbf{X}_{[K]}^n,\mathbf{Y}_{[K],\boldsymbol{\calM}}^n,\boldsymbol{\calM}=\mathbf{M}\right)\nonumber\\
    &\quad\mathop\le\limits^{(b)}I\left(\mathbf{X}_{[K]}^n,\mathbf{Y}_{[K],\boldsymbol{\calM}}^n;Z\big|\boldsymbol{\calM}=\mathbf{M}\right)+o(n)\nonumber\\
    &\quad\mathop=\limits^{(c)}I\left(\mathbf{X}_{[K]}^n;Z\big|\mathbf{Y}_{[K],\boldsymbol{\calM}}^n,\boldsymbol{\calM}=\mathbf{M}\right)+o(n)\nonumber\\
    &\quad\mathop=\limits^{(d)}o(n),\label{eq:Two_MI_UC_2}
\end{align}where
\begin{itemize}
    \item[$(a)$] follows from the chain rule and \eqref{eq:Func_Answers};
    \item[$(b)$] follows since the queries are of negligible length compared to the file length $n$ and, therefore, $I\left(\mathbf{Q}_{[N]};Z\big|\mathbf{X}_{[K]}^n,\mathbf{Y}_{[K],\boldsymbol{\calM}}^n,\boldsymbol{\calM}=\mathbf{M}\right)\le H\left(\mathbf{Q}_{[N]}\right)=o(n)$;
    \item[$(c)$] follows from the chain rule and because
   \begin{align}
        &H\left(\mathbf{Y}_{[K],\boldsymbol{\calM}}^n\big|\boldsymbol{\calM}=\mathbf{M},Z\right)\nonumber\\
        &=\sum\limits_{z}\bbP[Z=z\big|\boldsymbol{\calM}=\mathbf{M}]H\left(\mathbf{Y}_{[K],\boldsymbol{\calM}}^n\big|Z=z,\boldsymbol{\calM}=\mathbf{M}\right)\nonumber\\
        &=\sum\limits_{z}\bbP[Z=z\big|\boldsymbol{\calM}=\mathbf{M}]H\left(\mathbf{Y}_{[K],\boldsymbol{\calM}}^n\right)\nonumber\\
        &=H\left(\mathbf{Y}_{[K],\boldsymbol{\calM}}^n\right),\nonumber
    \end{align}where the second equality follows since all the files are generated according to the same distribution and, thus, the entropy of the side information does not depend on $\mathbf{M}$ and $z$.
    \item[$(d)$] follows from \eqref{eq:independ_MZQX}.
    \end{itemize}
    Then, we have
\begin{align}
    o(n)&\mathop=\limits^{(a)}I\left(\mathbf{A}_{\calT},\mathbf{Q}_{\calT},\mathbf{Y}_{[K],\boldsymbol{\calM}}^n,\mathbf{X}_{\calT'}^n;Z\big|\boldsymbol{\calM}=\mathbf{M}\right)\nonumber\\
    &\mathop=\limits^{(b)}I\left(\mathbf{A}_{\calT};Z\big|\mathbf{Q}_{\calT},\mathbf{Y}_{[K],\boldsymbol{\calM}}^n,\mathbf{X}_{\calT'}^n,\boldsymbol{\calM}=\mathbf{M}\right)+o(n),
    \label{eq:Entr_Index_Change_2}
\end{align}where $(a)$ holds by \eqref{eq:Two_MI_UC_2}, and $(b)$ holds by \eqref{eq:Two_MI_UC_2} and the chain rule. 
Finally, \eqref{eq:Entr_Index_Change_2} implies \eqref{eq:Index_Exchange_Lemma}.
\end{proof}
Then, we bound $nR\left(\mathbf{Q}_{[N]}\right)$ as \eqref{eq:RQ_Converse_NNL} provided at the bottom of the next page, where
\begin{itemize}
    \item[$(a)$] follows since conditioning does not increase the entropy;
    \item[$(b)$] follows from Fano's inequality in \eqref{eq:Fano_NNL};
    \item[$(c)$] follows since for $i\triangleq\boldsymbol{\calM}(Z)$, we have
    \begin{align}
    &I\big(\mathbf{Q}_{[N]},\mathbf{Y}_{\bar{\mathbf{Z}},[D]}^n;X_{Z}^n\big|Y_{Z,i}^n,Z,\boldsymbol{\calM}=\mathbf{M}\big)\nonumber\\
    &= I\big(\mathbf{Q}_{[N]};X_Z^n\big|Y_{Z,i}^n,Z,\boldsymbol{\calM}=\mathbf{M}\big)+\nonumber\\
    &\quad I\big(Y_{\bar{\mathbf{Z}},[D]}^n;X_Z^n\big|\mathbf{Q}_{[N]},Y_{Z,i}^n,Z,\boldsymbol{\calM}=\mathbf{M}\big)\nonumber\\
    &=I\big(Y_{\bar{\mathbf{Z}},[D]}^n;X_Z^n\big|\mathbf{Q}_{[N]},Y_{Z,i}^n,Z,\boldsymbol{\calM}=\mathbf{M}\big)\nonumber\\
    &=0,\nonumber
    \end{align}where $\mathbf{Y}_{\bar{\mathbf{Z}},[D]}^n\triangleq\left(\mathbf{Y}_{\bar{Z}_1,[D]}^n,\mathbf{Y}_{\bar{Z}_2,[D]}^n,\dots,\mathbf{Y}_{\bar{Z}_{K-1},[D]}^n\right)$ and  $\mathbf{Y}_{\bar{Z}_i,[D]}^n\triangleq\left(Y_{\bar{Z}_i,1}^n,Y_{\bar{Z}_i,2}^n,\dots,Y_{\bar{Z}_i,D}^n\right)$, for $i\in[K-1]$, and the second equality holds because, from Fig.~\ref{fig:Chaining_UC}, $\mathbf{Q}_{[N]}-(Y_{Z,i}^n,Z,\boldsymbol{\calM})-X_Z^n$ forms a Markov chain, and the last equality holds because, from Fig.~\ref{fig:Chaining_UC}, $Y_{\bar{\mathbf{Z}},[D]}^n-(\mathbf{Q}_{[N]},Y_{Z,i}^n,Z,\boldsymbol{\calM})- X_Z^n$ forms a Markov chain;
    \item[$(d)$] follows since for $i\in[K]$ and $j\in[D]$, $H(X_i^n|Y_{i,j}^n)=H(X_1^n|Y_{1,j}^n)=nH(X_1|Y_{1,j})$ because $P_{X_i^n}=P_{X_1^n}=P_{X}^{\otimes n}$ and, therefore, we have $H\Big(X_z^n\big|Y_{z,j}^n,Z=z,\boldsymbol{\calM}=\mathbf{M}\Big)=H(X_z^n|Y_{z,j}^n)=H(X_1^n|Y_{1,j}^n)=nH(X_1|Y_{1,j})$, for any $z$ and $\mathbf{M}$;
    \item[$(e)$] follows from Lemma~\ref{lemma:Markov_Chain};
    \item[$(f)$] follows from Lemma~\ref{lemma:Changing_Z};
    \item[$(g)$] follows since one can lower bound the second term on the \ac{RHS} of \eqref{eq:d_repeat_UN} using the following inequality,
    \setcounter{equation}{23}
    \begin{subequations}
    \begin{align}
    &H\Big(\mathbf{A}_{[N]}\big|X_{\bar{z}_{K-1}}^n,\mathbf{Q}_{[N]}, \mathbf{Y}_{[K],\boldsymbol{\calM}}^n,Z=\bar{z}_{K-1},\boldsymbol{\calM}=\mathbf{M}\Big)\nonumber\\
    &\ge \frac{1}{\binom{N}{T}}\sum\limits_{\calT:\abs{\calT}=T}H\Big(\mathbf{A}_{\calT}\big|X_{\bar{z}_{K-1}}^n,\mathbf{Q}_{[N]}, \mathbf{Y}_{[K],\boldsymbol{\calM}}^n,\nonumber\\
    &\qquad Z=\bar{z}_{K-2},\boldsymbol{\calM}=\mathbf{M}\Big)\label{eq:repeating_Deviding}\\
    &\ge\frac{T}{N}H\Big(\mathbf{A}_{[N]}\big|X_{\bar{z}_{K-1}}^n,\mathbf{Q}_{[N]}, \mathbf{Y}_{[K],\boldsymbol{\calM}}^n,\nonumber\\
    &\qquad\qquad Z=\bar{z}_{K-2},\boldsymbol{\calM}=\mathbf{M}\Big),\label{eq:Hans}
    \end{align}
    \end{subequations}where \eqref{eq:repeating_Deviding} follows by writing \eqref{eq:One_Subset_T} for all $\binom{N}{T}$ different subsets $\calT\subset[N]$ with cardinality $T$ and adding up all these inequalities; and \eqref{eq:Hans} follows from Han's inequality~\cite[Theorem~17.6.1]{Cover_Book}.
\end{itemize}
\begin{figure*}[b!]
\hrulefill
\setcounter{equation}{22}
\begin{subequations}
    \begin{align}
&nR\left(\mathbf{Q}_{[N]}\right) \nonumber\\
&\ge H\left(\mathbf{A}_{[N]}\right)\nonumber\\
&\mathop\ge\limits^{(a)} H\Big(\mathbf{A}_{[N]}\big|\mathbf{Q}_{[N]}, \mathbf{Y}_{[K],\boldsymbol{\calM}}^n,Z=\bar{z}_{K-1},\boldsymbol{\calM}=\mathbf{M}\Big)\nonumber\\
&= H\Big(\mathbf{A}_{[N]},X_{\bar{z}_{K-1}}^n\big|\mathbf{Q}_{[N]}, \mathbf{Y}_{[K],\boldsymbol{\calM}}^n,Z=\bar{z}_{K-1},\boldsymbol{\calM}=\mathbf{M}\Big)\nonumber\\
&\quad-H\Big(X_{\bar{z}_{K-1}}^n\big|\mathbf{Q}_{[N]},\mathbf{A}_{[N]}, \mathbf{Y}_{[K],\boldsymbol{\calM}}^n,Z=\bar{z}_{K-1},\boldsymbol{\calM}=\mathbf{M}\Big)\label{eq:Repeatiing_UN}\\
&\mathop\ge\limits^{(b)} H\Big(\mathbf{A}_{[N]},X_{\bar{z}_{K-1}}^n\big|\mathbf{Q}_{[N]}, \mathbf{Y}_{[K],\boldsymbol{\calM}}^n,Z=\bar{z}_{K-1},\boldsymbol{\calM}=\mathbf{M}\Big)-o(n)\nonumber\\
&= H\Big(X_{\bar{z}_{K-1}}^n\big|\mathbf{Q}_{[N]}, \mathbf{Y}_{[K],\boldsymbol{\calM}}^n,Z=\bar{z}_{K-1},\boldsymbol{\calM}=\mathbf{M}\Big)\nonumber\\
&\quad+H\Big(\mathbf{A}_{[N]}\big|X_{\bar{z}_{K-1}}^n,\mathbf{Q}_{[N]}, \mathbf{Y}_{[K],\boldsymbol{\calM}}^n,Z=\bar{z}_{K-1},\boldsymbol{\calM}=\mathbf{M}\Big)-o(n)\nonumber\\
&\mathop=\limits^{(c)} H\Big(X_{\bar{z}_{K-1}}^n\big|Y_{\bar{z}_{K-1},D}^n,Z=\bar{z}_{K-1},\boldsymbol{\calM}=\mathbf{M}\Big)\nonumber\\
&\quad+H\Big(\mathbf{A}_{[N]}\big|X_{\bar{z}_{K-1}}^n,\mathbf{Q}_{[N]}, \mathbf{Y}_{[K],\boldsymbol{\calM}}^n,Z=\bar{z}_{K-1},\boldsymbol{\calM}=\mathbf{M}\Big)-o(n)\nonumber\\
&\mathop=\limits^{(d)} nH\Big(X_1\big|Y_{1,D}\Big)+H\Big(\mathbf{A}_{[N]}\big|X_{\bar{z}_{K-1}}^n,\mathbf{Q}_{[N]}, \mathbf{Y}_{[K],\boldsymbol{\calM}}^n,Z=\bar{z}_{K-1},\boldsymbol{\calM}=\mathbf{M}\Big)-o(n)\label{eq:d_repeat_UN}\\
&\ge nH\Big(X_1\big|Y_{1,D}\Big)+H\Big(\mathbf{A}_{\calT}\big|X_{\bar{z}_{K-1}}^n,\mathbf{Q}_{[N]}, \mathbf{Y}_{[K],\boldsymbol{\calM}}^n,Z=\bar{z}_{K-1},\boldsymbol{\calM}=\mathbf{M}\Big)-o(n)\nonumber\\
&\mathop=\limits^{(e)} nH\Big(X_1\big|Y_{1,D}\Big)+H\Big(\mathbf{A}_{\calT}\big|X_{\bar{z}_{K-1}}^n,\mathbf{Q}_{\calT}, \mathbf{Y}_{[K],\boldsymbol{\calM}}^n,Z=\bar{z}_{K-1},\boldsymbol{\calM}=\mathbf{M}\Big)-o(n)\nonumber\\
&\mathop=\limits^{(f)} nH\Big(X_1\big|Y_{1,D}\Big)+H\Big(\mathbf{A}_{\calT}\big|X_{\bar{z}_{K-1}}^n,\mathbf{Q}_{\calT}, \mathbf{Y}_{[K],\boldsymbol{\calM}}^n,Z=\bar{z}_{K-2},\boldsymbol{\calM}=\mathbf{M}\Big)-o(n)\nonumber\\
&\ge nH\Big(X_1\big|Y_{1,D}\Big)+H\Big(\mathbf{A}_{\calT}\big|X_{\bar{z}_{K-1}}^n,\mathbf{Q}_{[N]}, \mathbf{Y}_{[K],\boldsymbol{\calM}}^n,Z=\bar{z}_{K-2},\boldsymbol{\calM}=\mathbf{M}\Big)-o(n)\label{eq:One_Subset_T}\\
&\mathop\ge\limits^{(g)}nH\Big(X_1\big|Y_{1,D}\Big)+\frac{T}{N}H\Big(\mathbf{A}_{[N]}\big|X_{\bar{z}_{K-1}}^n,\mathbf{Q}_{[N]}, \mathbf{Y}_{[K],\boldsymbol{\calM}}^n,Z=\bar{z}_{K-2},\boldsymbol{\calM}=\mathbf{M}\Big)-o(n),
\label{eq:RQ_Converse_NNL}
\end{align}
\end{subequations}
\setcounter{equation}{24}
\end{figure*}
Repeating the steps described in \eqref{eq:RQ_Converse_NNL} starting from \eqref{eq:Repeatiing_UN} with $Z=z'$, where $z'$ changes from the first element till the last element of $\big[\bar{z}_{K-2},\bar{z}_{K-3},\dots,\bar{z}_{d_{[D-1]}}\big]$, to bound the second entropy term on the \ac{RHS} of \eqref{eq:RQ_Converse_NNL}, we obtain \eqref{eq:repetition}, provided at the bottom of the next page, where $(a)$ follows by induction and repeating the steps described in \eqref{eq:RQ_Converse_NNL} starting from \eqref{eq:Repeatiing_UN} with $Z=z'$, where $z'$ changes from the first element till the last element of $\left[\bar{z}_{d_{[D-1]}-1},\bar{z}_{d_{[D-1]}-2},\dots,\bar{z}_{1+d_{[i-1]}},z,\bar{z}_{d_{[i-1]}},\dots,\bar{z}_1\right]$, where $i\triangleq\mathbf{M}(z)$, and $(b)$ follows from~\eqref{eq:Func_Answers}.
\begin{figure*}[b!]
\hrulefill
\setcounter{equation}{24}
\begin{align}
    &R\left(\mathbf{Q}_{[N]}\right) \nonumber \\
    &\ge H\big(X_1|Y_{1,D}\big)+\frac{T}{N}\left[H\big(X_1|Y_{1,D}\big)+\frac{T}{N}\left[H\big(X_1|Y_{1,D}\big)+\dots+\frac{T}{N}\left[H\big(X_1|Y_{1,D}\big)\right.\right.\right.\nonumber\\
    & \left.\left.\left.\qquad+\frac{1}{n}H\Big(\mathbf{A}_{[N]}\big|\mathbf{X}_{\bar{\mathbf{Z}}_{\big[d_{[D-1]}:K-1\big]}}^n,\mathbf{Q}_{[N]}, \mathbf{Y}_{[K],\boldsymbol{\calM}}^n,Z=\bar{z}_{d_{[D-1]}-1},\boldsymbol{\calM}=\mathbf{M}\Big)\nonumber \right]\right]\right]-o(1)\\
    &= 
    H\big(X_1|Y_{1,D}\big)\left[1+\frac{T}{N}+\left(\frac{T}{N}\right)^2+\dots+\left(\frac{T}{N}\right)^{d_D-1}\right]\nonumber\\
    &\qquad+\frac{1}{n}\left(\frac{T}{N}\right)^{d_D}H\Big(\mathbf{A}_{[N]}\big|\mathbf{X}_{\bar{\mathbf{Z}}_{\big[d_{[D-1]}:K-1\big]}}^n,\mathbf{Q}_{[N]}, \mathbf{Y}_{[K],\boldsymbol{\calM}}^n,Z=\bar{z}_{d_{[D-1]}-1},\boldsymbol{\calM}=\mathbf{M}\Big)-o(1)\nonumber\\
    &\mathop\ge\limits^{(a)}\sum\limits_{\ell=1}^{D}H\big(X_1|Y_{1,\ell}\big)\left(\frac{T}{N}\right)^{\sum_{i=\ell+1}^{D}d_i}\left[1+\frac{T}{N}+\left(\frac{T}{N}\right)^2+\dots+\left(\frac{T}{N}\right)^{d_\ell-1}\right]\nonumber\\
    &\qquad+\frac{1}{n}\left(\frac{T}{N}\right)^{K-1}H\Big(\mathbf{A}_{[N]}\big|\mathbf{X}_{[K]}^n,\mathbf{Q}_{[N]}, \mathbf{Y}_{[K],\boldsymbol{\calM}}^n,Z=\bar{z}_1,\boldsymbol{\calM}=\mathbf{M}\Big)-o(1)\nonumber\\
    &\mathop=\limits^{(b)}\sum\limits_{\ell=1}^{D}H\big(X_1|Y_{1,\ell}\big)\left(\frac{T}{N}\right)^{\sum_{i=\ell+1}^{D}d_i}\left[1+\frac{T}{N}+\left(\frac{T}{N}\right)^2+\dots+\left(\frac{T}{N}\right)^{d_\ell-1}\right]-o(1)\label{eq:repetition}
\end{align}
\setcounter{equation}{25}
\end{figure*}

\subsection{Achievability proof}
\label{sec:Achie_Proof_NNL}
A high-level description of the achievability scheme is provided after Theorem~\ref{thm:Capacity_PIRNSI}. 
\subsubsection{Preliminaries}
\label{sec:Achie_Prelim}
Our achievability is based on nested source coding, which we define first and then use in our achievability proof as a black box. Consider a discrete memoryless source $(\mathcal{X}_1 \times \bigtimes_{\ell \in [D]}\calY_{1,\ell},P_{X_1,\mathbf{Y}_{1,[D]}})$ with $D+1$ components. Assume that $(X_1^n,\mathbf{Y}_{1,[D]}^n)$ are \ac{iid} samples of this source. Then, consider an encoder $\calE:\calX_1^n\to\boldsymbol{\calJ}_{[D]}^{(1)}$, that maps the sequence $X_1^n$ to $\mathbf{J}_{[D]}^{(1)}\triangleq (J_{\ell}^{(1)})_{\ell \in [D]}$, where the asymptotic rate of $J_{\ell}^{(1)}$, for $\ell\in[D]$, is $H(X_1|Y_{1,\ell})-H(X_1|Y_{1,\ell-1})$, with the convention $H(X_1|Y_{1,0})=0$. Consider also $D$ decoders $\calD_\ell:\boldsymbol{\calJ}_{[\ell]}^{(1)}\times\calY_{1,\ell}^n\to\calX_1^n$, for $\ell\in[D]$, where the Decoder $\calD_\ell$ assigns an estimate $\hat{X}_1^n$ to $(\mathbf{J}_{[\ell]}^{(1)},Y_{1,\ell}^n)$ such that $\bbP\big[\hat{X}_1^n\ne X_1^n\big]\xrightarrow[n\to\infty]{}0$. In Appendix~\ref{app:Nested_RB}, we explain how to obtain such a scheme with nested random binning and how to implement it with nested polar codes when the side information at the decoders forms a Markov chain.

Assume that each file is of length $n=N^K$, with symbols in a sufficiently large finite field $\bbF_q$. Fix $\delta>0$. 


\subsubsection{Nested Source Coding}
For every file $x_i^n$, $i\in[K]$, generate $D$ nested source codes as in Section~\ref{sec:Achie_Prelim}. For each test channel $\ell\in[D]$, we denote the source code of the file $x_i^n$, $i\in[K]$, by $j_\ell^{(i)}\in\calJ_\ell\eqdef\Big[q_\ell^n\Big]$,  where $q_\ell\triangleq q^{R_\ell}$. Here,  $R_0\triangleq0$ and for $t\in[K]$,
\begin{align}
    \sum_{i=1}^{\ell}R_i= H\big(X_t|Y_{t,\ell}\big)+\delta.\label{eq:SW_Round_ell_N}
\end{align}

We refer to $\mathbf{SC}_\ell\triangleq\Big(j_\ell^{(1)},\dots,j_\ell^{(K)}\Big)_{\ell\in[D]}$ as the database $\ell$. The query is constructed to retrieve each one of the $\mathbf{SC}_\ell$ databases in ascending order.

\subsubsection{Query Structure Construction} 
The client constructs the query in $D$ different levels. In the first level, we apply to the database $\mathbf{SC}_1$  the same query structure as in~\cite{SunJafar172}, which consists of $K$ sublevels. In the level $\ell\in[2:D]$, we apply to the database $\mathbf{SC}_\ell$ the same query structure as in~\cite{ChenWangJafar20}, which also consists of $K$ sublevels. Specifically, as in \cite{ChenWangJafar20}, the $k_\ell$\ts{th} sublevel consists of sums of $k_\ell$ symbols, which are called $k_\ell$-sums. There are $\binom{K}{k_\ell}$ different types of $k_\ell$-sums and $(N-T)^{k_\ell-1}T^{K-k_\ell}$ different instances of each type in the $k_\ell$\ts{th} sublevel. Hence, the total number of symbols that will be downloaded from each server is $\sum_{k_\ell=1}^K\binom{K}{k_\ell}(N-T)^{k_\ell-1}T^{K-k_\ell}$. 

\subsubsection{Query Specialization} For $\ell\in[D]$, we do the query structure construction and query specialization without considering the availability of any side information as in \cite{ChenWangJafar20}, and denote this scheme by $\Pi_\ell$. Then, we do query redundancy removal based on the availability of noiseless side information similar to~\cite{ChenWangJafar20}. Specifically, after each level $\ell\in[D]$, the client is able to recover the $d_{\ell}$ files that are associated with the $\ell$\ts{th} test channel, and therefore considering the files that are decoded in the previous levels, the client knows $\mathbf{X}_{[d_{[\ell]}]}^n$ and, therefore, $\left(j_{\ell+1}^{(1)},\dots,j_{\ell+1}^{\left(d_{[\ell]}\right)}\right)$, which is used as noiseless side information to recover  $\left(j_{\ell+1}^{\left(d_{[\ell]}+1\right)},\dots,j_{\ell+1}^{(K)}\right)$ in level $\ell +1$. For level $\ell=1$, the client does not have any noiseless side information and cannot perform query redundancy removal but, for level $\ell\in\sbra{2}{D}$, since it has recovered $\sum_{t=1}^{\ell-1} d_t$ files, the client can perform query redundancy removal. For each $\ell\in[D]$ and for each server, let $p_{\ell,1}$ denote the number of symbols downloaded with $\Pi_\ell$. Out of these $p_{\ell,1}$ symbols, we denote by $p_{\ell,2}<p_{\ell,1}$ the number of symbols that the client already knows by decoding some of the files in the previous levels. For $\ell\in[D]$, let $\mathbf{U}_{\ell,j}\in\bbF_{q_\ell}^{p_{\ell,1}}$ 
denote the symbols downloaded from the $j$\ts{th} server with $\Pi_\ell$. For each server, use a systematic $(2p_{\ell,1}-p_{\ell,2},p_{\ell,1})$ \ac{MDS} code~\cite{Lin_Costelloo}, with generator matrix $\mathbf{G}_{(2p_{\ell,1}-p_{\ell,2})\times p_{\ell,1}}=[\mathbf{V}_{p_{\ell,1}\times(p_{\ell,1}-p_{\ell,2})}|\mathbf{I}_{p_{\ell,1}\times p_{\ell,1}}]^\intercal$ to encode the $p_{\ell,1}$ symbols into $2p_{\ell,1}-p_{\ell,2}$ symbols, of which $p_{\ell,1}$ are systematic, and $p_{\ell,1}-p_{\ell,2}$ are parity symbols, such that  it is sufficient to download $\mathbf{V}_{p_{\ell,1}\times(p_{\ell,1}-p_{\ell,2})}^\intercal \mathbf{U}_{\ell,j}$. For level  $\ell=1$, since the client does not have any noiseless side information about $\mathbf{SC}_1$, $p_{1,2}=0$.

\subsubsection{Decoding} For $\ell\in[D]$, after reconstructing $(j_i^{(t)})_{i\in[\ell]}$, for $t\in\boldsymbol{\calM}^{-1}(\ell)$, given $\mathbf{Y}_{[K],\boldsymbol{\calM}}^n$, the client forms $\hat{X}_t^n$, an estimate of the sequence $X_t^n$ by using the nested source decoders with \eqref{eq:SW_Round_ell_N}, and thus  
$\bbP\big[\hat{X}_t^n\ne X_t^n\big]\xrightarrow[n\to\infty]{}0$. 
\begin{figure*}[b!]
\hrulefill
\setcounter{equation}{33}
\begin{align}
        &\mathbf{M}_1\triangleq\tau_{z,\bar{z}_{K-1}}\circ\mathbf{M}=\begin{pmatrix}
(\bar{z}_{1} : \bar{z}_{d_1})  & \dots& \Big(z,\bar{z}_{1+d_{[i-1]}},\dots, \bar{z}_{-1+d_{[i]}}\Big)& \dots& \Big(\bar{z}_{d_{[D-1]}} : \bar{z}_{-1+d_{[D]}}\Big) \\
(1,\dots,1)  & \dots & \big(D,i,\dots,i\big)& \dots & \big(D,\dots,D,i\big)
\end{pmatrix},\label{eq:permu_S}
\end{align}
\setcounter{equation}{26}
\end{figure*}
\subsubsection{Rate Calculation}
Similar to~\cite{ChenWangJafar20}, for the scheme $\Pi_\ell$, the total number of downloaded symbols from each server is $p_{\ell,1}=\sum_{k_\ell=1}^K\binom{K}{k_\ell}(N-T)^{k_\ell-1}T^{K-k_\ell}$, $\ell\in[D]$ and out of these $p_{\ell,1}$ symbols $p_{\ell,2}=\sum_{k_\ell=1}^{d_{[\ell-1]}}\binom{d_{[\ell-1]}}{k_\ell}(N-T)^{k_\ell-1}T^{K-k_\ell}$ symbols are already known at the client, where $d_{[\ell-1]}\triangleq\sum_{i=1}^{\ell-1}d_i$ and $d_{[0]}=0$. Then, we have,
\begin{subequations}\label{eq:pis}
\begin{align}
    p_{\ell,1}&=\sum\limits_{k_\ell=1}^K\binom{K}{k_\ell}(N-T)^{k_\ell-1}T^{K-k_\ell}\nonumber\\
    &=\frac{\sum\limits_{k_\ell=0}^K\binom{K}{k_\ell}(N-T)^{k_\ell}T^{K-k_\ell}-T^K}{N-T}\nonumber\\
    &=\frac{N^K-T^K}{N-T},\label{eq:p_1}
\end{align}
similarly, 
\begin{align}
    p_{\ell,2}&=\sum\limits_{k_\ell=1}^{d_{[\ell-1]}}\binom{d_{[\ell-1]}}{k_\ell}(N-T)^{k_\ell-1}T^{K-k_\ell}\nonumber\\
    &=T^{K-d_{[\ell-1]}}\sum\limits_{k_\ell=1}^{d_{[\ell-1]}}\binom{d_{[\ell-1]}}{k_\ell}(N-T)^{k_\ell-1}T^{d_{[\ell-1]}-k_\ell}\nonumber\\
    &=\frac{T^{K-d_{[\ell-1]}}\big(N^{d_{[\ell-1]}}-T^{d_{[\ell-1]}}\big)}{N-T}.\label{eq:p_2}
\end{align} 
\end{subequations}
Therefore, the normalized download cost for the level $\ell$ is,
\begin{align}
    R^{(\ell)}&=\frac{R_\ell N(p_{\ell,1}-p_{\ell,2})}{n}\nonumber\\
    &\mathop=\limits^{(a)}\frac{R_\ell N(p_{\ell,1}-p_{\ell,2})}{N^K}\nonumber\\
    &\mathop=\limits^{(b)}\frac{R_\ell\left(1-\Big(\frac{T}{N}\Big)^{K-d_{[\ell-1]}}\right)}{\big(1-\frac{T}{N}\big)}\nonumber\\
    &\mathop=\limits^{(c)}\Big(H\big(X_1|Y_{1,\ell}\big)-H\big(X_1|Y_{1,\ell-1}\big)\Big)\sum\limits_{i=0}^{K-d_{[\ell-1]}-1}\left(\frac{T}{N}\right)^i,\label{eq:Rl_rate}
\end{align}where
\begin{itemize}
    \item[$(a)$] follows since $n=N^K$;
    \item[$(b)$] follows from \eqref{eq:pis};
    \item[$(c)$] follows from \eqref{eq:SW_Round_ell_N}.
\end{itemize}
Therefore, the total normalized download cost is,
\begin{align}
    \sum\limits_{\ell=1}^DR^{(\ell)}\nonumber&=\sum\limits_{\ell=1}^D\Big(H\big(X_1|Y_{1,\ell}\big)-H\big(X_1|Y_{1,\ell-1}\big)\Big)\times\nonumber\\
    &\qquad\sum\limits_{i=0}^{K-d_{[\ell-1]}-1}\left(\frac{T}{N}\right)^i\nonumber\\
    &=\sum\limits_{\ell=1}^{D}H\big(X_1|Y_{1,\ell}\big)\left(\frac{T}{N}\right)^{K-d_{[\ell]}}\Squad\sum\limits_{i=0}^{d_{\ell}-1}\left(\frac{T}{N}\right)^i.\nonumber
\end{align}

\subsubsection{Privacy Analysis}\label{sec:privacyachievability}
Note that for all the $D$ levels, the client does not use any side information to construct the queries. Indeed, the systematic \ac{MDS} codes of all the levels in the query redundancy removal do not depend on the side information that the client obtains after each level. The decoding starts when the client collects all the answers from the servers for all the $D$ levels. Thus, the side information is used only when the client collects all the answers from the servers for all the $D$ levels. Therefore, privacy is inherited from the privacy of the schemes in~\cite{ChenWangJafar20} and~\cite{SunJafar172}.

\section{Proof of Theorem~\ref{thm:Capacity_CPIRNSI}}
\label{sec:Proof_Capacity_CPIRNSI}
\subsection{Converse proof}
\label{sec:Converse_Proof_CPIRNSI}
The following equations and lemma are essential for the converse proof. Considering the probability of error in \eqref{eq:Prob_Error}, by Fano’s inequality~\cite[Section~2.11]{Cover_Book}, for every $i\in[D]$, we have
\begin{align}
     &\max\limits_{z\in[K]}\,\,\,\max\limits_{\substack{\mathbf{M}\in\mathfrak{M}}}H\big(X_Z^n|\mathbf{Q}_{[N]},\mathbf{A}_{[N]},\mathbf{Y}_{[K],\boldsymbol{\calM}}^n,\boldsymbol{\calM}(Z)=i,\nonumber\\
    &\qquad Z=z,\boldsymbol{\calM}=\mathbf{M}\big)=o(n).\label{eq:Fano_C}
\end{align}
\begin{lemma}
\label{lemma:Changing_Z_C}
For all $z,z'\in[K]$, $i\in[D]$, $\calT,\calT'\subseteq[N]$, and $\mathbf{M},\mathbf{M}'\in\boldsymbol{\mathfrak{M}}$, such that $\mathbf{M}(z) = \mathbf{M}'(z')$, 
\begin{align}
    &H\left(\mathbf{A}_{\calT}\Big|\mathbf{Q}_{\calT},\mathbf{X}_{\calT'}^n,\mathbf{Y}_{[K],\boldsymbol{\calM}}^n,Z=z,\boldsymbol{\calM}(Z)=i,\boldsymbol{\calM}=\mathbf{M}\right)\nonumber\\
    &\quad=H\Big(\mathbf{A}_{\calT}\Big|\mathbf{Q}_{\calT},\mathbf{X}_{\calT'}^n,\mathbf{Y}_{[K],\boldsymbol{\calM}}^n,Z=z',\nonumber\\
    &\qquad\qquad\boldsymbol{\calM}(Z)=i,\boldsymbol{\calM}=\mathbf{M}'\Big)-o(n).\label{eq:Index_Exchange_Lemma_C}
\end{align}
\end{lemma}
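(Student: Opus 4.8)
The statement is the disclosed‑statistics analogue of Lemma~\ref{lemma:Changing_Z}, and the plan is to replay that lemma's three‑step argument, everywhere conditioning in addition on the event $\{\boldsymbol{\calM}(Z)=i\}$ and replacing the privacy constraint \eqref{eq:Sec_Constraint} by \eqref{eq:Sec_Constraint_C}.

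\textbf{Steps 1--2: a conditional leakage bound.} First I would show $I\big(\mathbf{Q}_\calT,\mathbf{A}_\calT,\mathbf{X}_{\calT'}^n,\mathbf{Y}_{[K],\boldsymbol{\calM}}^n;Z,\boldsymbol{\calM}\mid\boldsymbol{\calM}(Z)\big)=o(n)$: by the chain rule it equals $I(\mathbf{Q}_\calT,\mathbf{A}_\calT,\mathbf{X}_{\calT'}^n;Z,\boldsymbol{\calM}\mid\boldsymbol{\calM}(Z))+I(\mathbf{Y}_{[K],\boldsymbol{\calM}}^n;Z,\boldsymbol{\calM}\mid\mathbf{Q}_\calT,\mathbf{A}_\calT,\mathbf{X}_{\calT'}^n,\boldsymbol{\calM}(Z))$, and the first summand is $0$ since it is at most $I(\mathbf{Q}_\calT,\mathbf{A}_\calT,\mathbf{X}_{[K]}^n;Z,\boldsymbol{\calM}\mid\boldsymbol{\calM}(Z))=0$ by \eqref{eq:Sec_Constraint_C}, while the second is at most $H(Z,\boldsymbol{\calM}\mid\boldsymbol{\calM}(Z))\le\log K+\log|\boldsymbol{\mathfrak{M}}|=O(1)=o(n)$. (One could instead mirror \eqref{eq:Two_MI_UC_2} verbatim: drop $\mathbf{A}_{[N]}$ by \eqref{eq:Func_Answers}, bound the query contribution by $H(\mathbf{Q}_{[N]})=o(n)$, use that $Z$ is independent of $(\mathbf{X}_{[K]}^n,\mathbf{Y}_{[K],\boldsymbol{\calM}}^n)$ given $\boldsymbol{\calM}$, and pay an $H(\boldsymbol{\calM})=O(1)$ term for the dependence on $\boldsymbol{\calM}$.) Peeling $(\mathbf{Q}_\calT,\mathbf{X}_{\calT'}^n,\mathbf{Y}_{[K],\boldsymbol{\calM}}^n)$ off the front of this tuple by the chain rule and discarding a nonnegative term — the analogue of the passage from \eqref{eq:Two_MI_UC_2} to \eqref{eq:Entr_Index_Change_2} — then gives $I(\mathbf{A}_\calT;Z,\boldsymbol{\calM}\mid\mathbf{Q}_\calT,\mathbf{X}_{\calT'}^n,\mathbf{Y}_{[K],\boldsymbol{\calM}}^n,\boldsymbol{\calM}(Z))=o(n)$; since $\mathbb{P}[\boldsymbol{\calM}(Z)=i]=d_i/K\ge 1/K$, fixing the value $i$ costs at most a factor $K$, so $I(\mathbf{A}_\calT;Z,\boldsymbol{\calM}\mid\mathbf{Q}_\calT,\mathbf{X}_{\calT'}^n,\mathbf{Y}_{[K],\boldsymbol{\calM}}^n,\boldsymbol{\calM}(Z)=i)=o(n)$.

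\textbf{Step 3: from leakage to the entropy identity.} Expanding this last conditional mutual information, $H(\mathbf{A}_\calT\mid\mathbf{Q}_\calT,\mathbf{X}_{\calT'}^n,\mathbf{Y}_{[K],\boldsymbol{\calM}}^n,\boldsymbol{\calM}(Z)=i)$ equals, up to $o(n)$, the average of $H(\mathbf{A}_\calT\mid\mathbf{Q}_\calT,\mathbf{X}_{\calT'}^n,\mathbf{Y}_{[K],\boldsymbol{\calM}}^n,Z=z',\boldsymbol{\calM}=\mathbf{M}')$ over all pairs $(z',\mathbf{M}')$ with $\mathbf{M}'(z')=i$, weighted by $\mathbb{P}[Z=z',\boldsymbol{\calM}=\mathbf{M}'\mid\boldsymbol{\calM}(Z)=i]$. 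To pass from this averaged identity to the per‑pair identity \eqref{eq:Index_Exchange_Lemma_C} (applied once to $(z,\mathbf{M})$ and once to $(z',\mathbf{M}')$) I would use that these weights are uniform and bounded away from $0$, together with the fact that each of the averaged entropies is at most $H(\mathbf{A}_\calT\mid\mathbf{Q}_\calT,\mathbf{X}_{\calT'}^n,\mathbf{Y}_{[K],\boldsymbol{\calM}}^n,\boldsymbol{\calM}(Z)=i)+o(n)$: this follows because $\mathbf{A}_\calT=\calE_\calT(\mathbf{Q}_\calT,\mathbf{X}_{[K]}^n)$ is deterministic, the posterior law of $\mathbf{X}_{[K]}^n$ given $(\mathbf{Q}_\calT,\mathbf{X}_{\calT'}^n,\mathbf{Y}_{[K],\boldsymbol{\calM}}^n,Z,\boldsymbol{\calM})$ does not depend on $Z$ (the query enters it only through a factor not involving the files), and, by \eqref{eq:Sec_Constraint_C}, the joint law of $(\mathbf{Q}_\calT,\mathbf{A}_\calT,\mathbf{X}_{[K]}^n)$ given $\{Z=z,\boldsymbol{\calM}=\mathbf{M}\}$ is the same for every $(z,\mathbf{M})$ with $\mathbf{M}(z)=i$.

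\textbf{Main obstacle.} The new feature compared with Lemma~\ref{lemma:Changing_Z} is that one must exchange the whole mapping $\boldsymbol{\calM}$, not merely $Z$, while $\mathbf{Y}_{[K],\boldsymbol{\calM}}^n$ itself depends on $\boldsymbol{\calM}$; this is benign because $H(\boldsymbol{\calM})\le\log|\boldsymbol{\mathfrak{M}}|$ and $H(\mathbf{Y}_{[K],\boldsymbol{\calM}}^n\mid\boldsymbol{\calM})=n\sum_{\ell}d_\ell H(X_1|Y_{1,\ell})$ are constants that do not vary over $\boldsymbol{\mathfrak{M}}$, so inserting or removing $\boldsymbol{\calM}$ from a conditioning costs only $o(n)$. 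I expect the genuinely delicate point — already present, and resolved in the same way, in Lemma~\ref{lemma:Changing_Z} — to be Step~3's passage from the averaged identity to the per‑realization identity, which rests on $Z$ and $\boldsymbol{\calM}$ being uniformly distributed, so that no single conditional value can deviate from the average by more than $o(n)$, and on \eqref{eq:Sec_Constraint_C} fixing the joint law of the queries, answers and files once $\boldsymbol{\calM}(Z)$ is revealed.
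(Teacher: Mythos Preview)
Your approach is correct, but it takes a considerably longer route than the paper's. The paper's entire proof of Lemma~\ref{lemma:Changing_Z_C} is the single observation that $H(Z)\le\log K$ and $H(\boldsymbol{\calM})\le\log|\boldsymbol{\mathfrak{M}}|$, so that
\[
I\big(\mathbf{A}_\calT;Z,\boldsymbol{\calM}\,\big|\,\mathbf{Q}_\calT,\mathbf{X}_{\calT'}^n,\mathbf{Y}_{[K],\boldsymbol{\calM}}^n,\boldsymbol{\calM}(Z)=i\big)\le H(Z,\boldsymbol{\calM})=O(1)=o(n),
\]
and then states that this implies~\eqref{eq:Index_Exchange_Lemma_C}. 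In particular, the paper does \emph{not} invoke the privacy constraint~\eqref{eq:Sec_Constraint_C}, does not replay the chain-rule computation of~\eqref{eq:Two_MI_UC_2}, and does not split off the side-information term---the cardinality bound on $(Z,\boldsymbol{\calM})$ alone suffices. Your Steps~1--2 reach the same endpoint but through an unnecessary detour: the decomposition into a ``privacy term'' and an ``$H(Z,\boldsymbol{\calM})$ term'' is redundant because the second bound already dominates the whole mutual information.

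What your proposal adds over the paper is Step~3, where you try to justify the passage from the averaged identity (mutual information $o(n)$) to the per-realization identity~\eqref{eq:Index_Exchange_Lemma_C}. The paper, as in Lemma~\ref{lemma:Changing_Z}, leaves this step implicit. Your argument---uniform weights bounded away from zero, combined with an upper bound on each conditional entropy coming from the fact that the joint law of $(\mathbf{Q}_\calT,\mathbf{A}_\calT,\mathbf{X}_{[K]}^n)$ is invariant over $\{(z,\mathbf{M}):\mathbf{M}(z)=i\}$ by~\eqref{eq:Sec_Constraint_C}---is the right idea and supplies detail the paper omits; it is here, and only here, that the privacy constraint is actually doing work in your proof.
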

\begin{proof}
Since $H(Z)=\log K$ and $H(\boldsymbol{\calM})\le\log\abs{\boldsymbol{\mathfrak{M}}}$, then for all $i\in[D]$, and $\calT,\calT'\subseteq[N]$, we have
\begin{align}
    I\left(\mathbf{A}_\calT;Z,\boldsymbol{\calM}\big|\mathbf{Q}_\calT,\mathbf{X}_{\calT'}^n,\mathbf{Y}_{[K],\boldsymbol{\calM}}^n,\boldsymbol{\calM}(Z)=i\right)=o(n).\label{eq:Two_MI_C_2}
\end{align}
\end{proof}
Consider $(z,\bar{z}_1,\bar{z}_2,\dots,\bar{z}_{K-1})$ a realization of ${\mathbf{Z}}$, and $\mathbf{M}$ a realization of $\boldsymbol{\calM}$ such that $\mathbf{M}(z)=i$. Then, we bound $nR(\mathbf{Q}_{[N]})$ as \eqref{eq:RQ_Converse_NNL_CP}, provided at the bottom of the next page,
\begin{figure*}[b!]
\hrulefill
\setcounter{equation}{31}
\begin{subequations}
\begin{align}
&nR(\mathbf{Q}_{[N]})\nonumber\\&\ge H(\mathbf{A}_{[N]})\nonumber\\
&\mathop\ge\limits^{(a)} H\Big(\mathbf{A}_{[N]}\big|\mathbf{Q}_{[N]},\mathbf{Y}_{[K],\boldsymbol{\calM}}^n,Z=z,\boldsymbol{\calM}(Z)=i,\boldsymbol{\calM}=\mathbf{M}\Big)\nonumber\\
&= H\Big(\mathbf{A}_{[N]},X_z^n\big|\mathbf{Q}_{[N]},\mathbf{Y}_{[K],\boldsymbol{\calM}}^n,Z=z,\boldsymbol{\calM}(Z)=i,\boldsymbol{\calM}=\mathbf{M}\Big)\nonumber\\
&\quad-H\Big(X_z^n\big|\mathbf{Q}_{[N]},\mathbf{A}_{[N]},\mathbf{Y}_{[K],\boldsymbol{\calM}}^n,Z=z,\boldsymbol{\calM}(Z)=i,\boldsymbol{\calM}=\mathbf{M}\Big)\nonumber\\
&\mathop\ge\limits^{(b)} H\Big(\mathbf{A}_{[N]},X_z^n\big|\mathbf{Q}_{[N]},\mathbf{Y}_{[K],\boldsymbol{\calM}}^n,Z=z,\boldsymbol{\calM}(Z)=i,\boldsymbol{\calM}=\mathbf{M}\Big)-o(n)\nonumber\\
&= H\Big(X_z^n\big|\mathbf{Q}_{[N]},\mathbf{Y}_{[K],\boldsymbol{\calM}}^n,Z=z,\boldsymbol{\calM}(Z)=i,\boldsymbol{\calM}=\mathbf{M}\Big)\nonumber\\
&\quad+H\Big(\mathbf{A}_{[N]}\big|X_z^n,\mathbf{Q}_{[N]},\mathbf{Y}_{[K],\boldsymbol{\calM}}^n,Z=z,\boldsymbol{\calM}(Z)=i,\boldsymbol{\calM}=\mathbf{M}\Big)-o(n)\nonumber\\
&\mathop=\limits^{(c)} H\Big(X_z^n\big|Y_{Z,i}^n,Z=z,\boldsymbol{\calM}(Z)=i,\boldsymbol{\calM}=\mathbf{M}\Big)\nonumber\\
&\quad+H\Big(\mathbf{A}_{[N]}\big|X_z^n,\mathbf{Q}_{[N]},\mathbf{Y}_{[K],\boldsymbol{\calM}}^n,Z=z,\boldsymbol{\calM}(Z)=i,\boldsymbol{\calM}=\mathbf{M}\Big)-o(n)\nonumber\\
&\mathop=\limits^{(d)} nH(X_1|Y_{1,i})+H\Big(\mathbf{A}_{[N]}\big|X_z^n,\mathbf{Q}_{[N]},\mathbf{Y}_{[K],\boldsymbol{\calM}}^n,Z=z,\boldsymbol{\calM}(Z)=i,\boldsymbol{\calM}=\mathbf{M}\Big)-o(n)\label{eq:Conditional_step_d}\\
&\ge nH(X_1|Y_{1,i})+H\Big(\mathbf{A}_{\calT}\big|X_z^n,\mathbf{Q}_{[N]},\mathbf{Y}_{[K],\boldsymbol{\calM}}^n,Z=z,\boldsymbol{\calM}(Z)=i,\boldsymbol{\calM}=\mathbf{M}\Big)-o(n)\nonumber\\
&\mathop=\limits^{(e)} nH(X_1|Y_{1,i})+H\Big(\mathbf{A}_{\calT}\big|X_z^n,\mathbf{Q}_{\calT},\mathbf{Y}_{[K],\boldsymbol{\calM}}^n,Z=z,\boldsymbol{\calM}(Z)=i,\boldsymbol{\calM}=\mathbf{M}\Big)-o(n)\nonumber\\
&\mathop=\limits^{(f)} nH(X_1|Y_{1,i})+H\Big(\mathbf{A}_{\calT}\big|X_z^n,\mathbf{Q}_{\calT},\mathbf{Y}_{[K],\boldsymbol{\calM}}^n,Z=\bar{z}_{K-1},\boldsymbol{\calM}(Z)=i,\boldsymbol{\calM}=\mathbf{M}_1\Big)-o(n)\nonumber\\
&\ge nH(X_1|Y_{1,i})+H\Big(\mathbf{A}_{\calT}\big|X_z^n,\mathbf{Q}_{[N]},\mathbf{Y}_{[K],\boldsymbol{\calM}}^n,Z=\bar{z}_{K-1},\boldsymbol{\calM}(Z)=i,\boldsymbol{\calM}=\mathbf{M}_1\Big)-o(n)\label{eq:Conditional_step_g}\\
&\mathop\ge\limits^{(g)}nH(X_1|Y_{1,i})+\frac{T}{N}H\Big(\mathbf{A}_{[N]}\big|X_z^n,\mathbf{Q}_{[N]},\mathbf{Y}_{[K],\boldsymbol{\calM}}^n,Z=\bar{z}_{K-1},\boldsymbol{\calM}(Z)=i,\boldsymbol{\calM}=\mathbf{M}_1\Big)-o(n),
\label{eq:RQ_Converse_NNL_CP}
\end{align}
\end{subequations}
\setcounter{equation}{32}
\end{figure*}
where 
\begin{itemize}
    \item[$(a)$] follows since conditioning does not increase entropy;
    \item[$(b)$] follows from Fano's inequality in \eqref{eq:Fano_C};
    \item[$(c)$] follows since, for $i\triangleq\boldsymbol{\calM}(Z)$, we have,
    \begin{subequations}
    \begin{align}
    &I\big(\mathbf{Q}_{[N]},\mathbf{Y}_{\bar{\mathbf{Z}},[D]}^n;X_{Z}^n\big|Y_{Z,i}^n,Z=z,\boldsymbol{\calM}(Z)=i,\boldsymbol{\calM}=\mathbf{M}\big)\nonumber\\
    &= I\big(\mathbf{Y}_{\bar{\mathbf{Z}},[D]}^n;X_Z^n\big|Y_{Z,i}^n,Z=z,\boldsymbol{\calM}(Z)=i,\boldsymbol{\calM}=\mathbf{M}\big)\nonumber\\
    &\qquad+I\big(\mathbf{Q}_{[N]};X_Z^n\big|\mathbf{Y}_{\bar{\mathbf{Z}},[D]}^n,Y_{Z,i}^n,Z=z,\nonumber\\
    &\qquad\qquad\qquad \boldsymbol{\calM}(Z)=i,\boldsymbol{\calM}=\mathbf{M}\big)\label{eq:Justi_N_1_NNL_1}\\
    &\le o(n),\label{eq:Justi_N_5_NNL_2}
    \end{align}
    \end{subequations}where \eqref{eq:Justi_N_5_NNL_2} holds because the first term on the \ac{RHS} of \eqref{eq:Justi_N_1_NNL_1} is equal to zero since the files are independent of one another, and the second term on the \ac{RHS} of \eqref{eq:Justi_N_1_NNL_1} is less than or equal to $o(n)$ since the queries are of negligible normalized download cost;
    \item[$(d)$] follows because 
    $H\Big(X_Z^n\big|Y_{Z,i}^n,Z=z,\boldsymbol{\calM}(Z)=i,\boldsymbol{\calM}=\mathbf{M}\Big)=H(X_z^n|Y_{z,i}^n)=H(X_1^n|Y_{1,i}^n)=nH(X_1|Y_{1,i})$, for any  $z\in[K]$;
    \item[$(e)$] follows since for all $\mathbf{M}\in\boldsymbol{\mathfrak{M}}$, $z\in[K]$, $\calT'\subseteq[N]$, and $\calT\subseteq[N]$, such that $\abs{\calT}=T$, and $\mathbf{M}(z)=i$ we have
\begin{align}   &I\Big(\mathbf{A}_{\calT};\mathbf{Q}_{[N]\backslash\calT}\Big|\mathbf{Q}_{\calT},\mathbf{Y}_{[K],\boldsymbol{\calM}}^n,\mathbf{X}_{\calT'}^n,Z=z,\boldsymbol{\calM}(Z)=i,\nonumber\\
&\qquad\boldsymbol{\calM}=\mathbf{M}\Big)
\leq H(\mathbf{Q}_{[N]\backslash\calT}) = o(n);\nonumber
\end{align}
    \item[$(f)$] follows from Lemma~\ref{lemma:Changing_Z_C} with $\mathbf{M}_1$ defined as in \eqref{eq:permu_S}, provided at the bottom of this page, where $\tau_{a,b}\circ\mathbf{M}$ is the transposition that exchanges $\mathbf{M}(a)$ and $\mathbf{M}(b)$ in the second row of the matrix $\mathbf{M}$;
    \item[$(g)$] follows since the second term on the \ac{RHS} of \eqref{eq:Conditional_step_d} can be lower bounded by using the following inequality \setcounter{equation}{34}
    \begin{subequations}
   \begin{align}
    &H\Big(\mathbf{A}_{[N]}\big|X_z^n,\mathbf{Q}_{[N]},\mathbf{Y}_{[K],\boldsymbol{\calM}}^n,Z=z,\nonumber\\
    &\qquad\qquad\boldsymbol{\calM}(Z)=i,\boldsymbol{\calM}=\mathbf{M}\Big)\nonumber\\
    &\ge \frac{1}{\binom{N}{T}}\sum\limits_{\calT:\abs{\calT}=T}H\Big(\mathbf{A}_{\calT}\big|X_z^n,\mathbf{Q}_{[N]},\mathbf{Y}_{[K],\boldsymbol{\calM}}^n,\nonumber\\
    &\qquad\qquad Z=\bar{z}_{K-1},\boldsymbol{\calM}(Z)=i,\boldsymbol{\calM}=\mathbf{M}_1\Big)\label{eq:repeating_Deviding_C}\\
    &\ge\frac{T}{N}H\Big(\mathbf{A}_{[N]}\big|X_z^n,\mathbf{Q}_{[N]},\mathbf{Y}_{[K],\boldsymbol{\calM}}^n,\nonumber\\
    &\qquad\qquad Z=\bar{z}_{K-1},\boldsymbol{\calM}(Z)=i,\boldsymbol{\calM}=\mathbf{M}_1\Big),\label{eq:Hans_C}
    \end{align}
    \end{subequations}where \eqref{eq:repeating_Deviding_C} follows by writing \eqref{eq:Conditional_step_g} for all the $\binom{N}{T}$ different subsets $\calT\subseteq[N]$ with cardinality $T$ and adding up all these inequalities; and \eqref{eq:Hans_C} follows from Han's inequality~\cite[Theorem~17.6.1]{Cover_Book}.
\end{itemize}Then, similar to \eqref{eq:RQ_Converse_NNL_CP}, for $\ell\in[K-2]$ with $\mathbf{X}_{\bar{\mathbf{z}}_{[K:K-1]}}^n=\emptyset$, we bound the second term on the \ac{RHS} of \eqref{eq:RQ_Converse_NNL_CP} as provided in \eqref{eq:RQ_Converse_NNL_CP_2},
\begin{figure*}[b!]
\hrulefill
    \setcounter{equation}{35}
\begin{align}
& H\Big(\mathbf{A}_{[N]}\big|\mathbf{X}_{\bar{\mathbf{z}}_{[K-\ell+1:K-1]}}^n,X_z^n,\mathbf{Q}_{[N]},\mathbf{Y}_{[K],\boldsymbol{\calM}}^n,Z=\bar{z}_{K-\ell},\boldsymbol{\calM}(Z)=i,\boldsymbol{\calM}=\mathbf{M}_{\ell}\Big)\nonumber\\
&= H\Big(\mathbf{A}_{[N]},X_{\bar{z}_{K-\ell}}^n\big|\mathbf{X}_{\bar{\mathbf{z}}_{[K-\ell+1:K-1]}}^n,X_z^n,\mathbf{Q}_{[N]},\mathbf{Y}_{[K],\boldsymbol{\calM}}^n,Z=\bar{z}_{K-\ell},\boldsymbol{\calM}(Z)=i,\boldsymbol{\calM}=\mathbf{M}_\ell\Big)\nonumber\\
&\quad-H\Big(X_{\bar{z}_{K-\ell}}^n\big|\mathbf{X}_{\bar{\mathbf{z}}_{[K-\ell+1:K-1]}}^n,X_z^n,\mathbf{Q}_{[N]},\mathbf{A}_{[N]},\mathbf{Y}_{[K],\boldsymbol{\calM}}^n,Z=\bar{z}_{K-\ell},\boldsymbol{\calM}(Z)=i,\boldsymbol{\calM}=\mathbf{M}_\ell\Big)\nonumber\\
&\mathop\ge\limits^{(b)} H\Big(\mathbf{A}_{[N]},X_{\bar{z}_{K-\ell}}^n\big|\mathbf{X}_{\bar{\mathbf{z}}_{[K-\ell+1:K-1]}}^n,X_z^n,\mathbf{Q}_{[N]},\mathbf{Y}_{[K],\boldsymbol{\calM}}^n,Z=\bar{z}_{K-\ell},\boldsymbol{\calM}(Z)=i,\boldsymbol{\calM}=\mathbf{M}_\ell\Big)-o(n)\nonumber\\
&= H\Big(X_{\bar{z}_{K-\ell}}^n\big|\mathbf{X}_{\bar{\mathbf{z}}_{[K-\ell+1:K-1]}}^n,X_z^n,\mathbf{Q}_{[N]},\mathbf{Y}_{[K],\boldsymbol{\calM}}^n,Z=\bar{z}_{K-\ell},\boldsymbol{\calM}(Z)=i,\boldsymbol{\calM}=\mathbf{M}_\ell\Big)\nonumber\\
&\quad+H\Big(\mathbf{A}_{[N]}\big|\mathbf{X}_{\bar{\textbf{z}}_{[K-\ell:K-1}]}^n,X_z^n,\mathbf{Q}_{[N]},\mathbf{Y}_{[K],\boldsymbol{\calM}}^n,Z=\bar{z}_{K-\ell},\boldsymbol{\calM}(Z)=i,\boldsymbol{\calM}=\mathbf{M}_\ell\Big)-o(n)\nonumber\\
&\mathop=\limits^{(c)} H\Big(X_{\bar{z}_{K-\ell}}^n\big|Y_{\bar{z}_{K-\ell},i}^n,Z=\bar{z}_{K-\ell},\boldsymbol{\calM}(Z)=i,\boldsymbol{\calM}=\mathbf{M}_\ell\Big)\nonumber\\
&\quad+H\Big(\mathbf{A}_{[N]}\big|\mathbf{X}_{\bar{\textbf{z}}_{[K-\ell:K-1}]}^n,X_z^n,\mathbf{Q}_{[N]},\mathbf{Y}_{[K],\boldsymbol{\calM}}^n,Z=\bar{z}_{K-\ell},\boldsymbol{\calM}(Z)=i,\boldsymbol{\calM}=\mathbf{M}_\ell\Big)-o(n)\nonumber\\
&\mathop=\limits^{(d)} nH(X_1|Y_{1,i})\nonumber\\
&\quad+H\Big(\mathbf{A}_{[N]}\big|\mathbf{X}_{\bar{\textbf{z}}_{[K-\ell:K-1}]}^n,X_z^n,\mathbf{Q}_{[N]},\mathbf{Y}_{[K],\boldsymbol{\calM}}^n,Z=\bar{z}_{K-\ell},\boldsymbol{\calM}(Z)=i,\boldsymbol{\calM}=\mathbf{M}_\ell\Big)-o(n)\nonumber\\
&\ge nH(X_1|Y_{1,i})\nonumber\\
&\quad+H\Big(\mathbf{A}_{\calT}\big|\mathbf{X}_{\bar{\textbf{z}}_{[K-\ell:K-1}]}^n,X_z^n,\mathbf{Q}_{[N]},\mathbf{Y}_{[K],\boldsymbol{\calM}}^n,Z=\bar{z}_{K-\ell},\boldsymbol{\calM}(Z)=i,\boldsymbol{\calM}=\mathbf{M}_\ell\Big)-o(n)\nonumber\\
&\mathop=\limits^{(e)} nH(X_1|Y_{1,i})\nonumber\\
&\quad+H\Big(\mathbf{A}_{\calT}\big|\mathbf{X}_{\bar{\textbf{z}}_{[K-\ell:K-1}]}^n,X_z^n,\mathbf{Q}_{\calT},\mathbf{Y}_{[K],\boldsymbol{\calM}}^n,Z=\bar{z}_{K-\ell},\boldsymbol{\calM}(Z)=i,\boldsymbol{\calM}=\mathbf{M}_\ell\Big)-o(n)\nonumber\\
&\mathop=\limits^{(f)} nH(X_1|Y_{1,i})\nonumber\\
&\quad+H\Big(\mathbf{A}_{\calT}\big|\mathbf{X}_{\bar{\textbf{z}}_{[K-\ell:K-1}]}^n,X_z^n,\mathbf{Q}_{\calT},\mathbf{Y}_{[K],\boldsymbol{\calM}}^n,Z=\bar{z}_{K-\ell-1},\boldsymbol{\calM}(Z)=i,\boldsymbol{\calM}=\mathbf{M}_{\ell+1}\Big)-o(n)\nonumber\\
&\ge nH(X_1|Y_{1,i})\nonumber\\
&\quad+H\Big(\mathbf{A}_{\calT}\big|\mathbf{X}_{\bar{\textbf{z}}_{[K-\ell:K-1}]}^n,X_z^n,\mathbf{Q}_{[N]},\mathbf{Y}_{[K],\boldsymbol{\calM}}^n,Z=\bar{z}_{K-\ell-1},\boldsymbol{\calM}(Z)=i,\boldsymbol{\calM}=\mathbf{M}_{\ell+1}\Big)-o(n)\nonumber\\
&\mathop\ge\limits^{(g)}nH(X_1|Y_{1,i})\nonumber\\
&\quad+\frac{T}{N}H\Big(\mathbf{A}_{[N]}\big|\mathbf{X}_{\bar{\textbf{z}}_{[K-\ell:K-1}]}^n,X_z^n,\mathbf{Q}_{[N]},\mathbf{Y}_{[K],\boldsymbol{\calM}}^n,Z=\bar{z}_{K-\ell-1},\boldsymbol{\calM}(Z)=i,\boldsymbol{\calM}=\mathbf{M}_{\ell+1}\Big)-o(n),
\label{eq:RQ_Converse_NNL_CP_2}
\end{align}
\setcounter{equation}{36}
\end{figure*}
where $(b)$ to $(g)$ follow with similar arguments of $(b)$ to $(g)$ in \eqref{eq:RQ_Converse_NNL_CP} with $\mathbf{M}_{\ell+1}\triangleq\tau_{\bar{z}_{K-\ell},\bar{z}_{K-\ell-1}}\circ\mathbf{M}_{\ell}$ and $\mathbf{X}_{\bar{\mathbf{z}}_{[K:K-1]}}^n=\emptyset$. The justification of $(c)$ is, however, different. Specifically, for $i\triangleq\boldsymbol{\calM}(Z)$, we have,
\begin{subequations}
    \begin{align}
    &I\big(\mathbf{Q}_{[N]},\mathbf{X}_{[K]\backslash Z}^n,\mathbf{Y}_{\bar{\mathbf{Z}},[D]}^n;X_{Z}^n\big|Y_{Z,i}^n,Z=\bar{z}_{K-\ell},\nonumber\\
    &\qquad\qquad\boldsymbol{\calM}(Z)=i,\boldsymbol{\calM}=\mathbf{M}_\ell\big)\nonumber\\
    &= I\big(\mathbf{X}_{[K]\backslash Z}^n,\mathbf{Y}_{\bar{\mathbf{Z}},[D]}^n;X_Z^n\big|Y_{Z,i}^n,Z=\bar{z}_{K-\ell},\nonumber\\
    &\qquad\qquad\boldsymbol{\calM}(Z)=i,\boldsymbol{\calM}=\mathbf{M}_\ell\big)\nonumber\\
    &\quad+I\big(\mathbf{Q}_{[N]};X_Z^n\big|\mathbf{X}_{[K]\backslash Z}^n,\mathbf{Y}_{\bar{\mathbf{Z}},[D]}^n,Y_{Z,i}^n,Z=\bar{z}_{K-\ell},\nonumber\\
    &\qquad\qquad\boldsymbol{\calM}(Z)=i,\boldsymbol{\calM}=\mathbf{M}_\ell\big)\label{eq:Justi_N_1_NNL_2}\\
    &\le o(n),\label{eq:Justi_N_5_NNL_3}
    \end{align}
    \end{subequations}where \eqref{eq:Justi_N_5_NNL_3} holds because the first term on the \ac{RHS} of \eqref{eq:Justi_N_1_NNL_2} is equal to zero since the files are independent of one another, and the second term on the \ac{RHS} of \eqref{eq:Justi_N_1_NNL_2} is less than or equal to $o(n)$ since the queries are of negligible normalized download cost;

Then, we repeat \eqref{eq:RQ_Converse_NNL_CP_2} to bound the second entropy term on the \ac{RHS} of \eqref{eq:RQ_Converse_NNL_CP_2}, as provided in \eqref{eq:rep_Condi},
\begin{figure*}[b!]
\hrulefill
\setcounter{equation}{37}
\begin{align}
    &R\left(\mathbf{Q}_{[N]}\right)\nonumber\\
    &\mathop\ge\limits^{(a)} H\big(X_1|Y_{1,i}\big)+\frac{T}{N}\left[H\big(X_1|Y_{1,i}\big)+\frac{T}{N}\left[H\big(X_1|Y_{1,i}\big)+\dots+\frac{T}{N}\left[H\big(X_1|Y_{1,i}\big)\right.\right.\right.\nonumber\\
    &\Squad+\left.\left.\left.\frac{1}{n}H\Big(\mathbf{A}_{[N]}\big|\mathbf{X}_{\bar{\mathbf{z}}_{[1+d_{[i-1]}:K-1]}}^n,X_z^n,\mathbf{Q}_{[N]}, \mathbf{Y}_{[K],\boldsymbol{\calM}}^n,Z=\bar{z}_{d_{[i-1]}},\boldsymbol{\calM}=\mathbf{M}_{K-d_{[i-1]}}\Big)\right]\right]\right] - o(1)\nonumber\\
    &= 
    H\big(X_1|Y_{1,i}\big)\left[1+\frac{T}{N}+\left(\frac{T}{N}\right)^2+\dots+\left(\frac{T}{N}\right)^{-1+\sum_{\ell=i}^Dd_i}\right]\nonumber\\
    &\Squad+\frac{1}{n}\left(\frac{T}{N}\right)^{\sum_{\ell=i}^Dd_i}H\Big(\mathbf{A}_{[N]}\big|\mathbf{X}_{\bar{\mathbf{z}}_{[1+d_{[i-1]}:K-1]}}^n,X_z^n,\mathbf{Q}_{[N]}, \mathbf{Y}_{[K],\boldsymbol{\calM}}^n,Z=\bar{z}_{d_{[i-1]}},\boldsymbol{\calM}=\mathbf{M}_{K-d_{[i-1]}}\Big)- o(1)\nonumber\\
    &\mathop\ge\limits^{(b)}H\big(X_1|Y_{1,i}\big)\Psi^{-1}\left(\frac{T}{N},d_{[i:D]}\right)+H\big(X_1|Y_{1,i-1}\big)\left(\frac{T}{N}\right)^{d_{[i:D]}}\Psi^{-1}\left(\frac{T}{N},d_{i-1}\right)\nonumber\\
    &\Squad+\frac{1}{n}\left(\frac{T}{N}\right)^{d_{[i-1:D]}}H\Big(\mathbf{A}_{[N]}\big|\mathbf{X}_{\bar{\mathbf{z}}_{[1+d_{[i-2]}:K-1]}}^n,X_z^n,\mathbf{Q}_{[N]}, \mathbf{Y}_{[K],\boldsymbol{\calM}}^n,Z=\bar{z}_{d_{[i-2]}},\boldsymbol{\calM}=\mathbf{M}_{K-d_{[i-2]}}\Big)- o(1)\nonumber\\
    &\mathop\ge\limits^{(c)}H\big(X_1|Y_{1,i}\big)\Psi^{-1}\left(\frac{T}{N},d_{[i:D]}\right)+\sum\limits_{\ell=1}^{i-1}H\big(X_1|Y_{1,\ell}\big)\left(\frac{T}{N}\right)^{d_{[\ell+1:D]}}\Psi^{-1}\left(\frac{T}{N},d_{\ell}\right)\nonumber\\
    &\Squad+\frac{1}{n}\left(\frac{T}{N}\right)^{K-1}H\Big(\mathbf{A}_{[N]}\big|\mathbf{X}_{[K]}^n,\mathbf{Q}_{[N]}, \mathbf{Y}_{[K],\boldsymbol{\calM}}^n,Z=\bar{z}_1,\boldsymbol{\calM}=\mathbf{M}_{K-1}\Big)- o(1)\nonumber\\
    &\mathop=\limits^{(d)}H\big(X_1|Y_{1,i}\big)\Psi^{-1}\left(\frac{T}{N},d_{[i:D]}\right)+\sum\limits_{\ell=1}^{i-1}H\big(X_1|Y_{1,\ell}\big)\left(\frac{T}{N}\right)^{d_{[\ell+1:D]}}\Psi^{-1}\left(\frac{T}{N},d_{\ell}\right)- o(1),\label{eq:rep_Condi}
\end{align}
\setcounter{equation}{38}
\end{figure*}
where 
\begin{itemize}
    \item[$(a)$] follows by repeating \eqref{eq:RQ_Converse_NNL_CP_2} with $Z=z'$, where $z'$ changes from the first element till the last element of  $\big[\bar{z}_{K-1},\bar{z}_{K-2},\dots,\bar{z}_{1+d_{[i-1]}}\big]$;
    \item[$(b)$] follows by repeating \eqref{eq:RQ_Converse_NNL_CP_2} with $Z=z'$, where $z'$ changes from the first element till the last element of  $\big[\bar{z}_{d_{[i-1]}},\bar{z}_{d_{[i-1]}-1},\dots,\bar{z}_{1+d_{[i-2]}}\big]$;
    \item[$(c)$] follows by induction and repeating $(b)$;
    \item[$(d)$] follows from~\eqref{eq:Func_Answers}.
\end{itemize}
\subsection{Achievability proof}
\label{sec:Achie_Proof_C}
We use the same coding scheme as that of Theorem~\ref{thm:Capacity_PIRNSI} in Section~\ref{sec:Achie_Proof_NNL} with $U\triangleq\boldsymbol{\calM}(Z)$ levels instead of $D$ levels. Therefore, from \eqref{eq:Rl_rate} the total normalized download cost is,
\begin{align}
    R(U)&=\sum\limits_{\ell=1}^UR^{(\ell)}\nonumber\\
    &=\sum\limits_{\ell=1}^U\Big(H\big(X_1|Y_{1,\ell}\big)-H\big(X_1|Y_{1,\ell-1}\big)\Big)\times\nonumber\\
    &\qquad\left(1+\frac{T}{N}+\dots+\left(\frac{T}{N}\right)^{K-1-\sum_{i=1}^{\ell-1}d_i}\right)\nonumber\\
    &=\sum\limits_{\ell=1}^{U-1}H\big(X_1|Y_{1,\ell}\big)\left(\frac{T}{N}\right)^{K-\sum_{i=1}^{\ell}d_i}\times\nonumber\\
    &\qquad\left(1+\frac{T}{N}+\dots+\left(\frac{T}{N}\right)^{d_{\ell}-1}\right)+H\big(X_1|Y_{1,U}\big)\times\nonumber\\
    &\qquad\left(1+\frac{T}{N}+\dots+\left(\frac{T}{N}\right)^{K-1-\sum_{i=1}^{U-1}d_i}\right).\label{eq:Achievable_Rate_Unxpected_NNL}
\end{align}
Calculating the expectation of \eqref{eq:Achievable_Rate_Unxpected_NNL} with respect to $U$ results to
\begin{align}
    &\bbE_U\left[R(U)\right]\nonumber\\
    &=\frac{1}{K}\sum\limits_{u=1}^Dd_u\left[\sum\limits_{\ell=1}^{u-1}H\big(X_1|Y_{1,\ell}\big)\left(\frac{T}{N}\right)^{K-\sum_{i=1}^{\ell}d_i}\times\right.\nonumber\\
    &\qquad\qquad\left(1+\frac{T}{N}+\dots+\left(\frac{T}{N}\right)^{d_{\ell}-1}\right)\nonumber\\
    &\quad+\left.H\big(X_1|Y_{1,u}\big)\left(1+\frac{T}{N}+\dots+\left(\frac{T}{N}\right)^{K-1-\sum_{i=1}^{u-1}d_i}\right)\right]\nonumber\\
    &=\frac{1}{K}\left[\sum\limits_{u=1}^Dd_u\sum\limits_{\ell=1}^{u-1}H\big(X_1|Y_{1,\ell}\big)\left(\frac{T}{N}\right)^{K-\sum_{i=1}^{\ell}d_i}\times\right.\nonumber\\
    &\qquad\left(1+\frac{T}{N}+\dots+\left(\frac{T}{N}\right)^{d_{\ell}-1}\right)+\sum\limits_{u=1}^Dd_uH\big(X_1|Y_{1,u}\big)\times\nonumber\\
    &\quad\left.\left(1+\frac{T}{N}+\dots+\left(\frac{T}{N}\right)^{K-1-\sum_{i=1}^{u-1}d_i}\right)\right]\nonumber\\
    &=\frac{1}{K}\sum\limits_{\ell=1}^{D}H\big(X_1|Y_{1,\ell}\big)\left[\left(K-\sum\limits_{i=1}^{\ell}d_i\right)\left(\frac{T}{N}\right)^{K-\sum_{i=1}^{\ell}d_i}\times\right.\nonumber\\
    &\qquad\left(1+\frac{T}{N}+\dots+\left(\frac{T}{N}\right)^{d_{\ell}-1}\right)\nonumber\\
    &\qquad\left.+d_\ell \left(1+\frac{T}{N}+\dots+\left(\frac{T}{N}\right)^{K-1-\sum_{i=1}^{\ell-1}d_i}\right)\right]\nonumber.
\end{align}
Finally, similar to Section~\ref{sec:privacyachievability}, privacy is inherited from the privacy of the scheme in~\cite{ChenWangJafar20} and~\cite{SunJafar172}.

\section{Conclusion}
\label{sec:Conclusion}
We have studied the \ac{PIR} problem with $N$ servers, where each server has a copy of $K$ files and $T$ of the servers may collude, when the client has a noisy version of each of the $K$ files. The side information is such that each file is passed through one of $D$ possible and distinct test channels, whose statistics are known by the client and the servers. We studied this problem under two different security metrics. Under the first metric,  the client wants to keep the index of the desired file and the mapping between the files and the test channels secret from the servers. Under the second metric, the client wants to keep the index of the desired file and the mapping between the files and the test channels secret from the servers, but is willing to reveal the index of the test channel that is associated with the desired file. We derived the optimal normalized download cost under both privacy metrics. We showed that the optimal normalized download cost under the second privacy metric is smaller than or equal to the optimal normalized download cost under the first privacy metric, which shows that revealing the index of the test channel that is associated with the desired file results in a lower normalized download cost. Our setting and results recover several known settings, including \ac{PIR} with private noiseless side information and \ac{PIR} with private side information under storage constraints. We note that the \ac{PIR} problem with noisy side information when the side information is not required to be kept private is an interesting open problem.

\begin{appendices}
\section{Nested Source Coding Schemes}
\label{app:Nested_RB}
In Appendix~\ref{sec:Nested_RB}, we provide a nested random binning scheme to implement nested source coding, as described in Section~\ref{sec:Achie_Prelim}. In Appendix~\ref{sec:Nested_PC}, we provide an implementation of this scheme with polar codes for the case that the side information forms the Markov chain $X_t-Y_{t,1}-Y_{t,2}-\dots-Y_{t,D}$, for $t\in[K]$. As formalized next, this Markov chain is always satisfied for test channels that are degraded with respect to one another as, for instance, in Corollaries \ref{Cor:BEC_Capacity}, \ref{Cor:BSC_Capacity} for binary erasure or binary symmetric test channels.
\begin{lemma}
If there exists a test channel $\tilde{C}_{i,j}$ such that $C^{(j)}=\tilde{C}_{i,j}\circ C^{(i)}$, for $i,j\in[D]$ and $i<j$, i.e., $C^{(j)}$ is degraded with respect to $C^{(i)}$, then, without loss of generality, one can assume that $X_t-Y_{t,1}-Y_{t,2}-\dots-Y_{t,D}$, for $t\in[K]$, forms a Markov chain.
\end{lemma}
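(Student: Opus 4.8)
The plan is to show that when the test channels are ordered by degradation, one can relabel the side-information outputs so that they become successive refinements of one another, and this relabeling changes neither the client's decoding ability nor the privacy constraints, because the entropies $H(X_1\mid Y_{1,\ell})$ — the only quantities appearing in the capacity expressions — are preserved. Concretely, I would argue as follows. Suppose $C^{(j)} = \tilde C_{i,j}\circ C^{(i)}$ for all $i<j$. Starting from $C^{(1)}$, define a new random variable $\tilde Y_{t,1} \triangleq Y_{t,1}$, and then for each $\ell\in[2:D]$ set $\tilde Y_{t,\ell} \triangleq (\tilde Y_{t,\ell-1}, W_{t,\ell})$, where $W_{t,\ell}$ is the output of $\tilde C_{\ell-1,\ell}$ applied to $\tilde Y_{t,\ell-1}$ (using fresh independent randomness). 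The key point is that, because $C^{(\ell)}$ is degraded with respect to $C^{(\ell-1)}$ through $\tilde C_{\ell-1,\ell}$, the pair $(\tilde Y_{t,\ell-1}, W_{t,\ell})$ carries exactly the same information about $X_t$ as the original $Y_{t,\ell}$; more precisely, $P_{X_t \mid \tilde Y_{t,\ell}} = P_{X_t \mid Y_{t,\ell}}$ in distribution, so $H(X_t \mid \tilde Y_{t,\ell}) = H(X_t \mid Y_{t,\ell})$ for every $\ell$. By construction $X_t - \tilde Y_{t,1} - \tilde Y_{t,2} - \cdots - \tilde Y_{t,D}$ is a Markov chain, since each $\tilde Y_{t,\ell}$ contains $\tilde Y_{t,\ell-1}$ as a component and the extra component $W_{t,\ell}$ is generated from $\tilde Y_{t,\ell-1}$ alone.

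\emph{Why this is ``without loss of generality.''} I would then observe that replacing the client's side information $Y_{t,\ell}$ by $\tilde Y_{t,\ell}$ only \emph{adds} locally-generated randomness and is a reversible relabeling in the relevant sense: any protocol that is correct and private with the original side information can be simulated with the new side information (the client simply ignores the extra components $W_{t,\ell}$), and conversely, because the degradation map $\tilde C_{\ell-1,\ell}$ is public, a client holding $Y_{t,\ell}$ can itself generate a sample distributed as $\tilde Y_{t,\ell}$. Hence the set of achievable normalized download costs is unchanged, and since both the converse bounds (e.g.\ \eqref{eq:repetition}, \eqref{eq:rep_Condi}) and the achievability rates (e.g.\ \eqref{eq:Rl_rate}) depend on the side information only through the conditional entropies $H(X_1\mid Y_{1,\ell})$, which are preserved, the optimal download costs in Theorems~\ref{thm:Capacity_PIRNSI} and~\ref{thm:Capacity_CPIRNSI} are identical under the original and the Markovized side information. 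Finally I would check that the ordering assumption $H(U\mid V_i)\le H(U\mid V_j)$ for $i\le j$ from Definition~\ref{defi:Problem_Statement} is consistent with the degradation ordering, so no reindexing conflict arises.

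\emph{The main obstacle} I anticipate is stating precisely what ``without loss of generality'' means here and making sure the reduction goes through in \emph{both} directions of the capacity claim — that is, that neither achievability nor the converse is weakened. The achievability direction is the easy one (ignore the extra symbols). The subtle direction is the converse: one must be sure that proving the converse under the Markov assumption actually implies the converse in general, which requires the observation that the original side information $Y_{t,\ell}$ and the Markovized version $\tilde Y_{t,\ell}$ are \emph{informationally equivalent} for the decoder (same posterior on $X_t$), so that a lower bound on download cost for the Markov case transfers. Once this equivalence is spelled out, the Markov chain can be invoked freely in Appendix~\ref{sec:Nested_PC} to justify the nested polar code construction. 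I would also note the special cases: for \acp{BEC} with $\epsilon_1 < \cdots < \epsilon_D$ one takes $\tilde C_{\ell-1,\ell}$ to be a \ac{BEC} with parameter $(\epsilon_\ell - \epsilon_{\ell-1})/(1-\epsilon_{\ell-1})$, and for \acp{BSC} with $p_1 < \cdots < p_D \le \tfrac12$ one takes $\tilde C_{\ell-1,\ell}$ to be a \ac{BSC} with crossover probability $(p_\ell - p_{\ell-1})/(1 - 2p_{\ell-1})$, which makes the degradation explicit and confirms the Markov chain in Corollaries~\ref{Cor:BEC_Capacity} and~\ref{Cor:BSC_Capacity}.
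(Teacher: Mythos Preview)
Your explicit construction contains a genuine error. You set $\tilde Y_{t,\ell}\triangleq(\tilde Y_{t,\ell-1},W_{t,\ell})$, so $\tilde Y_{t,\ell}$ \emph{contains} $\tilde Y_{t,\ell-1}$ as a coordinate. Then for every $\ell\ge 2$,
\[
H\bigl(X_t\mid \tilde Y_{t,\ell}\bigr)=H\bigl(X_t\mid \tilde Y_{t,\ell-1},W_{t,\ell}\bigr)=H\bigl(X_t\mid \tilde Y_{t,\ell-1}\bigr),
\]
because $W_{t,\ell}$ is produced from $\tilde Y_{t,\ell-1}$ by an independent channel and hence $X_t-\tilde Y_{t,\ell-1}-W_{t,\ell}$. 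Iterating, $H(X_t\mid\tilde Y_{t,\ell})=H(X_t\mid Y_{t,1})$ for \emph{all} $\ell$, which contradicts your claim that $H(X_t\mid\tilde Y_{t,\ell})=H(X_t\mid Y_{t,\ell})$. Your $\tilde Y_{t,\ell}$ is strictly more informative than $Y_{t,\ell}$, not equivalent to it, so the ``informational equivalence'' you need for the converse direction fails. The fix is simply to drop the first coordinate: set $\tilde Y_{t,1}\triangleq Y_{t,1}$ and $\tilde Y_{t,\ell}\triangleq W_{t,\ell}$, the output of $\tilde C_{\ell-1,\ell}$ on $\tilde Y_{t,\ell-1}$. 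Then $(X_t,\tilde Y_{t,\ell})$ has the same law as $(X_t,Y_{t,\ell})$ because $C^{(\ell)}=\tilde C_{\ell-1,\ell}\circ C^{(\ell-1)}$, and the Markov chain $X_t-\tilde Y_{t,1}-\cdots-\tilde Y_{t,D}$ holds by the cascade structure.

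By comparison, the paper's proof bypasses any explicit coupling and simply observes that in the PIR model each file $X_t^n$ is passed through \emph{one} test channel $C^{(\boldsymbol{\calM}(t))}$, so neither the error probability in \eqref{eq:Prob_Error} nor the privacy constraint in \eqref{eq:Sec_Constraint_C} ever involves the joint law of $(Y_{t,i},Y_{t,j})$ for $i\neq j$. Hence one is free to choose any coupling of the channel outputs for the purpose of the source-coding construction in Appendix~\ref{sec:Nested_PC}, in particular the Markov coupling guaranteed to exist by degradation. This is shorter and sidesteps the two-direction reduction you worry about: there is nothing to transfer, because the PIR problem is literally the same under any coupling with the given marginals.
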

\begin{proof}
Since the test channels are degraded with respect to one another, one can redefine $X_t$, $(Y_{t,i})_{i\in[D]}$, $t\in[K]$, such that $X_t-Y_{t,1}-Y_{t,2}-\dots-Y_{t,D}$ forms a Markov chain. Note that the probability of error in \eqref{eq:Prob_Error} and the privacy condition in  \eqref{eq:Sec_Constraint_C} will not be affected because they do not depend on the joint distribution between $X_t$ and $(Y_{t,i},Y_{t,j})$, for $t\in[K]$, $i,j\in[D]$, and $i\ne j$.
\end{proof}

\subsection{Nested Random Binning Scheme}
\label{sec:Nested_RB}
Consider a discrete memoryless source $(\mathcal{X}_1 \times \bigtimes_{\ell \in [D]}\calY_{1,\ell},P_{X_1,\mathbf{Y}_{1,[D]}})$ with $D+1$ components. Assume that $(X_1^n,\mathbf{Y}_{1,[D]}^n)$ are \ac{iid} samples of this source. Then, consider an encoder $\calE:\calX_1^n\to\boldsymbol{\calJ}_{[D]}^{(1)}$, that assigns $D$ random bin indices $\mathbf{J}_{[D]}^{(1)}\triangleq (J_{\ell}^{(1)})_{\ell \in [D]}$  to the sequence $X_1^n$, where the asymptotic rate of $J_{\ell}^{(1)}$, for $\ell\in[D]$, is $H(X_1|Y_{1,\ell})-H(X_1|Y_{1,\ell-1})$, with the convention $H(X_1|Y_{1,0})=0$. 
Consider $D$ decoders $\calD_\ell:\boldsymbol{\calJ}_{[\ell]}^{(1)}\times\calY_{1,\ell}^n\to\calX_1^n$, for $\ell\in[D]$, such that the decoder $\calD_\ell$ assigns an estimate $\hat{X}_1^n$ to $(\mathbf{J}_{[\ell]}^{(1)},Y_{1,\ell}^n)$ if there is a unique $\hat{X}_1^n$ such that $\left(\hat{X}_1^n,Y_\ell^n\right)$ are jointly typical and $\left(J_1^{(1)},J_2^{(1)},\dots,J_\ell^{(1)}\right)$ corresponds to the first $\ell$ components of $\calE(\hat{X}_1^n)$. According to \cite{SlepianWolf},  \cite[Section~10.4]{ElGamalKim}, since the asymptotic sum rate for the bin indices that are used at the decoder $\calD_\ell$, i.e., $\mathbf{J}_{[\ell]}^{(1)}$, is $H(X_1|Y_{1,\ell})$, then  $\bbP\big[\hat{X}_1^n\ne X_1^n\big]\xrightarrow[n\to\infty]{}0$.

\subsection{Nested Polar Coding Scheme}
\label{sec:Nested_PC}
We now provide an implementation for the nested source coding in Section~\ref{sec:Achie_Proof_NNL} by using nested polar codes when the side information available at the decoders forms a Markov chain. We will rely on the following result for source coding with side information from \cite{arikan2010source}.
\begin{lemma}\label{lemma:lem1p}(Source Coding with side information\cite{arikan2010source}). Consider a probability distribution $p_{XY}$ over $\calX\times\calY$ with $|\calX|=2$ and $\calY$ a finite alphabet. Let $N$ be a power of $2$ and consider $(X^{N}, Y^{N} )$ distributed according to $\prod_{i=1}^N p_{XY}$. Define $U^{N}\triangleq X^{N}G_N$, where  
$G_N \triangleq  \left[ \begin{smallmatrix}
       1 & 0            \\[0.3em]
       1 & 1 
\end{smallmatrix} \right]^{\otimes \log N} $ is the source polarization matrix defined in \cite{arikan2010source}. Define also for $\delta_N\triangleq 2^{-N^\beta}$ with $\beta \in]0,\frac{1}{2}[$, the set $\calH_{X|Y} \triangleq \left\{ i \in [N] 
: H( U_i | U^{i-1}Y^{N}) > \delta_N \right\}$. Given $U^{N}[ \calH_{X|Y}]$ and $Y^{N}$, one can form $\hat U^{N}$ by the successive cancellation decoder of \cite{arikan2010source} such that $\mathbb{P}[\hat U^{N}\neq U^{N}]\leq N\delta_N$. Moreover, $\displaystyle \lim_{N \to \infty}|\calH_{X|Y}|/N=H(X|Y)$. 
\end{lemma}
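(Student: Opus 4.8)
The plan is to reproduce the proof of \cite{arikan2010source} in three stages: establish source polarization together with the rate of the description, describe the successive cancellation scheme, and bound its error.

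\emph{Stage 1 (polarization and rate).} Write $H_i \triangleq H(U_i\mid U^{i-1}Y^N)$ for $i\in[N]$. I would first invoke the source polarization theorem: the empirical distribution of $(H_i)_{i\in[N]}$ concentrates on $\{0,1\}$, and, in the stronger \emph{rate of polarization} form, the number of indices with $\delta_N \le H_i \le 1-\delta_N$ is $o(N)$ whenever $\delta_N = 2^{-N^\beta}$ with $\beta\in(0,1/2)$. Since $G_N$ is invertible over $\mathbb{F}_2$, the map $X^N\mapsto U^N$ is a bijection, so the entropy chain rule gives
\[
\sum_{i=1}^{N} H_i \;=\; H(U^N\mid Y^N) \;=\; H(X^N\mid Y^N) \;=\; N\,H(X\mid Y).
\]
The lower bound $|\calH_{X|Y}|\ge N\,H(X\mid Y)-N\delta_N$ is immediate from $\sum_i H_i \le |\calH_{X|Y}| + (N-|\calH_{X|Y}|)\delta_N$; for the matching upper bound I would split $\calH_{X|Y}$ into the ``near-one'' indices (whose cardinality is at most $N\,H(X\mid Y)/(1-\delta_N)$ by the chain rule again) and the $o(N)$ ``middle'' indices controlled by the rate-of-polarization bound. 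Since $N\delta_N\to 0$, this yields $\lim_{N\to\infty}|\calH_{X|Y}|/N= H(X\mid Y)$.

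\emph{Stage 2 (the scheme).} The encoder outputs $U^N[\calH_{X|Y}]$, of normalized length $|\calH_{X|Y}|/N$. Given $(U^N[\calH_{X|Y}],Y^N)$, the decoder runs Arikan's successive cancellation rule: for $i=1,\dots,N$ it sets $\hat U_i=U_i$ if $i\in\calH_{X|Y}$, and otherwise sets $\hat U_i\in\arg\max_{u}P_{U_i\mid U^{i-1}Y^N}(u\mid\hat U^{i-1},Y^N)$, the posteriors being computable recursively from $p_{XY}$.

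\emph{Stage 3 (error analysis).} Let $\mathcal B_i$ be the event $\{\hat U^{i-1}=U^{i-1},\ \hat U_i\neq U_i\}$. Then $\{\hat U^N\neq U^N\}=\bigcup_{i\notin\calH_{X|Y}}\mathcal B_i$, since for $i\in\calH_{X|Y}$ the coordinate is supplied and $\mathcal B_i=\emptyset$; and on $\mathcal B_i$ the decoder's estimate of $U_i$ coincides with the genie-aided MAP estimate from $(U^{i-1},Y^N)$, so $\mathbb{P}(\mathcal B_i)\le P_e^{(i)}\triangleq\mathbb{E}\big[\min_u P_{U_i\mid U^{i-1}Y^N}(u\mid U^{i-1},Y^N)\big]$. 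Because $|\calX|=2$, the elementary inequality $\min(p,1-p)\le h_2(p)$ (with $h_2$ the binary entropy function) gives $P_e^{(i)}\le H_i$, and $H_i\le\delta_N$ for $i\notin\calH_{X|Y}$ by definition. A union bound then yields $\mathbb{P}(\hat U^N\neq U^N)\le\sum_{i\notin\calH_{X|Y}}H_i\le N\delta_N$, which is the claim.

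\emph{Main obstacle.} The one genuinely technical ingredient is the \emph{rate} of polarization, i.e.\ making the exceptional ``middle'' set $o(N)$ for the sub-exponential threshold $\delta_N=2^{-N^\beta}$ rather than merely for a fixed constant; I would obtain it from the martingale/Bhattacharyya-process argument specialized to source polarization and cite \cite{arikan2010source} for it rather than reprove it. A secondary point is the choice in Stage~3 to bound $P_e^{(i)}$ directly by $H_i$ via $\min(p,1-p)\le h_2(p)$ instead of routing through the Bhattacharyya parameter, which is what makes the final bound come out as exactly $N\delta_N$.
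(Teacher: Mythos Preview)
The paper does not prove this lemma; it states it as a known result imported from \cite{arikan2010source} and uses it as a black box in Appendix~B. There is therefore no ``paper's own proof'' to compare against. Your proposal is a correct and self-contained reconstruction of the standard argument: the chain-rule identity $\sum_i H_i = N H(X|Y)$ combined with rate-of-polarization gives the cardinality limit, and the genie-aided union bound with $\min(p,1-p)\le h_2(p)$ yields the error estimate $N\delta_N$ directly in terms of the entropy threshold (rather than detouring through Bhattacharyya parameters, which is the more common presentation but would require relating $Z_i$ back to $H_i$). Nothing is missing.
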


Let $N = 2^n$. Fix a joint probability distribution       $P_{X_{1}\mathbf{Y}_{1,[D]}}\triangleq P_{X_1} p_{Y_{1,1}|X_1} \prod_{\ell=2}^D P_{Y_{1,\ell}|Y_{1,\ell-1}}$ over $\calX_1\times\boldsymbol{\calY}_{1,[D]}$, where $|\calX_1|=2$, $(\calY_{1,\ell})_{\ell\in[D]}$ are finite alphabets, $\boldsymbol{\calY}_{1,[D]} \triangleq \bigtimes_{\ell\in [D]} \calY_{1,\ell}$, and $\mathbf{Y}_{1,[D]} \triangleq (Y_{1,\ell})_{\ell\in[D]}$.       Define $U_{}^{N} \triangleq X_{}^{N} G_N$. For  $\delta_N \triangleq 2^{-N^{\beta}}$, $\beta \in]0,\frac{1}{2}[$, define for $\ell\in[D]$, 
\begin{align}
\calH_{X|Y_\ell} &\triangleq \left\{ i \in [N]: H( U_i | U^{i-1}Y_{1,\ell}^{N}) \geq \delta_N \right\}.\nonumber
\end{align}

\begin{lemma}
\label{lemma:lem2p}
For $\ell\in [D-1]$, we have $\calH_{X_1|Y_{1,\ell}} \subset \calH_{X_1|Y_{1,\ell+1}}$.
\end{lemma}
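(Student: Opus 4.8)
The plan is to exploit the degradedness of the test channels, which via the preceding lemma guarantees the Markov chain $X_1-Y_{1,1}-Y_{1,2}-\dots-Y_{1,D}$, together with the invertibility of the source polarization matrix $G_N$. The whole argument is a two–line entropy comparison once the right Markov chain is in place.

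First I would lift the per-letter relation to blocks: since $(X_1^{N},\mathbf{Y}_{1,[D]}^{N})$ are \ac{iid} samples of $P_{X_1\mathbf{Y}_{1,[D]}}$ and $X_1-Y_{1,\ell}-Y_{1,\ell+1}$ holds for each letter, the block version $X_1^{N}-Y_{1,\ell}^{N}-Y_{1,\ell+1}^{N}$ holds as well. Next, because $G_N$ is invertible over $\bbF_2$, the map $X_1^{N}\mapsto U^{N}\triangleq X_1^{N}G_N$ is a bijection, so conditioning on $U^{N}$ is the same as conditioning on $X_1^{N}$; hence the Markov chain is preserved as $U^{N}-Y_{1,\ell}^{N}-Y_{1,\ell+1}^{N}$. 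In particular, for every $i\in[N]$ this yields the Markov chain $U_i-(U^{i-1},Y_{1,\ell}^{N})-Y_{1,\ell+1}^{N}$, i.e.\ conditioned on $(U^{i-1},Y_{1,\ell}^{N})$ the variable $U_i$ is independent of $Y_{1,\ell+1}^{N}$.

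Then I would chain two elementary facts: conditioning reduces entropy gives $H(U_i\mid U^{i-1}Y_{1,\ell+1}^{N})\ge H(U_i\mid U^{i-1}Y_{1,\ell}^{N}Y_{1,\ell+1}^{N})$, and the Markov chain just established gives $H(U_i\mid U^{i-1}Y_{1,\ell}^{N}Y_{1,\ell+1}^{N})=H(U_i\mid U^{i-1}Y_{1,\ell}^{N})$. Combining, $H(U_i\mid U^{i-1}Y_{1,\ell}^{N})\le H(U_i\mid U^{i-1}Y_{1,\ell+1}^{N})$, so $H(U_i\mid U^{i-1}Y_{1,\ell}^{N})\ge\delta_N$ forces $H(U_i\mid U^{i-1}Y_{1,\ell+1}^{N})\ge\delta_N$; this is exactly $\calH_{X_1|Y_{1,\ell}}\subseteq\calH_{X_1|Y_{1,\ell+1}}$. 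If the strict inclusion is meant literally, it follows for $N$ large from $|\calH_{X_1|Y_{1,\ell}}|/N\to H(X_1|Y_{1,\ell})$ and $|\calH_{X_1|Y_{1,\ell+1}}|/N\to H(X_1|Y_{1,\ell+1})$ (Lemma~\ref{lemma:lem1p}) together with $H(X_1|Y_{1,\ell})<H(X_1|Y_{1,\ell+1})$ whenever the test channels are strictly ordered.

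I do not expect a genuine obstacle here; the only point that needs care is the claim that the polar transform preserves the Markov chain, which reduces to the observation that $G_N$ is bijective, hence the $\sigma$-algebras generated by $U^{N}$ and by $X_1^{N}$ coincide. Everything else is a routine application of the data-processing and conditioning inequalities, and no polarization or rate estimates beyond Lemma~\ref{lemma:lem1p} are required.
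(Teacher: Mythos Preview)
Your proposal is correct and is essentially the same argument as the paper's: both use the block Markov chain $U^{N}-Y_{1,\ell}^{N}-Y_{1,\ell+1}^{N}$ (the paper phrases it as $I(U^{i};Y_{1,\ell+1}^{N}\mid U^{i-1}Y_{1,\ell}^{N})\le I(U^{N};Y_{1,\ell+1}^{N}\mid Y_{1,\ell}^{N})=0$) to get $H(U_i\mid U^{i-1}Y_{1,\ell}^{N}Y_{1,\ell+1}^{N})=H(U_i\mid U^{i-1}Y_{1,\ell}^{N})$, then drop $Y_{1,\ell}^{N}$ by ``conditioning reduces entropy'' to conclude. Your extra remarks on the bijectivity of $G_N$ and on strict versus non-strict inclusion are fine but not needed for the lemma as used.
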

\begin{proof}
Let  $i \in \calH_{X_1|Y_{1,\ell}}$. We have
\begin{align*}
    \delta_N 
    &\stackrel{(a)}\leq H( U_i | U^{i-1}Y_{1,\ell}^{N})\\
    & \stackrel{(b)}= H( U_i | U^{i-1}Y_{1,\ell}^{N}Y_{\ell+1}^{N}) \\
    &\stackrel{(c)}\leq H( U_i | U^{i-1}Y_{1,\ell+1}^{N}),
\end{align*}
where $(a)$ holds because $i \in \calH_{X_1|Y_{1,\ell}}$, $(b)$ holds because $I(U^i;Y_{1,\ell+1}^{N}|U^{i-1}Y_{1,\ell}^{N}) \leq I(U^{N};Y_{1,\ell+1}^{N}|Y_{1,\ell}^{N})=0$, $(c)$ holds because conditioning does not increase entropy.
\end{proof}
From Lemmas~\ref{lemma:lem1p} and \ref{lemma:lem2p}, we deduce the following proposition.
\begin{proposition}
Let $\ell\in [D-1]$. Define $J_\ell \triangleq U_{}^{N}[\calH_{X_1|Y_{1,\ell}}]$ and $J'_{\ell+1} \triangleq U_{}^{N}[\calH_{X_1|Y_{1,\ell+1}} \backslash \calH_{X_1|Y_{1,\ell}}]$.
Then, $\displaystyle \lim_{N \to \infty}|J_\ell|/N=H(X_1|Y_{1,\ell})$, $\displaystyle \lim_{N \to \infty}|J'_{\ell+1}|/N=H(X_1|Y_{1,\ell+1})-H(X_1|Y_{1,\ell})$, and one can reconstruct $X^N$ from $(J_\ell,J'_{\ell+1},Y^N_{1,\ell+1})$ with vanishing probability of error as $N$ goes to infinity.
\end{proposition}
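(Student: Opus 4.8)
The plan is to obtain the proposition as a direct corollary of the two lemmas just established: Lemma~\ref{lemma:lem2p} supplies the nesting $\calH_{X_1|Y_{1,\ell}}\subseteq\calH_{X_1|Y_{1,\ell+1}}$, while Lemma~\ref{lemma:lem1p} supplies both the asymptotic sizes of the high-entropy sets and a successive-cancellation decoder. After these two inputs are in place, everything reduces to bookkeeping.

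First I would handle the rates. The claim $\lim_{N\to\infty}|J_\ell|/N = \lim_{N\to\infty}|\calH_{X_1|Y_{1,\ell}}|/N = H(X_1|Y_{1,\ell})$ is exactly the last assertion of Lemma~\ref{lemma:lem1p} applied to the binary source $(X_1,Y_{1,\ell})$ (admissible since $|\calX_1|=2$). For $J'_{\ell+1}$, because $\calH_{X_1|Y_{1,\ell}}\subseteq\calH_{X_1|Y_{1,\ell+1}}$ by Lemma~\ref{lemma:lem2p}, the set difference has cardinality $|\calH_{X_1|Y_{1,\ell+1}}|-|\calH_{X_1|Y_{1,\ell}}|$, so $|J'_{\ell+1}|/N$ converges to the difference of the two limits, namely $H(X_1|Y_{1,\ell+1})-H(X_1|Y_{1,\ell})$.

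For the reconstruction claim, the same inclusion shows that the (disjoint) concatenation $(J_\ell,J'_{\ell+1})$ equals precisely the restriction $U^{N}[\calH_{X_1|Y_{1,\ell+1}}]$. Hence one may feed $(J_\ell,J'_{\ell+1})=U^{N}[\calH_{X_1|Y_{1,\ell+1}}]$ together with $Y^N_{1,\ell+1}$ into the successive-cancellation decoder of Lemma~\ref{lemma:lem1p} instantiated with the source $(X_1,Y_{1,\ell+1})$, obtaining $\hat U^{N}$ with $\mathbb{P}[\hat U^{N}\neq U^{N}]\le N\delta_N = N2^{-N^{\beta}}\xrightarrow[N\to\infty]{}0$, and then recovering $X^N = \hat U^{N} G_N$ using that $G_N$ is an involution over $\bbF_2$.

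There is no deep obstacle here; the proposition is a corollary of the two lemmas, and the only point needing care is that the incremental-rate statement for $J'_{\ell+1}$ genuinely relies on the set inclusion of Lemma~\ref{lemma:lem2p}, which in turn uses the assumed Markov structure $X_1-Y_{1,1}-\dots-Y_{1,D}$ (valid for degraded test channels, e.g.\ the binary erasure and binary symmetric cases of Corollaries~\ref{Cor:BEC_Capacity} and~\ref{Cor:BSC_Capacity}). Without this nesting, the coordinates carried by $J'_{\ell+1}$ would not combine cleanly with those of $J_\ell$ to form the full syndrome $U^{N}[\calH_{X_1|Y_{1,\ell+1}}]$ required by the decoder.
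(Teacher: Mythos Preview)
Your proposal is correct and matches the paper's proof essentially line for line: both use Lemma~\ref{lemma:lem1p} for the asymptotic sizes of $\calH_{X_1|Y_{1,\ell}}$ and $\calH_{X_1|Y_{1,\ell+1}}$, Lemma~\ref{lemma:lem2p} to turn the set-difference cardinality into a simple subtraction, and then observe that $(J_\ell,J'_{\ell+1})=U^{N}[\calH_{X_1|Y_{1,\ell+1}}]$ so that Lemma~\ref{lemma:lem1p} applied to $(X_1,Y_{1,\ell+1})$ yields the vanishing-error reconstruction. Your added remark that $X^N=\hat U^N G_N$ via the involution of $G_N$ is a welcome clarification the paper leaves implicit.
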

\begin{proof}
   We have $|J_\ell|/N=|\calH_{X_1|Y_{1,\ell}}|/N \xrightarrow[N\to\infty]{} H(X_1|Y_{1,\ell})$, where the limit holds by \cite{arikan2010source}. Then, by Lemma~\ref{lemma:lem2p}, we have $|J'_{\ell+1}|/N=|\calH_{X_1|Y_{1,\ell+1}} \backslash \calH_{X_1|Y_{1,\ell}}|/N = |\calH_{X_1|Y_{1,\ell+1}}|/N- | \calH_{X_1|Y_{1,\ell}}|/N \xrightarrow[N\to\infty]{} H(X_1|Y_{1,\ell+1})-H(X_1|Y_{1,\ell})$, where the limit holds by \cite{arikan2010source}. Finally, the near lossless reconstruction of $X_1^N$ from $(J_\ell,J'_{\ell+1}) = U_{}^{N}[\calH_{X_1|Y_{1,\ell+1}}]$ follows from Lemma~\ref{lemma:lem1p}. 
\end{proof}
\end{appendices}

\bibliographystyle{IEEEtran}
\bibliography{IEEEabrv,bibfile}

\begin{thebibliography}{10}
\providecommand{\url}[1]{#1}
\csname url@samestyle\endcsname
\providecommand{\newblock}{\relax}
\providecommand{\bibinfo}[2]{#2}
\providecommand{\BIBentrySTDinterwordspacing}{\spaceskip=0pt\relax}
\providecommand{\BIBentryALTinterwordstretchfactor}{4}
\providecommand{\BIBentryALTinterwordspacing}{\spaceskip=\fontdimen2\font plus
\BIBentryALTinterwordstretchfactor\fontdimen3\font minus
  \fontdimen4\font\relax}
\providecommand{\BIBforeignlanguage}[2]{{%
\expandafter\ifx\csname l@#1\endcsname\relax
\typeout{** WARNING: IEEEtran.bst: No hyphenation pattern has been}%
\typeout{** loaded for the language `#1'. Using the pattern for}%
\typeout{** the default language instead.}%
\else
\language=\csname l@#1\endcsname
\fi
#2}}
\providecommand{\BIBdecl}{\relax}
\BIBdecl

\bibitem{ISIT23_PIR}
H.~ZivariFard and R.~A. Chou, ``Private information retrieval when private
  noisy side information is available,'' in \emph{Proc. {IEEE} Int. Symp. on
  Info. Theory (ISIT)}, Taipei, Taiwan, Jun. 2023, pp. 1538--1543.

\bibitem{PIR95}
B.~Chor, O.~Goldreich, E.~Kushilevitz, and M.~Sudan, ``Private information
  retrieval,'' in \emph{Proc. 36th Annu. Symp. Found. Comput. Sci.}, 1995, pp.
  41--50.

\bibitem{PIR98}
------, ``Private information retrieval,'' \emph{Journal of the ACM}, vol.~45,
  no.~6, pp. 965--981, Nov. 1998.

\bibitem{Secret_Sharing_2}
A.~Shamir, ``How to share a secret,'' \emph{Commun. {ACM}}, vol.~22, no.~11,
  pp. 612--613, Nov. 1979.

\bibitem{beimel2011secret}
A.~Beimel, ``Secret-sharing schemes: {A} survey,'' in \emph{Proc. Int. Conf. on
  Coding and Cryptology}, Springer, 2011, pp. 11--46.

\bibitem{Oblivious98}
Y.~Gertner, Y.~Ishai, E.~Kushilevitz, and T.~Malkin, ``Protecting data privacy
  in private information retrieval schemes,'' in \emph{Proc. 30th Annu. ACM
  Symp. Theory Comput.}, 1998, pp. 151--160.

\bibitem{rabin1981exchange}
M.~O. Rabin, ``How to exchange secrets with oblivious transfer,'' \emph{Aiken
  Comput. Lab, Harvard Univ., Cambridge, MA, USA, Tech. Rep., TR-81}, May 1981.

\bibitem{SunJafar171}
H.~Sun and S.~A. Jafar, ``The capacity of private information retrieval,''
  \emph{{IEEE} Trans. Inf. Theory}, vol.~63, no.~7, pp. 4075--4088, Jul. 2017.

\bibitem{SunJafar172}
------, ``The capacity of robust private information retrieval with colluding
  databases,'' \emph{{IEEE} Trans. Inf. Theory}, vol.~64, no.~4, pp.
  2361--2370, Apr. 2017.

\bibitem{CodedPIR}
K.~Banawan and S.~Ulukus, ``The capacity of private information retrieval from
  coded databases,'' \emph{{IEEE} Trans. Inf. Theory}, vol.~64, no.~3, pp.
  1945--1956, Mar. 2018.

\bibitem{CodedPIRDSS}
R.~Tajeddine, O.~W. Gnilke, and S.~El~{R}ouayheb, ``Private information
  retrieval from {MDS} coded data in distributed storage systems,''
  \emph{{IEEE} Trans. Inf. Theory}, vol.~64, no.~11, pp. 7081--7093, Nov. 2018.

\bibitem{CodedLi20}
J.~Li, D.~Karpuk, and C.~Hollanti, ``Towards practical private information
  retrieval from {MDS} array codes,'' \emph{{IEEE} Trans. Commun.}, vol.~68,
  no.~6, pp. 3415--3425, Jun. 2020.

\bibitem{SunJafar19}
H.~Sun and S.~A. Jafar, ``The capacity of symmetric private information
  retrieval,'' \emph{{IEEE} Trans. Inf. Theory}, vol.~65, no.~1, pp. 322--329,
  Jan. 2019.

\bibitem{WangSkogland19}
Q.~Wang and M.~Skoglund, ``On {PIR} and symmetric {PIR} from colluding
  databases with adversaries and eavesdroppers,'' \emph{{IEEE} Trans. Inf.
  Theory}, vol.~65, no.~5, pp. 3183--3197, May 2019.

\bibitem{Private_Updating}
B.~Herren, A.~Arafa, and K.~Banawan, ``Download cost of private updating,'' in
  \emph{Proc. {IEEE} Int. Conf. on Comm. (ICC)}, Montreal, QC, Canada, Jun.
  2021, pp. 1--6.

\bibitem{PIR_SI_20}
S.~Kadhe, B.~Garcia, A.~Heidarzadeh, S.~El~{R}ouayheb, and A.~Sprintson,
  ``Private information retrieval with side information,'' \emph{{IEEE} Trans.
  Inf. Theory}, vol.~66, no.~4, pp. 2032--2043, Apr. 2020.

\bibitem{ChenWangJafar20}
Z.~Chen, Z.~Wang, and S.~A. Jafar, ``The capacity of ${T}$-private information
  retrieval with private side information,'' \emph{{IEEE} Trans. Inf. Theory},
  vol.~66, no.~8, pp. 4761--4773, Aug. 2020.

\bibitem{Tandon17}
R.~Tandon, ``The capacity of cache aided private information retrieval,'' in
  \emph{Proc. 55th Annu. Allerton Conf. on Commun., Control, and Comput.
  (Allerton)}, Monticello, IL, Oct. 2017, pp. 1078--1082.

\bibitem{LiGastpar20}
S.~Li and M.~Gastpar, ``Converse for multi-server single-message {PIR} with
  side information,'' in \emph{Proc. 54th Annual Conf. on Info. Sciences and
  Systems (CISS)}, 2020, pp. 1--6.

\bibitem{Heidarzadeh18}
A.~Heidarzadeh, B.~Garcia, S.~Kadhe, S.~El~{R}ouayheb, and A.~Sprintson, ``On
  the capacity of single-server multi-message private information retrieval
  with side information,'' in \emph{Proc. 56th Annu. Allerton Conf. on Commun.,
  Control, and Comput. (Allerton)}, Monticello, IL, Oct. 2018, pp. 180--187.

\bibitem{Heidarzadeh19}
A.~Heidarzadeh, S.~Kadhe, S.~El~{R}ouayheb, and A.~Sprintson, ``Single-server
  multi-message individually-private information retrieval with side
  information,'' in \emph{Proc. {IEEE} Int. Symp. on Info. Theory (ISIT)},
  Paris, France, Jun. 2019, pp. 1042--1046.

\bibitem{PIRCodedSI}
A.~Heidarzadeh, F.~Kazemi, and A.~Sprintson, ``The role of coded side
  information in single-server private information retrieval,'' \emph{{IEEE}
  Trans. Inf. Theory}, vol.~67, no.~1, pp. 25--44, Jan. 2021.

\bibitem{Heidarzadeh22}
A.~Heidarzadeh and A.~Sprintson, ``The role of reusable and single-use side
  information in pir,'' in \emph{Proc. {IEEE} Int. Symp. on Info. Theory
  (ISIT)}, Melbourne, Australia, Jul. 2022, pp. 414--419.

\bibitem{PIRStorageConstraint20}
Y.-P. Wei and S.~Ulukus, ``The capacity of private information retrieval with
  private side information under storage constraints,'' \emph{{IEEE} Trans.
  Inf. Theory}, vol.~66, no.~4, pp. 2023--2031, Apr. 2020.

\bibitem{PIRPatialSI19}
Y.-P. Wei, K.~Banawan, and S.~Ulukus, ``The capacity of private information
  retrieval with partially known private side information,'' \emph{{IEEE}
  Trans. Inf. Theory}, vol.~65, no.~12, pp. 8222--8231, Dec. 2019.

\bibitem{Siavoshani21}
M.~Jafari~Siavoshani, S.~P. Shariatpanahi, and M.~A. Maddah-Ali, ``Private
  information retrieval for a multi-message scenario with private side
  information,'' \emph{{IEEE} Trans. Commun.}, vol.~69, no.~5, pp. 3235--3244,
  May 2021.

\bibitem{CodedPIRPrefetching}
Y.-P. Wei, K.~Banawan, and S.~Ulukus, ``Fundamental limits of cache-aided
  private information retrieval with unknown and uncoded prefetching,''
  \emph{{IEEE} Trans. Inf. Theory}, vol.~65, no.~5, pp. 3215--3232, May 2019.

\bibitem{CodedPIRPrefetchingJSAC}
------, ``Cache-aided private information retrieval with partially known
  uncoded prefetching: Fundamental limits,'' \emph{{IEEE} J. Sel. Areas
  Commun.}, vol.~36, no.~6, pp. 1126--1139, Jun. 2018.

\bibitem{SlepianWolf}
D.~Slepian and J.~K. Wolf, ``A coding theorem for multiple access channels with
  correlated sources,'' \emph{Bell System Technical Journal}, vol.~52, no.~7,
  pp. 1037--1076, Sep. 1973.

\bibitem{ElGamalKim}
A.~El~Gamal and Y.-H. Kim, \emph{Network Information Theory}, 1st~ed.\hskip 1em
  plus 0.5em minus 0.4em\relax {\hspace{-2.5mm}}Cambridge, U.K: Cambridge
  University Press, 2012.

\bibitem{Cover_Book}
T.~M. Cover and J.~A. Thomas, \emph{Elements of Information Theory},
  2nd~ed.\hskip 1em plus 0.5em minus 0.4em\relax {\hspace{-3mm}} John Wiley \&
  Sons, Inc., 2006.

\bibitem{Lin_Costelloo}
S.~Lin and D.~J. Costello, \emph{Error Control Coding}, 2nd~ed.\hskip 1em plus
  0.5em minus 0.4em\relax {\hspace{-3mm}} Upper Saddle River, NJ, USA:
  Prentice-Hall, 2001.

\bibitem{arikan2010source}
E.~Arikan, ``Source polarization,'' in \emph{Proc. {IEEE} Int. Symp. on Info.
  Theory (ISIT)}, Austin, TX, USA, Jul. 2010, pp. 899--903.

\end{thebibliography}




\end{document}